\newcommand{\ceil}[1]{\left \lceil #1 \right \rceil}
\newenvironment{proof sketch}{\begin{proof}[Proof sketch]}{\end{proof}}
\newtheorem{theorem}{Theorem}[section]
\newtheorem{lemma}{Lemma}[section]
\newtheorem{remark}{Remark}[section]
\newtheorem{definition}{Definition}[section]
\newif\ifhideannotation
\newcommand{\kz}[1]{}
\newcommand{\bj}[1]{}
\newcommand{\zd}[1]{}
\newcommand{\liu}[1]{}
\newcommand{\ky}[1]{}
\newcommand{\ky}[1]{{\color{orange}#1}}
\newcommand{\kz}[1] {{\footnotesize\color{cyan}[Keyuan: #1]}}
\newcommand{\bj}[1] {{\footnotesize\color{red}[Bo: #1]}}
\newcommand{\zd}[1] {{\footnotesize\color{orange}[Zhongdong: #1]}}
\newcommand{\liu}[1] {{\footnotesize\color{blue}[Zhongdong: #1]}}
\title{Learning-augmented Online Algorithm for Two-level Ski-rental Problem}
\author {
    Keyuan Zhang\textsuperscript{\rm 1},
    Zhongdong Liu\textsuperscript{\rm 1},
    Nakjung Choi\textsuperscript{\rm 2},
    Bo Ji\textsuperscript{\rm 1}
}
\begin{document}
\maketitle

\begin{abstract}
In this paper, we study the two-level ski-rental problem, where a user needs to fulfill a sequence of demands for multiple items by choosing one of the three payment options: paying for the on-demand usage (i.e., rent), buying individual items (i.e., single purchase), and buying all the items (i.e., combo purchase). Without knowing future demands, the user aims to minimize the total cost (i.e., the sum of the rental, single purchase, and combo purchase costs) by balancing the trade-off between the expensive upfront costs (for purchase) and the potential future expenses (for rent). We first design a robust online algorithm (RDTSR) that offers a worst-case performance guarantee. While online algorithms are robust against the worst-case scenarios, they are often overly cautious and thus suffer a poor average performance in typical scenarios. On the other hand, Machine Learning (ML) algorithms typically show promising average performance in various applications but lack worst-case performance guarantees. To harness the benefits of both methods, we develop a learning-augmented algorithm (LADTSR) by integrating ML predictions into the robust online algorithm, which outperforms the robust online algorithm under accurate predictions while ensuring worst-case performance guarantees even when predictions are inaccurate. Finally, we conduct numerical experiments on both synthetic and real-world trace data to corroborate the effectiveness of our approach.
\end{abstract}
\section{Introduction}

Decision-making under uncertainty is crucial in many real-world scenarios, such as buying or leasing a car, purchasing a daily or annual parking permit, and so on.
Such problems are often modeled as online rent-or-buy problems. One classic example is the \emph{ski-rental} problem, where a skier goes skiing for an unknown number of days and can choose to rent skis at a lower daily cost or buy them at a higher price to ski freely thereafter. The ski-rental problem and its variants have been extensively studied and found applications in
cloud computing~\citep{khanafer2013constrained}, power management~\citep{antoniadis2021learning}, serverless edge computing~\citep{pan2022retention}, etc.

However, the ski-rental problem only involves making a binary decision (i.e., rent or buy for one item only), which makes it inadequate to model more sophisticated scenarios where multiple levels of decisions are involved. 
Consider cloud computing as an example. An organization may request multiple types of server instances (e.g., HPC instances for complex system simulations, Memory instances for storing data, and GPU instances for AI tasks) from the cloud vendors such as Amazon EC2,
Microsoft Azure,
and Google Cloud~\citep{Miguel_2023}.
The cloud vendors usually offer multiple pricing options: i) pay-as-you-go: the cost only depends on the demand; ii) instance reservation: paying an upfront cost to cover the demand for that type of instance within a certain period; iii) combo offers: offering increased discounts for reserving multiple types of instances.
Hence, making choices wisely among these plans is crucial to reducing expenses on cloud resources.
Additionally, similar scenarios are common for individual users. For example, when subscribing to various digital services (e.g., Amazon offers video, music, and games), users often face choices among multiple plans: renting specific content, subscribing to one single service, or subscribing to a combo membership that grants full access to multiple services. 
Other applications, such as mobile data plans or Bahncard reservations, also offer similar payment options~\citep{wu2021competitive}.

To model multiple levels of decisions, we consider the two-level ski-rental problem introduced in~\citet{wu2021competitive}. Specifically, it considers scenarios where a user needs to fulfill the demands for multiple items in a given but \emph{unknown} time horizon. Upon the arrival of one demand for some item, the user can cover the demand by choosing one of the three payment options: i) rent, ii) single purchase, and iii) combo purchase. 
Rent is an on-demand payment with no upfront cost;  
a single purchase incurs an upfront cost (denoted by 
$C_s$) but covers the demands for that specific item since then and onward; a combo purchase has a higher upfront cost (denoted by $C_c \geq C_s$)
while covering the demands for all the items till the end. 
While the ski-rental problem only considers rent and single purchases, a unique challenge in the two-level ski-rental problem lies in the intricate decision-making between single purchases and a combo purchase covering multiple items.
To minimize the total cost (i.e., the sum of the rental, single purchase, and combo purchase costs),~\citet{wu2021competitive} propose online algorithms that make decisions without knowing any future information. \emph{Competitive ratio (CR)} is employed to evaluate the algorithms, which is defined as the ratio between the cost of the online algorithm and that of the optimal offline algorithm, over all the feasible inputs (see the formal definition of CR in Section~\ref{sec::system_model}). 
While they prove a CR for their deterministic algorithm, the CR can be unbounded when multiple units of demand arrive at the same time (see Remark~\ref{remark:comparison}).
We propose our robust online algorithm to bridge this gap. 

Although online algorithms are robust against the worst-case situation, they often exhibit excessive caution towards uncertainty, resulting in poor average performance in many real-world scenarios. In contrast, machine learning (ML) algorithms show promising average performance in many applications by using historical data to construct useful prediction models. However, they lack robustness guarantees as the predictions can be highly inaccurate due to distribution drift~\citep{lu2018learning} or adversarially chosen test data~\citep{szegedy2014intriguing}.

In the second part of this work, we first show that a simple algorithm that blindly follows ML predictions cannot guarantee robustness in worst-case scenarios. Then, we design a novel learning-augmented online algorithm by integrating ML predictions into our online algorithm. 
Specifically, we show that our algorithm achieves two desired properties: i) with an accurate prediction, it offers a better CR than our robust online algorithm \emph{(consistency)} and ii) even if the prediction is inaccurate,
it still retains a worst-case guarantee \emph{(robustness)}. 
Our learning-augmented online algorithm takes the predictions as the black-box oracles without prior knowledge of the prediction quality.

We summarize our main contributions in the following.

\emph{First}, 
we propose an online algorithm for the two-level ski-rental problem. We prove that it achieves a CR of $3 - \frac{1}{C_s} - \frac{1}{C_c}(2 - \frac{1}{C_s})$, providing a stronger performance guarantee compared to the State-of-the-Art algorithm introduced in~\citet{wu2021competitive} (see Remark~\ref{remark:comparison}).

\emph{Second}, we design a novel learning-augmented algorithm that integrates ML predictions into robust online decision-making. We prove that the proposed algorithm can guarantee both consistency and robustness. To the best of our knowledge, this is the first work that designs learning-augmented algorithms for the two-level ski-rental problem.

\emph{Finally}, we conduct numerical experiments to evaluate the performance of our online algorithm and learning-augmented algorithm using both synthetic data and real-world trace data. The results verify our theoretical analyses and corroborate that our learning-augmented algorithm guarantees both consistency and robustness.


\paragraph{Related Work.}
The ski-rental problem is introduced in~\citet{karlin1988competitive},
which gives the best deterministic online algorithm with a CR of 2. Then, an optimal randomized online algorithm achieving $e/(e-1)$-competitiveness is proposed in~\citet{karlin1994competitive} ($e$ is the Euler's number). 
Several variants of the ski-rental problem are later studied.
In the following, we briefly discuss the variants that are most relevant to ours. \citet{fleischer2001bahncard} studies the \emph{Bahncard problem}, where buying a Bahncard provides discounts on subsequent railway tickets for a certain duration. \citet{ai2014multi} study the \emph{multi-shop ski-rental problem}, where a user needs to choose when and where to buy from multiple shops with different prices for renting and purchasing. Recently, \citet{wu2022competitive} study the \emph{multi-commodity ski-rental problem}, where a user requires a bundle of commodities altogether and needs to choose payment options for each commodity. 

On the other hand, researchers have designed learning-augmented algorithms for a variety of online optimization problems. In the seminal work~\citep{pmlr-v80-lykouris18a}, 
they show that by incorporating ML predictions, the classic online Marker algorithm can achieve both consistency and robustness for the caching problem. Learning-augmented algorithm design is also studied for the ski-rental problem and its variants~\citep{purohit2018improving, bamas2020primal, wang2020online}. For example, both single and multiple ML predictions are incorporated into the online algorithm for the multi-shop ski-rental problem~\citep{wang2020online}. We refer interested readers to the surveys~\citep{boyar2017online, mitzenmacher2022algorithms} for comprehensive coverage. 
In our work, however, the combo purchase can cover all the single purchases and thus result in coupled decisions. This ``two-level'' payment option is more practical in quite a few real-world applications but introduces unique challenges in the algorithm design.
\section{System Model and Problem Formulation} \label{sec::system_model}

\paragraph{System Model.}
We consider a time-slotted model with time index $t = 1,2, \dots, T$, where $T$ is the length of a finite time horizon \emph{unknown} to the user (i.e., the decision maker). The user needs to fulfill demands for $K$ items arriving over time. Upon the arrival of demands in each time-slot, the user can fulfill them by choosing from different options (rent or purchase).
Without loss of generality, we assume that the demands arriving in the same time-slot are for one item only.

Let $d(t):=(i(t), a(t))$ denote the demand arriving in time-slot $t$, where $i(t) \in \{1,2,\dots,K\}$
is the index of the item and $a(t)$ is a non-negative integer representing the amount of the demand.
In particular, $d(t)=(0, 0)$ indicates no demand in time-slot $t$.
Let $\mathbf{D}:=\{d(t)\}_{t=1}^T$ denote the sequence of demands over the entire time horizon $T$.

To fulfill the demand in each time-slot, the user can choose one of the three payment options: i) rent, ii) single purchase, and iii) combo purchase. 
\begin{enumerate}[i)]
    \item \textbf{Rent}: The user pays a unit rental cost to cover one unit of demand in time-slot $t$. 
    \item \textbf{Single purchase}: 
    The user pays an upfront fee $C_s > 1$, which covers all the demands for item $i(t)$ from time $t$ to $T$. For ease of analysis, we assume the same single purchase cost $C_s$ for all the items.
    \item \textbf{Combo purchase}: 
    The user pays a higher upfront fee $C_c \in (C_s, K\cdot C_s)$, which covers the demands for all the items from time $t$ to $T$.
\end{enumerate}
We assume that both $C_s$ and $C_c$ are integers and that once the payment is made in some time-slot $t$, the associated demands can be fulfilled immediately. 
While choosing to rent may offer a lower immediate cost, it may result in a higher future cost when confronted with increased future demands. In contrast, choosing to purchase (either single or combo) incurs a higher upfront cost, but it covers the potential future demands.

\subsubsection{Problem Formulation.} Let $r(t)$ be the binary rental decision in time-slot $t$, where $r(t) = 1$ if the user decides to rent, otherwise $r(t) = 0$. 
Let $t_k$ denote the time when the user makes a single purchase for item $k$. We set $t_k = \infty$ when a single purchase is never made. Similarly, let $t_c$ denote the time when the user makes a combo purchase, and we set $t_c=\infty$ when a combo purchase is never made. 
We consider \emph{online} algorithm $\pi$ that determines $\{ {r^{\pi}}(t)~\text{for}~ t=1,2,\dots,T, \{ t_k^{\pi}\}_{k=1}^K, t_c^{\pi}\}$
only using information available in time-slot $t$: the rental decision history $\{r^{\pi}(\tau)\} _{\tau = 1}^{t-1}$, single purchase history $\{t^\pi_k\}_{k=1}^K$, combo purchase history $t^\pi_c$, demand history $\{d(\tau)\} _{\tau = 1}^t$, number of instances $K$, single purchase cost $C_s$, and combo purchase cost $C_c$; no future information is available to the algorithm.
For notational simplicity, we drop the superscript $\pi$
in the rest of the paper whenever the context is clear.

Let $\mathcal{C}(\mathbf{D},\pi)$ be 
the total cost under Algorithm $\pi$:
\begin{equation}
    \mathcal{C}(\mathbf{D},\pi):=\sum_{t=1}^Ta(t)r(t) + \sum_{k=1}^K C_s \mathds{1}_{\{t_k \le T\}} + C_c\mathds{1}_{\{t_c \le T\}},
\end{equation}
where $\mathds{1}_{\{\cdot\}}$ is the indicator function and the three terms correspond to the total rental cost, total single purchase cost, and combo purchase cost, respectively. 
Given a demand sequence $\mathbf{D}$, the objective is to minimize the total cost:
\begin{subequations}
\begin{align}
    &\min_{r(t), t_k, t_c} & & \mathcal{C}(\mathbf{D},\pi)  \\
    &\hspace{0.4cm} \text{subject to} 
    & & \begin{aligned}[t]
        r(t) + \mathds{1}_{\{t \ge t_{i(t)}\}} &+ \mathds{1}_{\{t \ge t_c\}}\geq 1, \\ &t\in\{1,2,\dots,T\}
    \end{aligned} \label{fulfill} \\
    &&&  r(t)\in \{0,1\}, \ t\in \{1,2,\dots,T\}\\    
    &&& t_k > 0, \quad k \in \{1, 2, \dots, K\} \\
    &&& t_c > 0,
\end{align} 
\label{prob:1}
\end{subequations}
\!\!where Contraint~\eqref{fulfill} requires that each demand must be fulfilled via either renting or purchasing (single or combo).

\paragraph{Competitive Ratio.} Without knowledge of the future demand, it is usually difficult for an online algorithm to attain the same minimum total cost achieved by an optimal offline algorithm, which requires knowledge of the entire demand sequence $\mathbf{D}$ beforehand.  
To measure the performance of an online algorithm, we consider
a widely adopted metric called \emph{competitive ratio} (CR), defined as the ratio between the cost of the online algorithm and that of the optimal offline algorithm over all possible inputs. Formally, 
an online algorithm $\pi$ is called $c$-competitive if there exists a constant $c \ge 1$ such that for any demand sequence $\mathbf{D}$, there is
\begin{equation}
    \mathcal{C}(\mathbf{D}, \pi) \leq c \cdot \mathit{OPT}(\mathbf{D}),
\end{equation}
where $\mathit{OPT}(\mathbf{D})$ is the cost of the optimal offline algorithm under the demand sequence $\mathbf{D}$. 
\section{Robust Online Algorithm}
\label{sec::online}
In this section, we introduce our robust online algorithm and prove its competitive ratio. 
To begin with, we first present two important notions that will be used in our algorithm: indicative cost and purchase threshold.

\paragraph{Indicative Cost and Purchase Threshold.}
We define the indicative cost $\psi_k(t)$ 
for item $k$ in time-slot $t$ as the total demand for item $k$ during the interval $[1, t]$ that is not covered by any purchase. Then, indicative cost $\psi_k(t)$ evolves as 
\begin{equation} \label{psik}
    \psi_{k}(t) \coloneqq 
    \begin{cases}
    \psi_k(t-1) + a(t), & \begin{aligned}
        &i(t)=k \ \text{and} \\
        &t < \min\{t_k, t_c\},
    \end{aligned} \\
    \psi_k(t-1),  & \text{otherwise},
    \end{cases} 
\end{equation}
where $\psi_k(t)$ increases by $a(t)$ if the demand in time-slot $t$ is for item $k$ (i.e., $i(t) = k$) and is not covered by (either single or combo) purchases (i.e., $t < \min\{t_k, t_c\}$);
otherwise, it remains unchanged.
Note that $\psi_k(0) = 0$ for all $k$.
Let $\lambda_s \in (1, C_s]$ be the threshold for making single purchases. 
That is, if the single purchase indicator $\psi_k(t)$ reaches $\lambda_s$ (i.e., $\psi_k(t) \geq \lambda_s$), we will make a single purchase for item $k$.
We assume the same threshold $\lambda_s$ for all items because they have the same single purchase price $C_s$.
This is also an assumption made in~\citet{wu2021competitive}. While our algorithm can potentially be adapted to address item-specific purchase costs and thresholds, a more sophisticated competitive analysis is expected.

Similarly, let $\psi_c(t)$ and $\lambda_c \in (1, C_c]$ be the overall indicative cost $\psi_c(t)$ and the combo purchase threshold, respectively. We define the overall indicative cost $\psi_c(t)$ as 
\begin{equation}
   \psi_c(t) \coloneqq \sum_{k=1}^K \min \{\psi_k(t), \lambda_{s} \}. \label{psic}
\end{equation}
When $\psi_c(t)$ reaches the combo purchase threshold $\lambda_c$ (i.e., $\psi_c(t) \geq \lambda_c$), a combo purchase will be made. We assume that thresholds $\lambda_s$ and $\lambda_c$ are integers due to the integral demand we consider.

Intuitively, when indicative cost $\psi_k(t)$ reaches the single purchase threshold $\lambda_s$, it reflects a substantial demand for item $k$, leading to a single purchase of item $k$. Similarly, when the overall indicative cost $\psi_c(t)$ reaches $\lambda_c$, it indicates substantial demand for multiple items, leading to a combo purchase rather than single purchases or rent.

\begin{algorithm}[!t]
\SetAlgoLined
\SetKwInOut{Input}{Input}
\SetKwInOut{Output}{Output}
\SetKwInOut{Init}{Init.}
\Input{$K$, $C_s$, $C_c$, $\lambda_s$, $\lambda_c$, $\mathbf{D}$ (revealed in an online manner)}
\Output{$r(t), t_k, t_c$}
\Init{$\psi_c(t),\psi_k(t),r(t) \gets 0,  t_k, t_c \gets \infty$}

\While {new demand $d(t)$ arrives} {  \label{line:begin}
    Update $\psi_k(t)$ according to Eq.~\eqref{psik}\; \label{line:psik}
    Update $\psi_c(t)$ according to Eq.~\eqref{psic}\; \label{line:psic}
    
    \eIf{$d(t)$ is covered by a previous purchase}{ \label{line:cover}
    \label{line:update_finish}
    Do nothing and wait till next time-slot\; 
    }{

    
        \uIf{$\psi_c(t) \geq \lambda_c$}{  \label{line:combo_begin}
        $t_c \gets t$; \tcp{combo purchase}
        } \label{line:combo_complete}
        \uElseIf{$\psi_{i(t)}(t) \geq \lambda_{s}$} { \label{line:single_begin}
        $t_{i(t)}\gets t$; \tcp{single purchase}
        }\label{line:single_complete}
        \Else{
        $r(t) \gets 1$; \tcp{rent} \label{line:rental}
        }
    }
}
\caption{Robust Deterministic Two-level Ski-rental (RDTSR) Algorithm 
}
\label{alg:online}
\end{algorithm}

\paragraph{Algorithm Description.}
We now present our \emph{Robust Deterministic Two-level Ski-rental (RDTSR)} Algorithm in Algorithm~\ref{alg:online} and explain how it works in the following. 
When a new demand $d(t)$ arrives in time-slot $t$, it first updates the indicative costs $\psi_{k}(t)$ and $\psi_c(t)$ according to Eqs.~\eqref{psik} and~\eqref{psic}, respectively (Lines~\ref{line:psik}-\ref{line:psic}).
If the demand is already covered by a previous (single or combo) purchase (Line~\ref{line:cover}), it does nothing.
Otherwise, if the indicative cost exceeds the threshold, it will purchase by prioritizing the combo purchase (Lines~\ref{line:combo_begin}-\ref{line:single_complete}); if not, it will cover the demand via renting (Line~\ref{line:rental}).

\paragraph{Competitive Analysis.} We derive the CR of 
RDTSR as a function of thresholds $\lambda_s$ and $\lambda_c$. By choosing proper values of $\lambda_s$ and $\lambda_c$, we show that the CR is upper bounded by $3$.
\begin{theorem} \label{thm:cr}
The CR of RDTSR is upper bounded by
\begin{equation}
    3 - \frac{1}{C_s} - \frac{1}{C_c}(2 - \frac{1}{C_s}).
\end{equation}
In particular, this upper bound is achieved by choosing thresholds $\lambda_s = C_s$ and $\lambda_c = C_c$.
\end{theorem}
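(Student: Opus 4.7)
The plan is to prove the bound via a case split on whether RDTSR ever triggers a combo purchase, bounding ALG's cost in each case against a lower bound on $\mathit{OPT}$ obtained from the indicative cost invariants. I would first extract three structural facts directly from Algorithm~\ref{alg:online}: (i) the rent RDTSR pays for item $k$ before a single purchase equals $\psi_k(t_k-1)$, which by integrality is at most $\lambda_s-1$, since the arrival $a(t_k)$ is absorbed by the purchase itself; (ii) $\psi_c(t) \leq \lambda_c-1$ at any $t$ before a combo purchase; and (iii) letting $S_b$ denote the set of items single-purchased before $t_c$, the capped sum decomposes as $\psi_c(t_c-1)=|S_b|\lambda_s+\sum_{k \notin S_b}\psi_k(t_c-1)$, because each item in $S_b$ has $\psi_k \geq \lambda_s$ and is capped to contribute exactly $\lambda_s$.

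In the no-combo case, I would split the cost item by item: $\text{ALG}_k \leq (\lambda_s-1)+C_s$ for each item RDTSR single-purchases and $\text{ALG}_k=n_k\leq \lambda_s-1$ for the rest. Using $\psi_c(T)<\lambda_c$ together with the substitution $\lambda_s=C_s,\ \lambda_c=C_c$ forces $\sum_k \min(n_k,C_s)<C_c$, so OPT avoids combo and $\mathit{OPT}=\sum_{k\in S}C_s+\sum_{k\notin S}n_k$; a direct calculation yields a ratio at most $2-1/C_s$, strictly below the target. The binding case is the combo case: combining (iii) with $\psi_c(t_c-1) \leq \lambda_c - 1$ gives $\sum_{k\notin S_b}\psi_k(t_c-1)\leq \lambda_c-|S_b|\lambda_s-1$, so the total pre-$t_c$ rent is $|S_b|(\lambda_s-1)+(\lambda_c-|S_b|\lambda_s-1)=\lambda_c-|S_b|-1$, whence
\begin{equation*}
\text{ALG} \leq \lambda_c+C_c-1+|S_b|(C_s-1).
\end{equation*}
For the matching lower bound, $\psi_c(t_c)\geq \lambda_c$ together with $n_k\geq\psi_k(t_c)$ yields $\sum_k \min(n_k,\lambda_s)\geq\lambda_c$, so after the substitution $\mathit{OPT}\geq C_c$ whether or not OPT itself combos. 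Maximizing over $|S_b|$ under the structural bound $|S_b|\lambda_s\leq\lambda_c-1$ and then plugging in $\lambda_s=C_s,\ \lambda_c=C_c$ reproduces exactly the stated expression, and a monotonicity check in $\lambda_s$ and $\lambda_c$ confirms those threshold choices are optimal.

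The main obstacle I foresee is the batched-demand issue flagged in Remark~\ref{remark:comparison}: a single arrival can push $\psi_k$ or $\psi_c$ past its threshold by an unbounded amount, so one cannot naively equate indicative cost with rent paid. I would handle this by keeping the analysis always in terms of the \emph{pre-arrival} value ($\psi_c(t_c-1)$ or $\psi_k(t_k-1)$), since the overshooting demand is itself absorbed by the purchase it triggers. A secondary subtlety is the tight accounting in (iii): it is essential that items already in $S_b$ contribute exactly $\lambda_s$, not merely at most $\lambda_s$, to the capped sum, so that the residual budget $\lambda_c-1-|S_b|\lambda_s$ available to items outside $S_b$ becomes a sharp handle on the pre-combo rent.
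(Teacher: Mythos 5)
Your proposal is correct and reaches the stated constant, but it takes a noticeably leaner route than the paper. The core cost accounting is the same as the paper's Lemma~\ref{lemma:ALG}: rent at most $\lambda_s-1$ per single-purchased item, total pre-combo rent at most $\psi_c(t_c-1)-|S_b|\le \lambda_c-1-|S_b|$, the count bound $|S_b|\le(\lambda_c-1)/\lambda_s$, and $\mathit{OPT}=C_c$ in the combo case via the ``combo iff $\sum_k\min\{z_k,\lambda_s\}\ge\lambda_c$'' characterization (the paper's Lemma~\ref{lemma::combo2}); your $-|S_b|$ refinement is indeed essential, since without it one only gets $3-2/C_c$. Where you diverge is the reduction: the paper first passes to total-demand vectors, then to a three-parameter ``standard total demand'' (Definition~\ref{definition}, Algorithm~\ref{alg::construct}, Lemmas~\ref{lemma:zstd} and~\ref{std2}) to obtain a bound valid for \emph{all} $(\lambda_s,\lambda_c)$ (Eq.~\eqref{eq:up}) and only then optimizes the thresholds, whereas you substitute $\lambda_s=C_s$, $\lambda_c=C_c$ essentially up front, which makes RDTSR's combo decision coincide exactly with OPT's and collapses everything into two clean cases with no standard-demand machinery. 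Your version buys brevity and transparency and fully suffices for the theorem as stated (the optimality of the threshold choice is a secondary claim your monotonicity check covers); the paper's heavier apparatus buys a parametric bound and a reduction template that it reuses almost verbatim in the robustness analysis of LADTSR, where the two distinct single-purchase thresholds make a direct two-case argument much harder. You also correctly anticipate the one genuinely delicate point, the multi-unit overshoot: by always working with the pre-arrival values $\psi_k(t_k-1)$ and $\psi_c(t_c-1)$ and letting the triggering batch be absorbed by the purchase, you match the operational reading of Eq.~\eqref{psik} that the paper's own proof relies on, so no gap there.
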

\begin{proof sketch}
The detailed proof is provided in Appendix~\ref{app:them:cr} and we give a proof sketch in the following.
The proof has two key steps: 1) given any $\lambda_s$ and $\lambda_c$, we derive an upper bound of the CR, which is a function of $\lambda_s$ and $\lambda_c$; 2) we show that the CR is at most $3 - \frac{1}{C_s} - \frac{1}{C_c}(2 - \frac{1}{C_s})$ when $\lambda_s = C_s$ and $\lambda_c = C_c$.

Note that it is typically challenging to directly analyze the performance for an arbitrary demand sequence. Therefore, we break Step 1) into three substeps: 1a)
We derive an upper bound of CR over all the demand sequences with the same total demand. The total demand is a vector consisting of $K$ elements where each element represents the total demand for the associated item. 1b) Following a similar line of analysis in~\citet{wu2021competitive}, we define a standard total demand to simplify the upper bound we obtained in Step 1a). 1c) We obtain an upper bound of CR over all the standard total demand, which is $\max \{\frac{\lambda_s - 1 + C_s}{\lambda_s}, \frac{\lambda_c - 1 + C_c + (\lambda_c - 1)\lambda_s^{-1}(C_s - 1)}{\min\{C_c,  \lambda_c\}}\}$. In Step 2), we minimize the function of $\lambda_s$ and $\lambda_c$ derived in Step 1c).
\end{proof sketch}

\begin{remark} \label{remark:comparison}
\citet{wu2021competitive} 
proposed a similar deterministic online algorithm and showed that their algorithm is $(3 - 1/C_s)$-competitive. However, we construct a simple example to show that the CR of their algorithm can be unbounded when allowing multiple units of demand to arrive in a time-slot (i.e., $a(t) > 1$). Specifically, consider a demand sequence $\Tilde{\mathbf{D}} := \{(1, C_s), (2, C_s), \dots, (K, C_s)\}$, i.e., $C_s$ units of demand for item $k$ arrive in time-slot $k$ for all $k$.
In their algorithm, single purchases will be made for all the items, while the combo purchase will never be made as their algorithm skips the updates of the overall indicative cost 
$\psi_c(t)$ after a single purchase. With the fact that the cost of the optimal offline algorithm is at most $C_c$, their algorithm's competitive ratio is at least $KC_s/C_c$, which can be arbitrarily large when $KC_s \gg C_c$. We address this issue by redefining the indicative cost and assigning the highest priority to the combo purchase. It is also noteworthy that even for the setting with unit demand arrival (i.e., $a(t)=1$), our algorithm still has a slightly improved upper bound ($3 - \frac{1}{C_s} - \frac{1}{C_c}(2 - \frac{1}{C_s})$ vs. $(3 - \frac{1}{C_s})$). 
\end{remark}
\section{Learning-augmented Online Algorithm}
In the previous section, we introduce a robust online algorithm, RDTSR, with a worst-case guarantee. However, online algorithms are often overly conservative and may have a poor average performance in real-world scenarios. In contrast, ML algorithms have the advantage of utilizing extensive historical data and building well-trained models, which enables them to achieve a promising average performance. Nonetheless, ML algorithms lack a worst-case guarantee when facing adversaries, outliers, distribution shifts, etc. 
To harness the benefits of both methods, we design a learning-augmented online algorithm that achieves both consistency and robustness, two commonly used notions in the literature.

\subsection{Machine Learning Predictions}
We consider the case where an ML algorithm provides a prediction $y \coloneqq (y_1, \dots, y_K)$ 
that represents the predicted total demand for each item. Predictions of the total demand are commonly used in real-world applications. For example, in cloud computing, various resources such as CPU, memory, and network bandwidth can be fed into a predictor to forecast future workload~\citep{masdari2020survey}.

We assume that the ML prediction $y$ is available in the very beginning (i.e., $t=0$) and is provided in full (i.e., all $y_k$'s are available). 
The actual total demand can be represented by a vector $z \coloneqq (z_1, \dots, z_K)$, where $z_k$ denotes the total demand for item $k$, i.e.,
 $z_k = \sum_{t=1}^T \mathds{1}_{\{i(t)=k\}} a(t)$.
The prediction error is defined as the $\ell 1$-norm distance from the actual total demand to the predicted ones, 
and we use $\eta_k \coloneqq | y_k - z_k|$ and $\eta \coloneqq \sum_{k=1}^K \eta_k$ to denote the prediction error for item $k$ and the total prediction error, respectively. 

Consider a learning-augmented online algorithm $\pi^{\dag}$ that generates a solution with cost $\mathcal{C}(\mathbf{D}, \pi^{\dag})$ using information available in time-slot $t$ and a prediction $y$. Algorithm $\pi^{\dag}$ is called \emph{$\alpha$-consistent} if $\mathcal{C}(\mathbf{D}, \pi^{\dag}) \le \alpha \cdot \mathit{OPT}(\mathbf{D})$ for any sequence $\mathbf{D}$ when the prediction is perfect (i.e., $\eta = 0$), and it is called \emph{$\beta$-robust} if $\mathcal{C}(\mathbf{D}, \pi^{\dag}) \le \beta \cdot \mathit{OPT}(\mathbf{D})$ for any sequence $\mathbf{D}$, regardless of the prediction error $\eta$.
Here, consistency measures the performance in the best-case scenario with perfect predictions, while robustness measures the performance of a learning-augmented algorithm in the worst-case scenario regardless of the prediction quality.

\subsection{A Simple Algorithm: Follow the Prediction} \label{sec::follow}
First, we show that a simple algorithm called \emph{Follow the Prediction (FTP)}, cannot guarantee robustness. 
FTP operates in the following way: it compares the combo purchase cost $C_c$ with the total cost without the combo purchase option (i.e., $\sum_{k=1}^K \min \{C_s, y_k\}$), and makes a combo purchase if it is lower (i.e., $\sum_{k=1}^K \min \{C_s, y_k\} \geq C_c$); otherwise, it makes a single purchase for item $k$ if $y_k \geq C_s$ or rents for item $k$ when needed.
Lemma~\ref{lemma:FTP} gives an upper bound on the total cost under FTP.

\begin{lemma} \label{lemma:FTP}
FTP satisfies $\mathcal{C}(\mathbf{D}, \mathrm{FTP}) \leq \mathit{OPT}(\mathbf{D}) + \eta$.
\end{lemma}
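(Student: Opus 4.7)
The plan is to first characterize the cost of the optimal offline solution and then compare the cost of FTP against it by a case analysis. Observe that an offline optimum facing known totals $z = (z_1,\dots,z_K)$ will never rent after having made a combo purchase, and will never buy anything after a combo purchase either; moreover, if a combo is made, making it at $t=1$ only saves money, and likewise single purchases are best made at $t=1$. This reduces OPT to a simple choice per item, giving the closed form
\begin{equation}
\mathit{OPT}(\mathbf{D}) \;=\; \min\Bigl\{C_c, \; \sum_{k=1}^K \min\{C_s, z_k\}\Bigr\}.
\end{equation}
I would state and justify this characterization as a short preliminary observation.

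Next, I would write FTP's cost in closed form. By construction, FTP pays $C_c$ if $\sum_{k} \min\{C_s, y_k\} \ge C_c$, and otherwise pays $\sum_{k:\, y_k \ge C_s} C_s + \sum_{k:\, y_k < C_s} z_k$. The key analytic tool is the $1$-Lipschitz property of the map $x \mapsto \min\{C_s, x\}$, which yields $|\min\{C_s, y_k\} - \min\{C_s, z_k\}| \le |y_k - z_k| = \eta_k$ and, in the ``rent'' branch, the slightly stronger bound $z_k \le \min\{C_s, y_k\} + \eta_k$ whenever $y_k < C_s$. I would state this Lipschitz inequality once and reuse it throughout.

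The proof then splits into two cases depending on FTP's combo decision. In the combo case, FTP pays $C_c$ and I would compare with both branches of OPT: if OPT also buys combo, FTP equals OPT; if OPT does not, then $\sum_k \min\{C_s, z_k\} < C_c \le \sum_k \min\{C_s, y_k\}$, and summing the Lipschitz inequality over $k$ gives $C_c - \sum_k \min\{C_s, z_k\} \le \eta$. In the no-combo case, I would rewrite FTP's per-item contribution and invoke the strengthened Lipschitz bound above to obtain $\mathcal{C}(\mathbf{D}, \mathrm{FTP}) \le \sum_k \min\{C_s, y_k\} + \eta$; I would then compare with OPT in each of its two branches, using $\sum_k \min\{C_s, y_k\} < C_c$ on the combo branch and the Lipschitz inequality on the no-combo branch.

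I expect no deep obstacle: the result is essentially a triangle-inequality-style bound, and the main care is just bookkeeping over the four (FTP-branch $\times$ OPT-branch) subcases. The only subtlety worth flagging is the OPT characterization itself, since a reader might wonder whether mixing rent/single/combo across time could help OPT; I would therefore make the short exchange argument explicit before the case analysis, since the rest of the proof relies on this clean form of $\mathit{OPT}(\mathbf{D})$.
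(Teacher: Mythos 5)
Your overall architecture --- the closed form for $\mathit{OPT}(\mathbf{D})$, the closed form for FTP's cost, and the four-way case split on the two combo decisions --- is exactly the paper's (your OPT characterization is its Lemma~\ref{lemma:opt}), and three of your four subcases go through as written. The problem is the fourth subcase, where neither FTP nor the optimal offline algorithm buys the combo. There you propose to first establish $\mathcal{C}(\mathbf{D},\mathrm{FTP}) \le \sum_k \min\{C_s,y_k\} + \eta$ and then apply the Lipschitz inequality $\sum_k \min\{C_s,y_k\} \le \sum_k \min\{C_s,z_k\} + \eta$ to reach $\mathit{OPT}(\mathbf{D})$. Chaining these spends the error budget twice: for an item with $y_k < C_s$ you pay $\eta_k$ once to bound the rental cost via $z_k \le y_k + \eta_k$, and then again to move from $\min\{C_s,y_k\}$ to $\min\{C_s,z_k\}$, so the chain only yields $\mathcal{C}(\mathbf{D},\mathrm{FTP}) \le \mathit{OPT}(\mathbf{D}) + 2\eta$ in this subcase, not $+\eta$.

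The fix is to drop the intermediate quantity here and compare FTP to OPT item by item, proving $\mathit{ALG}_k(z) \le \min\{C_s,z_k\} + \eta_k$ directly: if $y_k \ge C_s$ then $\mathit{ALG}_k(z) = C_s$, and either $z_k \ge C_s$ (immediate) or $z_k < C_s \le y_k$ forces $\eta_k \ge C_s - z_k$; if $y_k < C_s$ then $\mathit{ALG}_k(z) = z_k$, and either $z_k \le C_s$ (immediate) or $z_k > C_s > y_k$ forces $\eta_k > z_k - C_s$. In each case only one of the two ``transfers'' actually costs anything. This is precisely the single-item ski-rental bound (Lemma~2.1 of Purohit et al.) that the paper invokes for this subcase. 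Your intermediate bound $\mathcal{C}(\mathbf{D},\mathrm{FTP}) \le \sum_k\min\{C_s,y_k\}+\eta$ remains the right tool for the subcase where FTP rejects the combo but OPT buys it, since there you finish with $\sum_k\min\{C_s,y_k\} < C_c = \mathit{OPT}(\mathbf{D})$ and spend $\eta$ only once.
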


\begin{proof sketch}
The detailed proof is provided in Appendix~\ref{app:lemma-4.1} and we give a proof sketch in the following. 
We first show that the cost of both FTP and the optimal offline algorithm only depends on total demand. Then, we only need to analyze the CR based on the total demand. We use $\mathit{ALG}(z)$ and $\mathit{OPT}(z)$ to denote the cost of FTP and the optimal offline algorithm under a total demand $z$, respectively. 
Depending on whether FTP and the optimal offline algorithm each make a combo purchase or not, we consider four cases.
Finally, we show $\mathit{ALG}(z) \le \mathit{OPT}(z) + \eta$ for each case.
\end{proof sketch}

\begin{remark}
When the prediction is perfect (i.e., $\eta = 0$), FTP achieves the optimal cost. However, the performance of FTP can be very poor when the prediction error $\eta$ is large. For example, when $\sum_{k=1}^K \min \{C_s, y_k\} < C_c$ and $y_k  < C_s$, 
but there is a very large total demand for item $k$, FTP would choose to rent for item $k$ at all times, resulting in an unbounded cost and thus an unbounded CR as the optimal cost is bounded by $C_c$. Therefore, although FTP can achieve consistency, it lacks a robustness guarantee.
\end{remark}

\subsection{Design of Learning-augmented Algorithm} \label{sec::la}
As following the prediction in a straightforward way (e.g., FTP) cannot guarantee the worst-case performance, we aim to 
wisely integrate the ML predictions with RDTSR and design a learning-augmented online algorithm that can achieve both consistency and robustness.

The main idea of the design is to adjust the threshold of RDTSR based on the prediction: if a prediction suggests a (single or combo) purchase, then we should decrease the associated threshold to encourage the purchase (recall that the threshold determines when to make a purchase); if a prediction suggests not making a purchase, we should increase the associated threshold.
The extent to which we adjust the threshold relies on our confidence in the prediction. 

Based on the above intuition, we design the \emph{Learning-augmented Deterministic Two-level Ski-rental (LADTSR)} Algorithm, presented in Algorithm~\ref{ALG}.
It is almost the same as RDTSR, except for the choice of thresholds.
First, we specify a trust parameter $\theta \in [0, 1]$, reflecting the confidence in the prediction: 
a smaller $\theta$ indicates higher confidence.
Then, we adjust the thresholds as follows: for each single purchase threshold $\lambda_{s, k}$ for item $k$, we reduce it to $\theta C_s$ if the prediction suggests a single purchase for this item (i.e., $y_k \ge C_s$); otherwise, we increase $\lambda_{s, k}$ to $C_s/\theta$. 
For the combo purchase threshold $\lambda_c$, we reduce it to $\theta^2 C_c$ (see Remark~\ref{remark:lambdac} for an explanation of this choice instead of $\theta C_c$)
if the prediction suggests a combo purchase (i.e., $\sum_{k=1}^K \min \{C_s, y_k\} \ge C_c$); otherwise, we increase $\lambda_c$ to $C_c/\theta$. 
We set the threshold to infinity in case of division by 0.
Note that when running RDTSR (Line~\ref{line:run_rdtsr}), the definition of $\psi_c(t)$ (i.e., Eq.~\eqref{psic}) needs to be modified slightly by replacing $\lambda_s$ with $\lambda_{s, k}$.

\begin{algorithm}[!tb]
\SetAlgoLined
\SetKwInOut{Input}{Input}
\SetKwInOut{Output}{Output}
\SetKwInOut{Init}{Init.}
\Input{$y$, $\theta$, $K$, $C_s$, $C_c$, $\mathbf{D}$ (revealed in an online manner)}
\Output{$r(t), t_k, t_c$}
\Init{$\psi_c(t),\psi_k(t),r(t) \gets 0, t_k, t_c \gets \infty$}

\For{item $k=\{1,2,\dots,K\}$} {\label{line:setthreshold_begin}
\eIf{$y_k \ge C_s$
}{
 $\lambda_{s,k} \gets \theta C_s$\;
}{
  $\lambda_{s,k} \gets C_s / \theta$\;
}
}
\label{line:setthreshold_single}
\eIf{$\sum_{k=1}^K \min \{C_s, y_k\} \ge C_c$ 
}{ \label{line:setthreshold_combo}
$\lambda_c \gets \theta^2 C_c$\;
}{
    $\lambda_c \gets C_c / \theta$\;
}
\label{line:setthreshold_end}
Run RDTSR with thresholds $\lambda_{s, k}$ and $\lambda_c$; 
\label{line:run_rdtsr}

\caption{Learning-augmented Deterministic Two-level Ski-rental (LADTSR) Algorithm 
}
\label{ALG}
\end{algorithm}

\subsection{Analysis of Learning-augmented Algorithm}
In this subsection, we focus on the consistency and robustness analysis of LADTSR with $\theta \in (0, 1)$. The special cases of LADTSR with $\theta=0$ and $\theta=1$ correspond to FTP and RDTSR, respectively.
It is noteworthy that by choosing different values of $\theta$, LADTSR exhibits a crucial trade-off between consistency and robustness.
\begin{theorem} \label{thm:la}
The CR of LADTSR is upper bounded by
\begin{equation}
    \min \left \{1+\theta+\theta^2 + \frac{1 + 2\theta}{1 - \theta} \cdot \frac{\eta}{\mathit{OPT}(\mathbf{D})}, 1+\theta^{-1} + \theta^{-3} \right \},
\end{equation}
for $\theta \in (0, 1)$.
Futhermore, LADTSR is $(1 + \theta + \theta^2)$-consistent and $(1 + \theta^{-1} + \theta^{-3}) $-robust.
\end{theorem}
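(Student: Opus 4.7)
The plan is to treat LADTSR as RDTSR run with prediction-dependent thresholds and establish the two upper bounds inside the $\min$ separately. Since LADTSR delegates execution to RDTSR on Line~\ref{line:run_rdtsr}, I would reuse the per-item accounting developed for Theorem~\ref{thm:cr} --- namely the reduction to a standard total-demand instance and the decomposition of the online cost into rental, single-purchase, and combo-purchase contributions --- but adapted to item-wise thresholds $\lambda_{s,k}\in\{\theta C_s,\, C_s/\theta\}$ and the combo threshold $\lambda_c\in\{\theta^2 C_c,\, C_c/\theta\}$ chosen by Lines~\ref{line:setthreshold_begin}--\ref{line:setthreshold_end}. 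Once the item-wise generalization of the CR expression in Step 1c) of that earlier proof is in hand, the two bounds follow from substituting suitable threshold configurations and carefully charging excess cost either to $\mathit{OPT}(\mathbf{D})$ or to the prediction error $\eta$.

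For the robustness bound $1+\theta^{-1}+\theta^{-3}$, I would argue uniformly over predictions. Regardless of the prediction, the chosen $\lambda_{s,k}$ never exceeds $C_s/\theta$ and $\lambda_c$ never exceeds $C_c/\theta$, so the worst-case CR is realized when the thresholds are pushed \emph{up}. The analysis then mirrors the case split in Theorem~\ref{thm:cr}: the LADTSR cost (rental up to a threshold plus one or more purchases) is compared either to the $C_s$ or $C_c$ that OPT would have paid, or to the demand that OPT rented. I expect the $\theta^{-3}$ factor to arise by compounding a $1/\theta$ slack in the combo threshold with the $1/\theta$ slack from each $\lambda_{s,k}^{-1}$ appearing in the combo summand of the CR expression, together with a $C_c/C_s$ ratio coming from the standard-demand reduction; it is precisely to keep this bound symmetric with the consistency side that $\lambda_c$ is set to $\theta^2 C_c$ rather than $\theta C_c$ in the ``down'' branch.

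For the prediction-aware bound $1+\theta+\theta^2+\frac{1+2\theta}{1-\theta}\cdot\frac{\eta}{\mathit{OPT}(\mathbf{D})}$, I would split on two binary axes: whether the combo prediction $\sum_k\min\{C_s,y_k\}\ge C_c$ agrees with OPT's combo decision, and, for each item $k$, whether the single-purchase prediction $y_k\ge C_s$ agrees with OPT's single-purchase decision for $k$. In the agreement branches the ``down'' thresholds $\theta C_s$ and $\theta^2 C_c$ trigger strictly earlier than their RDTSR counterparts, so the rental prefix is bounded by $\theta$ (respectively $\theta^2$) times the corresponding piece of OPT, yielding the constant base $1+\theta+\theta^2$. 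In the disagreement branches the excess cost --- either a premature purchase when the prediction over-estimates, or a long rental prefix up to $C_s/\theta$ or $C_c/\theta$ when it under-estimates --- is charged to the prediction error by means of the pointwise inequality $|\min\{C_s,y_k\}-\min\{C_s,z_k\}|\le|y_k-z_k|$, with the coefficient $\frac{1+2\theta}{1-\theta}$ emerging from the gap $C_s/\theta-\theta C_s$ between the ``up'' and ``down'' thresholds after normalization.

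The main obstacle will be the disagreement bookkeeping for the combo decision, since its trigger depends on the aggregate $\sum_k\min\{C_s,y_k\}$ rather than a single coordinate: a wrongly suggested combo purchase can be caused by many small item-level errors simultaneously, while a wrongly suppressed combo forces the algorithm to rent or to make several single purchases that must all be traced back to $\eta$. Designing a charging scheme that distributes this aggregate discrepancy across items tightly enough to obtain exactly the stated coefficient $\frac{1+2\theta}{1-\theta}$ --- and not a looser one --- is where most of the effort will lie. Once that scheme is in place, the $\alpha$-consistency claim with $\alpha=1+\theta+\theta^2$ follows by setting $\eta=0$ in the first bound, and the $\beta$-robustness claim with $\beta=1+\theta^{-1}+\theta^{-3}$ follows from the second bound being uniform in $\eta$.
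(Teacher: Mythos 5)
Your overall architecture (case split on the combo decisions of LADTSR and of the optimal offline algorithm, reduction to a standard total demand with item-wise thresholds, per-item ski-rental bounds for the no-combo case, and charging disagreement cost to $\eta$ via $|\min\{C_s,y_k\}-\min\{C_s,z_k\}|\le \eta_k$) matches the paper's proof in spirit. However, your robustness argument contains a genuine gap: you assert that ``the worst-case CR is realized when the thresholds are pushed up,'' and propose to analyze only the configuration $\lambda_{s,k}=C_s/\theta$, $\lambda_c=C_c/\theta$. This monotonicity claim is false. Pushing a threshold down also hurts, because it triggers a premature purchase against a small $\mathit{OPT}$, and the extremal configuration for robustness is in fact \emph{mixed}: the combo threshold pushed down to $\lambda_c=\theta^2 C_c$ while some single thresholds sit at $C_s/\theta$. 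In that configuration an item with $\lambda_{s,k}=C_s/\theta$ contributes up to $C_s/\theta$ to the indicative sum $\sum_k\min\{z_k,\lambda_{s,k}\}$ that triggers the combo purchase, but only $C_s$ to $\mathit{OPT}$, so $\mathit{OPT}$ can be as small as roughly $\theta\lambda_c=\theta^3 C_c$ while LADTSR still pays on the order of $C_c$; optimizing over the number of such items yields exactly $1+\theta^{-1}+\theta^{-3}$. The all-up configuration you restrict to gives only about $1+\theta^{-1}+\theta^{-2}$, which is fine as an upper bound for that branch but leaves the $\lambda_c=\theta^2 C_c$ branch entirely uncovered --- and that branch cannot be absorbed by your prediction-aware bound, since robustness must hold for arbitrary $\eta$. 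You need a separate worst-case analysis of the down-combo branch over the generalized standard total demand $(m_1,m_2,n_2,x)$, which is where the $\theta^{-3}$ actually comes from.

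On the prediction-aware bound your plan is workable but underestimates where the coefficient $\tfrac{1+2\theta}{1-\theta}$ originates: it does not come from the gap $C_s/\theta-\theta C_s$ per se, but from the single hardest subcase in which the prediction does \emph{not} suggest a combo purchase (so $\lambda_c=C_c/\theta$) yet LADTSR is nevertheless driven over that raised threshold by the actual demand. There one must first prove a lower bound on the error, $C_c\le\tfrac{\theta}{1-\theta}\,\eta$, by contrasting the trigger condition $\sum_k\min\{z_k,\lambda_{s,k}\}\ge C_c/\theta$ with the prediction condition $\sum_k\min\{C_s,y_k\}<C_c$, and then bound the algorithm's cost by $C_c+(2+\theta^{-1})C_c$ using a count of at most $C_c/C_s$ single purchases; combining the two gives $1+\tfrac{1+2\theta}{1-\theta}\cdot\tfrac{\eta}{\mathit{OPT}}$. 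You correctly identify the aggregate-disagreement bookkeeping as the crux, but without this explicit $\eta$-lower-bound step the stated coefficient does not follow from the pointwise inequality alone.
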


\begin{proof sketch}
The detailed proof is provided in Appendices~\ref{sec:consistency} and~\ref{sec::robust},
and we give a proof sketch in the following.

We first prove the first bound $1+\theta+\theta^2 + \frac{1 + 2\theta}{1 - \theta} \cdot \frac{\eta}{\mathit{OPT}(\mathbf{D})}$. Our goal is to show that this upper bound holds for any demand sequence and any prediction. First, we consider two cases for the choice of $\lambda_c$ (i.e., $\lambda_c = \theta^2 C_c$ and $\lambda_c = C_c / \theta$).
In each case, we consider four subcases, depending on whether LADTSR and the optimal offline algorithm each make a combo purchase or not. In each subcase, we derive an upper bound of CR, which is a function of $\eta$ and $\mathit{OPT}(\mathbf{D})$. Combining all eight subcases yields the first bound.

Next, we prove the second bound $1 + \theta^{-1} + \theta^{-3}$. The analysis is similar to Step 1) in the proof of Theorem~\ref{thm:cr} and also has three steps: 1a) Given any prediction, we derive an upper bound of CR over all the demand sequences with the same total demand. 1b) We simplify the upper bound obtained in Step 1a) through standard total demand. 1c) We show that the upper bound derived in 1b) is at most $1 + \theta^{-1} + \theta^{-3}$. Note that here we need 
a generalized standard total demand compared to the proof of Theorem~\ref{thm:cr} because LADTSR has two different values of single purchase threshold (i.e., $\theta C_s$ and $C_s/\theta$), while RDTSR only has one threshold value (i.e., $\lambda_s$). 
\end{proof sketch}

\begin{remark} \label{remark:ALG2}
With any $\theta \in (0,1)$, the CR of LADTSR is at most $1 + \theta^{-1} + \theta^{-3}$, regardless of the prediction error $\eta$, implying the worst-case performance guarantees (i.e., robustness). In particular, if LADTSR does not trust the prediction at all (i.e., $\theta \rightarrow 1$), then we have $1 + \theta^{-1} + \theta^{-3} \rightarrow 3$, achieving a similar performance guarantee to that of RDTSR.
\end{remark}

\begin{remark} \label{remark:ALG1}
When the prediction is perfect (i.e., $\eta = 0$), the CR of LADTSR is at most $1+\theta+\theta^2$ as $1 + \theta + \theta^2 < 1 + \theta^{-1} + \theta^{-3}$ for $\theta \in (0,1)$, implying ($1 + \theta + \theta^2$)-consistency.
In particular, if LADTSR fully trusts the prediction (i.e.,  $\theta \rightarrow 0$), then we have $1 + \theta + \theta^2 \rightarrow 1$, achieving a similar performance to the optimal offline algorithm.
\end{remark}

\begin{remark} \label{remark:lambdac}
We now explain the reason why we set the combo purchase threshold $\lambda_c$ as $\theta^2 C_c$ instead of $\theta C_c$. Specifically, we present an example to show that LADTSR fails to achieve $1$-consistency when $\theta \rightarrow 0$ if we choose $\lambda_c=\theta C_c$. Consider $C_c = (K-1)C_s + 1$ and a demand sequence $\Tilde{\mathbf{D}} := \{(1, C_s),(2, C_s), \dots, (K, C_s)\}$. If the prediction is perfect (i.e., $y = (C_s, \dots, C_s)$), then LADTSR sets $\lambda_{s, k} = \theta C_s$ for all the items since $y_k \ge C_s$ (in fact, $y_k = C_s$) and $\lambda_c = \theta C_c$
since $\sum_{k=1}^K \min \{C_s, y_k\} = KC_s > C_c$. 
When the demand $(k, C_s)$ arrives, according to Eqs.~\eqref{psik} and~\eqref{psic}, $\psi_k(k)$ increases by $C_s$, and $\psi_c(k)$ increases by $\min\{\psi_k(k), \lambda_{s, k}\} = \theta C_s$ (recall that we modify Eq.~\eqref{psic} slightly for LADTSR).
So in each time-slot $t$, the overall indicative cost is $\psi_c(t) = \sum_{k=1}^K \min\{\psi_k(t), \lambda_{s, k}\} = t \theta C_s$,
and it reaches $\lambda_c$ only when $t=K$ because $\psi_c(K-1) = (K-1)\theta C_s < \theta (K-1)C_s + \theta = \theta C_c = \lambda_c < K\theta C_s = \psi_c(K)$.
That is, LADTSR will make a combo purchase when the last demand arrives; for the previous $K-1$ demands, LADTSR will make single purchases for these items as their demands exceed the single purchase threshold. 
As a result, the total cost is $(K-1)C_s + C_c$. Since the optimal offline cost is $C_c = (K-1)C_s + 1$ (by assumption), the CR of LADTSR is at least $\frac{(K-1)C_s + C_c}{C_c} = \frac{C_c - 1 + C_c}{C_c} = 2 - \frac{1}{C_c}$, which is larger than 1 for any $\theta$.
\end{remark}

\section{Experimental Results}
In this section, we conduct numerical experiments using both synthetic and real-world trace data to evaluate the performance of our proposed algorithms (RDTSR and LADTSR).\footnote{Source code: \url{https://github.com/nauyek/LADTSR}.} 
The results demonstrate the robustness of RDTSR and corroborate that LADTSR offers the desired trade-off between consistency and robustness.

\subsection{Datasets}


\paragraph{Synthetic Dataset.} We consider the same settings adopted in~\citet{wu2021competitive}. Specifically, the demand sequence $\mathbf{D}$ is generated in the following way: First, the time horizon $T$ follows a uniform distribution ranging from 1 to 60. The demand in each time-slot is randomly assigned to one item, following two standard distributions: i) \textbf{Uniform:} the demand is uniformly distributed among all the items; ii) \textbf{Long-tailed:} the demand distribution follows the Pareto law, i.e., $80\%$ of demands are assigned to $20\%$ of items.
The theoretical results in~\citet{wu2021competitive} hold only for unit-demands (i.e., $a(t)=1$ in all time-slots). In our experiments, we also consider the case where multiple units of demand may arrive in each time-slot (i.e., $a(t) \ge 1$). We set the number of items $K = 6$.

\paragraph{Cloud Cost Management Dataset.} We consider the application of cloud service cost management, where a user needs to fulfill the demands from multiple servers via either rental or purchase (single or combo) options. To simulate the demands from multiple servers, we use the traces of Virtual Machine (VM) workloads from Microsoft Azure~\citep{cortez2017resource}. These traces include the average CPU workload every 5 minutes for 30 consecutive days. We randomly shuffle the workloads from multiple servers to generate the demand sequence $\mathbf{D}$. The number of VMs is 9 (i.e., $K=9$). 

\paragraph{Mobile App Usage Dataset.} We consider the application where a user needs to pay for the data usage from multiple mobile Apps. The data usage traces are provided by~\citet{yu2018smartphone}, which records the user ID, App ID, timestamp, and data traffic. We use the trace for multiple Apps from one user in one day as the demand sequence for one problem instance. The number of Apps is 4 (i.e., $K=4$). 

Unless otherwise specified, we set the single purchase price $C_s=9$ and the combo purchase price $C_c=30$ for all the three datasets we consider.


\subsection{Robust Online Algorithms}
In this subsection, we compare RDTSR with the \emph{Deterministic Two-level Ski-Rental (DTSR)} algorithm proposed in~\citet{wu2021competitive}.

\paragraph{Experimental Setup.}
We vary the value of the combo purchase price $C_c$ from 15 to 40. For each value of $C_c$, we run our experiments for $10^4$ independent demand sequences, with $40\%$ being the uniform sequences and $60\%$ being the long-tailed sequences.
All the experiments are implemented in Python and are conducted on a laptop with a 12th Gen Intel(R) i5-12500H processor and 16GB memory.

\begin{figure*}[!t]
\begin{subfigure}{0.24\linewidth}
\centering
\includegraphics[width=0.72\textwidth]{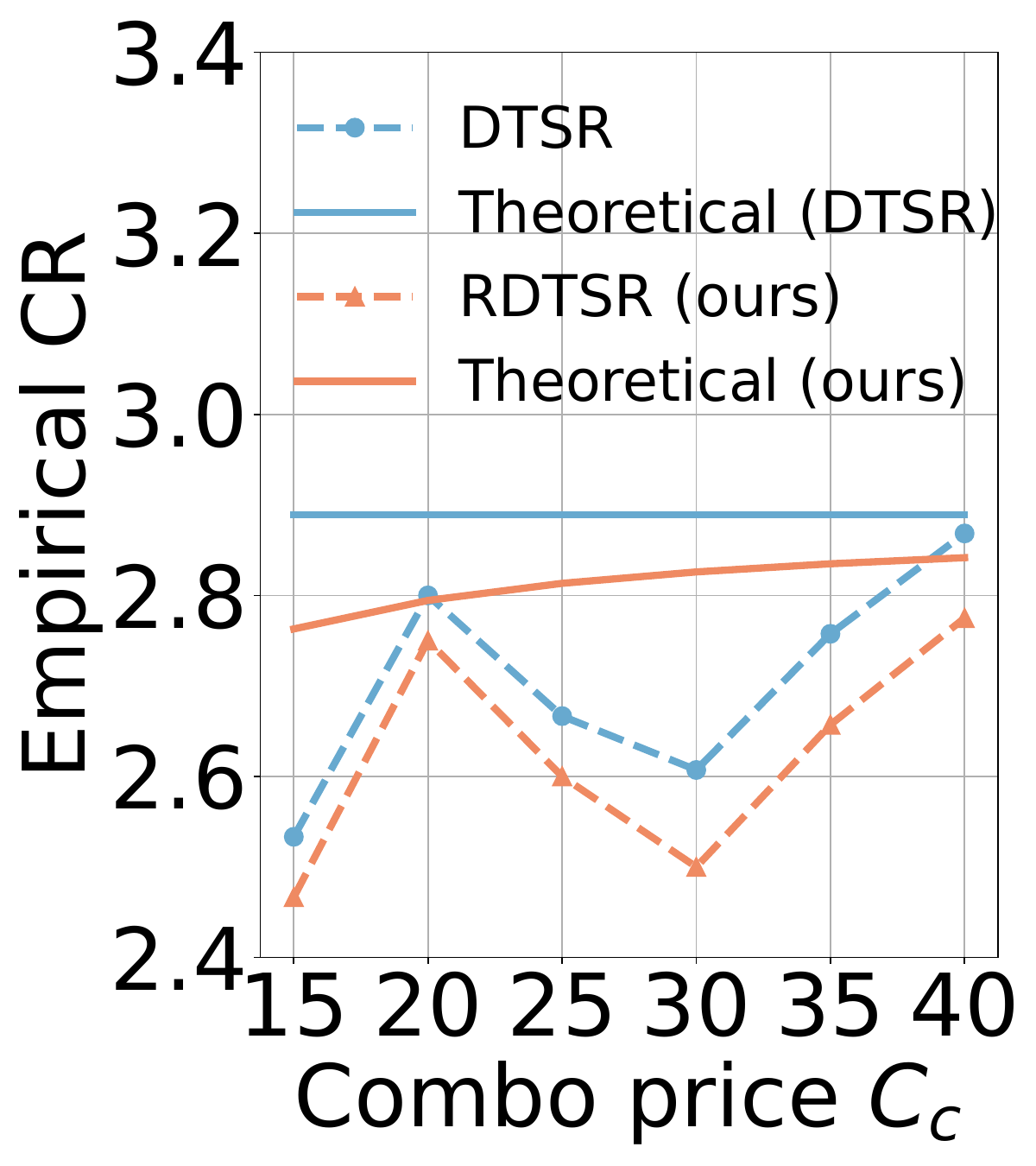}
 \caption{Unit demands}
\label{fig:CR_mix}
\end{subfigure}
\begin{subfigure}{0.24\linewidth}
\centering
\includegraphics[width=0.72\textwidth]{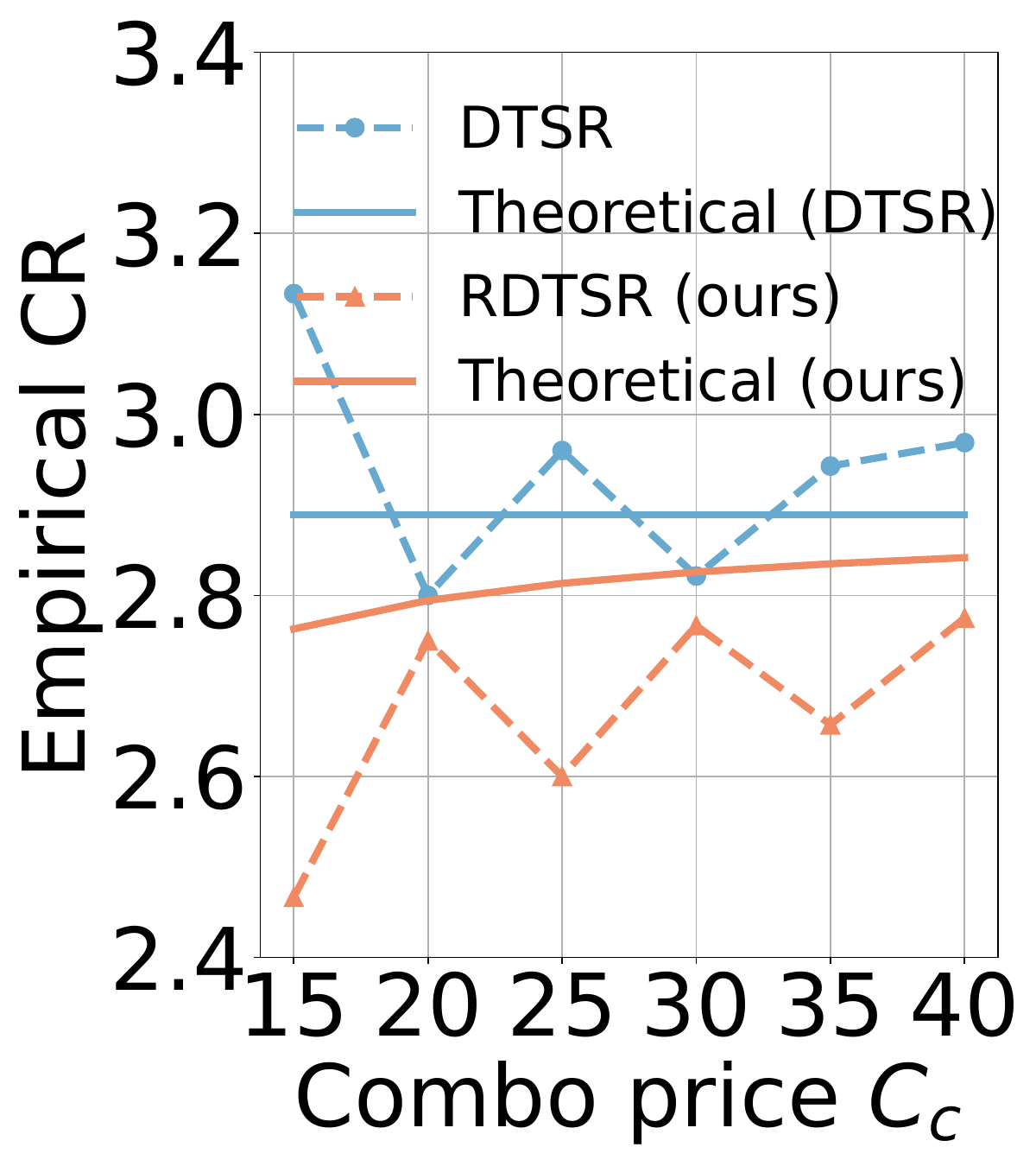}
 \caption{Multi-unit demands}
 \label{fig:CR_mix_multiple}
\end{subfigure}
\begin{subfigure}{0.24\linewidth}
\centering
\includegraphics[width=0.72\textwidth]{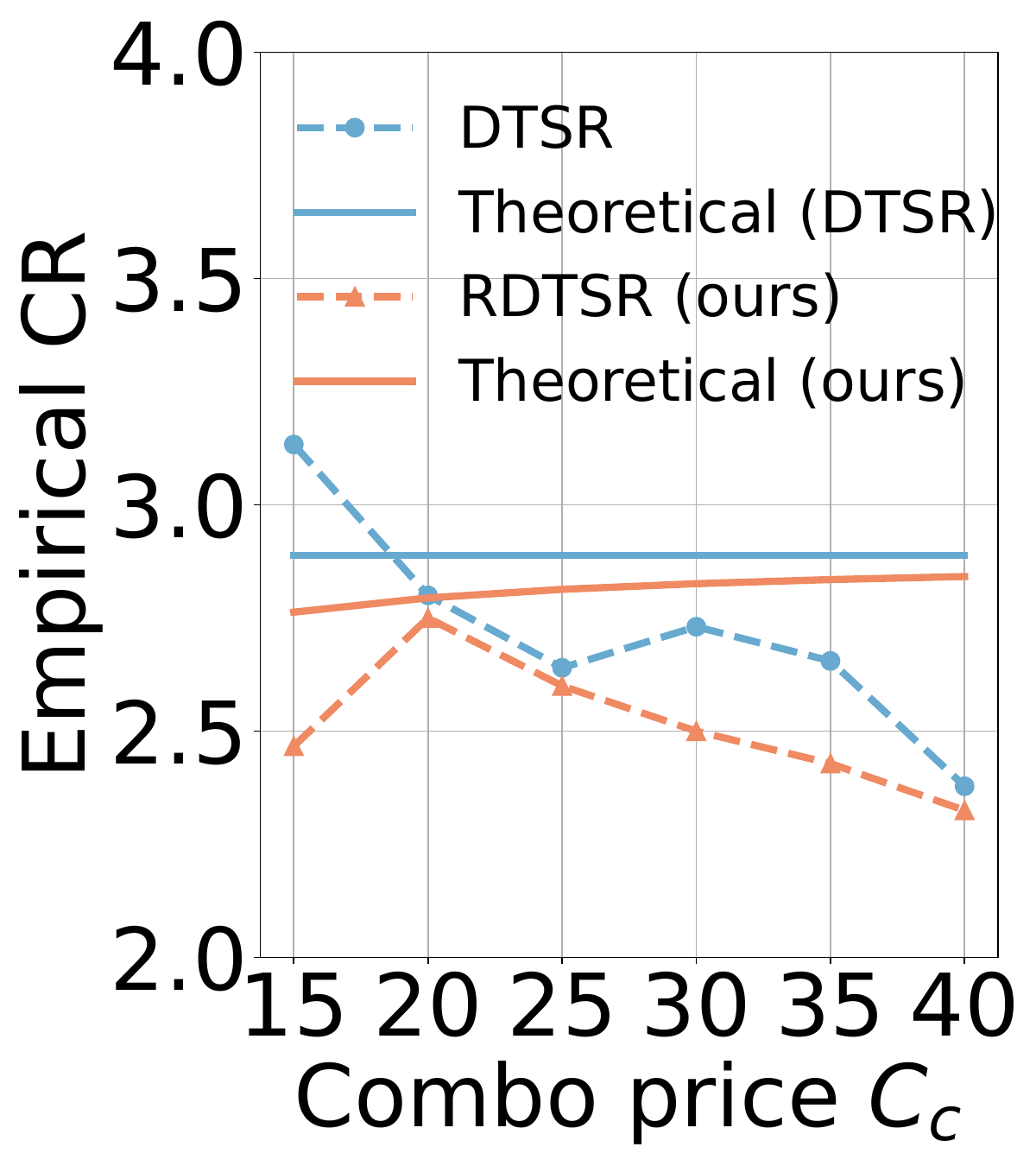}
 \caption{Azure dataset}
 \label{fig:CR_azure}
\end{subfigure}
\begin{subfigure}{0.24\linewidth}
\centering
\includegraphics[width=0.72\textwidth]{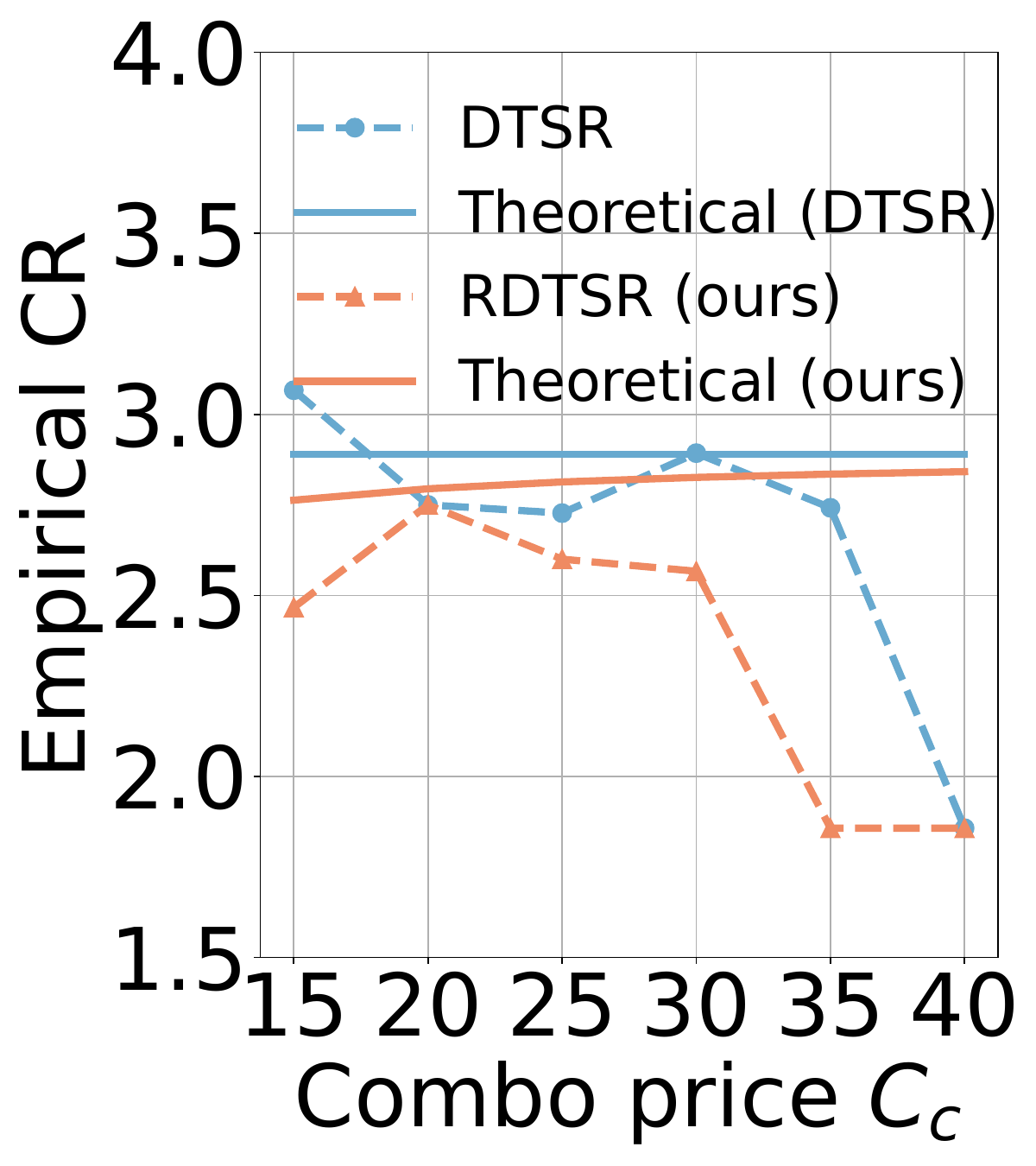}
 \caption{AppUsage dataset}
 \label{fig:CR_AppUsage}
\end{subfigure}
\caption{Empirical CR on the synthetic dataset.}
\label{fig:CR_result}
\end{figure*}

\paragraph{Results.}
Fig.~\ref{fig:CR_result} shows the empirical and theoretical CR of RDTSR and DTSR. Here, the empirical CR represents the worst cost ratio (under the online algorithm and the optimal offline algorithm) over $10^4$ independent demand sequences. We can observe that our algorithm (RDTSR) exhibits a lower empirical 
CR compared to the existing algorithm (DTSR). Furthermore, the empirical CR of RDTSR consistently remains below its theoretical bound, verifying our theoretical results. 
In the setting of multi-unit demands, the empirical CR of DTSR exceeds the theoretical bound (e.g., when $C_c = 15, 25, 35, 40$ in Fig.~\ref{fig:CR_mix_multiple}). This suggests that the CR of DTSR obtained in~\citet{wu2021competitive} does not hold when allowing multiple units of demand to arrive in a time-slot (see Remark~\ref{remark:comparison} for a detailed discussion). 
Similar results can be observed from the two real-world datasets in Fig.~\ref{fig:CR_azure} and Fig.~\ref{fig:CR_AppUsage}. Note that for the AppUsage dataset, when $C_c=40$, the combo purchase option is redundant and the problem boils down to the ski-rental problem (because $K \cdot C_s = 36 < 40$). In this case, the two algorithms are equivalent.

Furthermore, the simulation results also show that RDTSR and DTSR exhibit similar average performance, which is provided in Appendix~\ref{sec:average}.




\subsection{Learning-augmented Online Algorithms}
In this subsection, we study the performance of LADTSR under varying prediction errors. We first explain how to generate ML predictions in our experiments.

\paragraph{ML Prediction Generation.}
Recall that LADTSR receives the total demand for each item as a prediction. We construct an ML task that uses historical total demand information to predict the next total demand for one item. Suppose that there are $N$ rounds of two-level ski-rental problem instances, and for each round $n$, we have a total demand $z^n_k$ for item $k$. Then, we can use a sequence of historical total demand with length $l-1$ (i.e., $z^{n-1}_k, z^{n-2}_k, \dots, z^{n-l+1}_k$) to predict $z^n_k$. To generate a sequence of total demand for each item, for the Synthetic dataset, we generate multiple problem instances (40\% uniform sequences followed by 60\% long-tailed sequences) and then concatenate them together;
such a sequence of total demands for each item can be directly obtained in the two real-world datasets. Then, we use a sliding window to generate multiple sequences of total demand for training and testing. 
The window size $l$ is 10, 513, and 7 for the Synthetic, Azure, and AppUsage datasets, respectively.
The last total demand is the ground truth while the previous sequence of total demand is the historical data. 
The window size is specifically chosen for each dataset due to the performance of the models.
Finally, we use 80\% of the sequences for training. 
The Mean Average Percentage Error (MAPE) of our ML models during the test is at most 5\%. 

To generate predictions with varying qualities, we first train the ML models without any perturbation. Then during the test, we perturb the input of the network by adding a bias $\mu$ that is uniform across the whole sequence of input.
This represents the scenario where the prediction is biased due to the distribution shift between training and testing data.

\paragraph{Experimental Setup.}
We train a \emph{Long Short-term Memory (LSTM)} network to predict the total demand for each item. The network has two LSTM layers followed by one fully connected layer. The hidden states of the two LSTM layers are both 10, 256, and 10 for the Synthetic, Azure, and AppUsage datasets, respectively. We use the mean absolute error as the loss function and employ the Adam optimizer to train the weights. The model is implemented in TensorFlow. For each model, the training process takes about 2~minutes on an NVIDIA
GeForce RTX 3060 Laptop GPU.

To study the impact of trust parameter $\theta$, we evaluate the performance of LADTSR with different values of $\theta \in \{0, 0.25, 0.5, 0.75, 1\}$. Recall that LADTSR with $\theta=0$ and $\theta=1$ corresponds to FTP and RDTSR, respectively. 

\paragraph{Results.}
Fig.~\ref{fig:simu_result} shows the average cost ratio of LADTSR with different values of $\theta$ under varying prediction errors (corresponding to varying values of bias $\mu$). Here the average cost ratio represents the average ratio of cost (under the online algorithm and the optimal offline algorithm) over all the problem instances. Fig.~\ref{fig:simu_syn} shows that while FTP (i.e., LADTSR with $\theta=0$) performs quite well when the prediction is accurate
(i.e., $\mu=0$), its performance degrades significantly when the bias increases.
For LADTSR with $\theta \in \{0.25, 0.5, 0.75\}$, when the predictions are accurate, they all outperform RDTSR (i.e., LADTSR with $\theta=1$), achieving better average performance compared to the robust online algorithm without using any predictions. 
When the prediction is perturbed by a large bias $\mu$ ($\mu \leq -40$ or $\mu \geq 10$), the performance degrades, but not significantly compared to FTP.
Furthermore, with different values of $\theta$, LADTSR provides different trade-off curves for consistency and robustness.
It is noteworthy that the performance of our learning-augmented algorithms degrades abruptly when the actual demand is overestimated ($\mu > 0$), while at a slower rate when it is underestimated. 
This happens because overestimation leads to unnecessary (single or combo) purchases, resulting in an immediate rise in the cost ratio. On the other hand, underestimation incurs additional rental costs before the algorithm makes a purchase, rising gradually as the bias level $|\mu|$ increases.

\begin{figure}[!t] 
\centering 
\begin{subfigure}{0.325\linewidth}
\centering
\includegraphics[width=\textwidth]{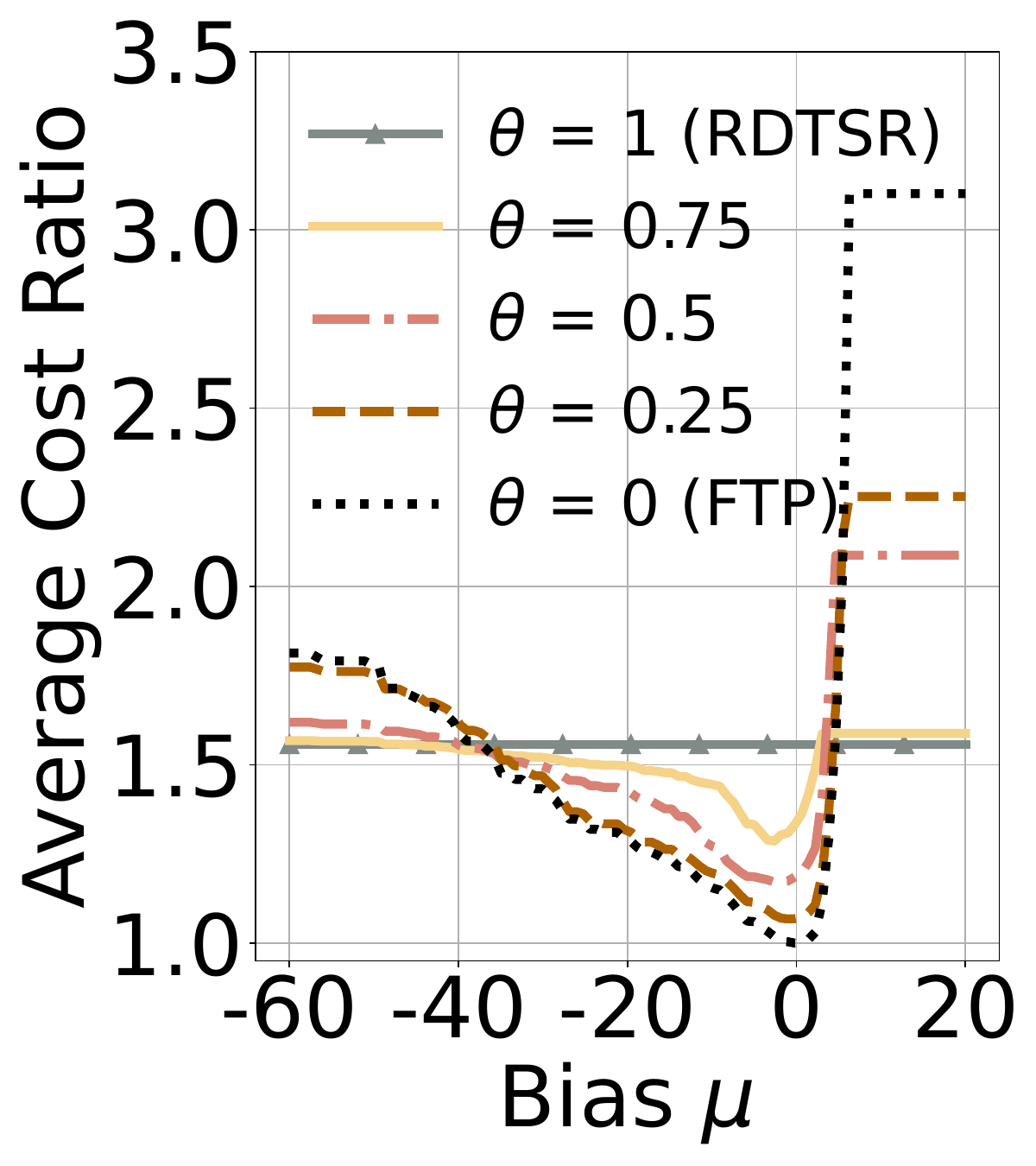}
\caption{Synthetic dataset}
\label{fig:simu_syn}
\end{subfigure}
\begin{subfigure}{0.325\linewidth}
\centering
\includegraphics[width=\textwidth]{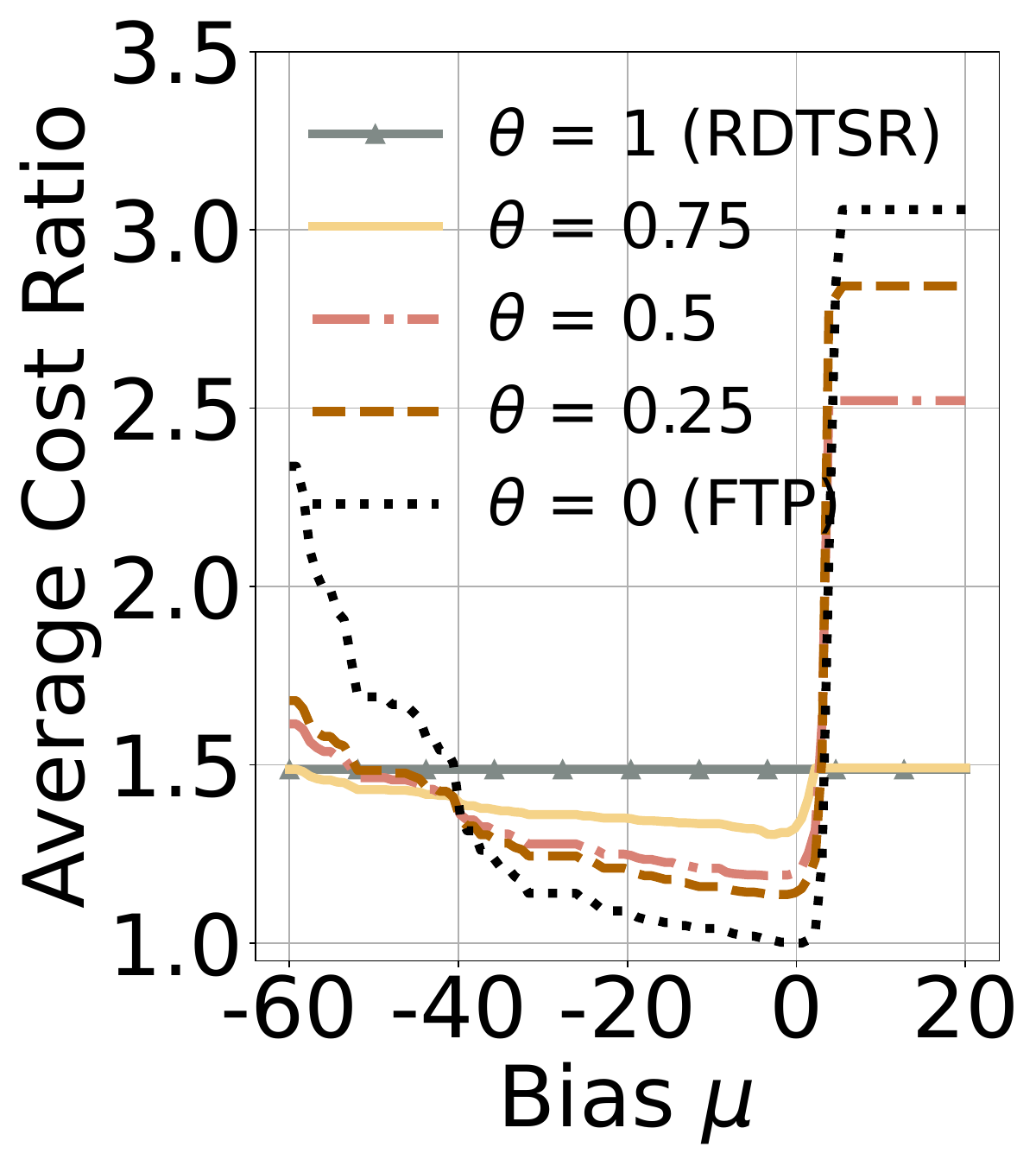}
 \caption{Azure dataset}
 \label{fig:simu_azure}
\end{subfigure}
\begin{subfigure}{0.325\linewidth}
\centering
\includegraphics[width=\textwidth]{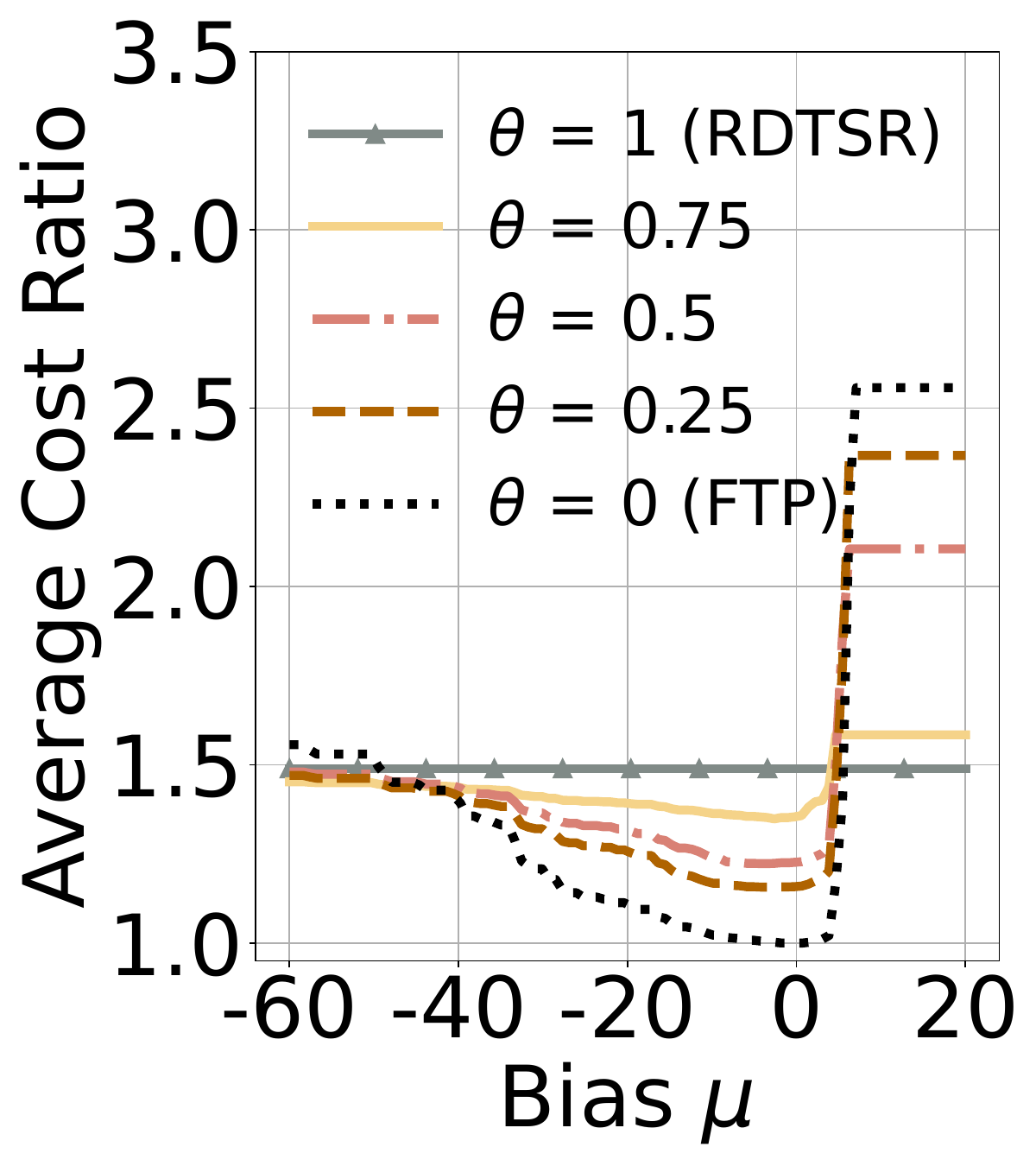}
 \caption{AppUsage dataset}
 \label{fig:simu_app}
\end{subfigure}
\caption{Average cost ratio under varying prediction errors.}
\label{fig:simu_result}
\end{figure}

\section{Conclusion}
Motivated by real-world applications, we investigated the two-level ski-rental problem with multiple payment options and designed a $3$-competitive deterministic online algorithm called RDTSR. By integrating useful ML predictions with RDTSR, we proposed a learning-augmented online algorithm called LADTSR, which achieves both consistency and robustness. 
As for future work, one important direction is to obtain a non-trivial competitive ratio lower bound for the studied problem as well as the optimal trade-off for the learning-augmented algorithm design; another interesting direction is to design learning-augmented algorithms that can achieve better performance by adaptively choosing the trust parameter $\theta$;
it would also be nice to design algorithms that account for the asymmetric impact of overestimation and underestimation in predictions.

\section*{Acknowledgments}
This work is supported in part by the Commonwealth Cyber Initiative (CCI).

\bibliography{references}
\appendix
\section{Preliminary Results}
Before we delve into the proof of our main results, we begin by presenting several useful lemmas. These lemmas will be crucial in our analysis of the competitive ratio (CR) of our robust online algorithm (RDTSR) and learning-augmented online algorithm (LADTSR). 

First, recall that in Eq.~\eqref{psik}, the indicative cost 
 $\psi_k(t)$ is defined as 
 \begin{equation} \label{eq:psi_k1}
    \psi_{k}(t) = 
    \begin{cases}
    \psi_k(t-1) + a(t), & \begin{aligned}
        &i(t)=k \ \text{and} \\
        &t < \min\{t_k, t_c\},
    \end{aligned} \\
    \psi_k(t-1),  & \text{otherwise},
    \end{cases} 
\end{equation}
where $\psi_k(t)$ increases by $a(t)$ if the demand in time-slot $t$ is for item $k$ (i.e., $i(t) = k$) and is not covered by (either single or combo) purchases (i.e., $t < \min\{t_k, t_c\}$);
otherwise, it remains unchanged.
We initialize $\psi_k(t)$ as $0$ at the beginning, i.e., $\psi_k(0)=0$.
Instead of using recursion, Lemma~\ref{lemma:psi_k} shows that the indicative cost $\psi_k(t)$ can also be defined iteratively: intuitively, it is the sum of demands for item $k$ during the interval $[1,t]$ when neither a single purchase for item $k$ nor a combo purchase is made. 



\begin{lemma} \label{lemma:psi_k}
The indicative cost $\psi_k(t)$ in Eq.~\eqref{eq:psi_k1} can be equivalently written as 
\begin{equation} \label{eq:psi_k2}
\psi_k(t) \coloneqq \sum_{\tau=1}^t \mathds{1}_{\{i(\tau)=k\}} \cdot \mathds{1}_{\{\tau < t_k\}} \cdot \mathds{1}_{(\tau < t_c)} \cdot a(\tau).
\end{equation}
\end{lemma}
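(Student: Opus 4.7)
The plan is to prove the equivalence by straightforward induction on $t$, since the recursive definition in Eq.~\eqref{eq:psi_k1} updates $\psi_k(t)$ by a single additive term in each step while the proposed closed form in Eq.~\eqref{eq:psi_k2} is simply a telescoping sum of exactly those same additive terms.

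First, I would establish the base case $t=0$: the initialization gives $\psi_k(0)=0$, and the right-hand side of Eq.~\eqref{eq:psi_k2} is an empty sum (from $\tau=1$ to $0$), which also equals $0$. Then, for the inductive step, I would assume the identity holds for $t-1$, namely
\[
\psi_k(t-1) = \sum_{\tau=1}^{t-1} \mathds{1}_{\{i(\tau)=k\}} \cdot \mathds{1}_{\{\tau < t_k\}} \cdot \mathds{1}_{\{\tau < t_c\}} \cdot a(\tau),
\]
and split into two cases according to the recursion in Eq.~\eqref{eq:psi_k1}. In the first case, when $i(t)=k$ and $t<\min\{t_k,t_c\}$, the recursion gives $\psi_k(t) = \psi_k(t-1) + a(t)$, while the $\tau=t$ term of Eq.~\eqref{eq:psi_k2} evaluates to $1 \cdot 1 \cdot 1 \cdot a(t) = a(t)$, so both sides agree after adding this term to the inductive hypothesis. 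In the complementary case, the recursion gives $\psi_k(t) = \psi_k(t-1)$, and the $\tau=t$ term of Eq.~\eqref{eq:psi_k2} vanishes because at least one of $\mathds{1}_{\{i(t)=k\}}$, $\mathds{1}_{\{t<t_k\}}$, or $\mathds{1}_{\{t<t_c\}}$ is zero, so again both sides agree.

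The only subtle point — and it is barely an obstacle — is to verify that the logical complement of the conjunction ``$i(t)=k$ and $t<\min\{t_k,t_c\}$'' in the recursive definition corresponds exactly to at least one of the three indicators in the closed form being zero; this follows by de Morgan's law, since $t<\min\{t_k,t_c\}$ is equivalent to $t<t_k \wedge t<t_c$. Once this correspondence is observed, the induction closes cleanly and the lemma follows. No further machinery or nontrivial estimates are required.
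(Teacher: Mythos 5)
Your induction is correct and matches the paper's argument, which likewise verifies that the one-step increment $\psi'_k(t)-\psi'_k(t-1)$ of the closed form equals $a(t)$ exactly when $i(t)=k$, $t<t_k$, and $t<t_c$, and is zero otherwise. Your version just makes the base case and the de Morgan step explicit; no substantive difference.
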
 
\begin{proof}


The proof is straightforward. Let $\psi'_k(t)$ evolves as Eq.~\eqref{eq:psi_k2}, we have
\begin{equation}
\psi'_k(t) - \psi'_k(t-1) = \mathds{1}_{\{i(t)=k\}} \cdot \mathds{1}_{\{t < t_k\}} \cdot \mathds{1}_{\{t < t_c\}} \cdot a(t).    
\end{equation}
That is, we have $\psi'_k(t) - \psi'_k(t-1) = a(t)$ if three conditions are met: i) $i(t)=k$, ii) $t<t_k$, iii) $t<t_c$; otherwise $\psi'_k(t) - \psi'_k(t-1) = 0$. Clearly, this is the same as Eq.~\eqref{eq:psi_k1}.

\end{proof}
The following two lemmas
state that the combo purchase decision made by our algorithms only depends on the total demand, rather than the specific order of demand arrival. Recall that the total demand of each demand sequence is denoted by $z = \{z_1, \dots, z_K\}$, where $z_k$ denotes the total demand of item $k$, i.e.,
\begin{equation} \label{eq:zk}
z_k = \sum_{t=1}^T \mathds{1}_{\{i(t)=k\}} \cdot a(t).
\end{equation}

\begin{lemma} \label{lemma::combo}
Given any total demand $z$, LADTSR will make the combo purchase iff \ $\sum_{k=1}^K \min \{ z_k, \lambda_{s, k} \} \ge \lambda_c$. 
\end{lemma}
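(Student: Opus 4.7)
The plan is to prove both directions of the iff separately, exploiting that the overall indicative cost $\psi_c(t) = \sum_{k=1}^K \min\{\psi_k(t), \lambda_{s,k}\}$ is monotone non-decreasing in $t$ and that, when no combo purchase is made, its final value depends only on the totals $z_k$ rather than on the arrival order.

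For the forward direction, suppose LADTSR triggers a combo purchase at some time $t_c \le T$. By Algorithm~\ref{ALG} (which invokes Algorithm~\ref{alg:online}), this requires $\psi_c(t_c) \ge \lambda_c$. Since $\psi_k(t_c)$ sums a subset of the demands for item $k$ occurring no later than $t_c$, Lemma~\ref{lemma:psi_k} gives $\psi_k(t_c) \le z_k$, so $\min\{\psi_k(t_c), \lambda_{s,k}\} \le \min\{z_k, \lambda_{s,k}\}$; summing over $k$ yields $\sum_k \min\{z_k, \lambda_{s,k}\} \ge \psi_c(t_c) \ge \lambda_c$.

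For the backward direction, I proceed by contradiction: assume $\sum_k \min\{z_k, \lambda_{s,k}\} \ge \lambda_c$ but $t_c = \infty$, and show that $\psi_c(T) \ge \lambda_c$, contradicting the fact that the combo trigger condition was never met. Using Lemma~\ref{lemma:psi_k}, I split on items. If no single purchase is made for item $k$, then $\psi_k(T) = z_k$. If a single purchase is made at time $t_k \le T$, then at that instant $\psi_k(t_k) \ge \lambda_{s,k}$, which both forces $\psi_k(T) \ge \lambda_{s,k}$ and implies $z_k \ge \lambda_{s,k}$. In either subcase $\min\{\psi_k(T), \lambda_{s,k}\} = \min\{z_k, \lambda_{s,k}\}$, whence $\psi_c(T) = \sum_k \min\{z_k, \lambda_{s,k}\} \ge \lambda_c$. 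Because $\psi_c$ is nondecreasing, there must exist a first time $t' \le T$ at which $\psi_c(t') \ge \lambda_c$, at which point Algorithm~\ref{alg:online} would set $t_c = t'$, contradicting $t_c = \infty$.

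The main obstacle is the bookkeeping in the backward direction: establishing the equality $\min\{\psi_k(T), \lambda_{s,k}\} = \min\{z_k, \lambda_{s,k}\}$ uniformly across the two subcases (single purchase made or not). Once this is done, $\psi_c(T)$ collapses to a quantity that depends only on $z$ and the thresholds, which is precisely the order-independence property that makes this lemma useful for later consistency/robustness arguments.
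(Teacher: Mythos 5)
Your proof is correct and follows essentially the same route as the paper's: the forward direction bounds $\psi_c(t_c)$ from above by $\sum_k \min\{z_k,\lambda_{s,k}\}$ using $\psi_k(t_c)\le z_k$, and the backward direction argues by contradiction, splitting on whether a single purchase is made for each item to show $\psi_c(T)\ge\sum_k\min\{z_k,\lambda_{s,k}\}\ge\lambda_c$. The only cosmetic difference is that you establish equality $\min\{\psi_k(T),\lambda_{s,k}\}=\min\{z_k,\lambda_{s,k}\}$ where the paper only needs the inequality $\ge$; both are valid.
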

\begin{proof}
First, we prove the sufficient condition. If LADTSR makes the combo purchase, clearly we have $t_c \le T$. Then, it turns out that we have
\begin{equation}
    \sum_{k=1}^K \min \{z_k, \lambda_{s,k}\} \overset{(a)}{\ge} \psi_c(t_c) \overset{(b)}{\ge} \lambda_c.
\end{equation}
To see this, for inequality $(a)$, we have
\begin{equation}
\begin{aligned}
z_k &\overset{(c)}{=} \sum_{t=1}^T \mathds{1}_{\{i(t)=k\}} \cdot a(t) \\
&\ge  \sum_{t=1}^{t_c} \mathds{1}_{\{i(t)=k\}} \cdot a(t) \\
&\ge \sum_{t=1}^{t_c} \mathds{1}_{\{i(t)=k\}} \cdot  \mathds{1}_{\{t < t_k\}} \cdot \mathds{1}_{\{t < t_c\}}  \cdot a(t) \\
&\overset{(d)}{=} \psi_k(t_c),
\end{aligned}
\end{equation} 
where $(c)$ is due to the definition of $z_k$ in Eq.~\eqref{eq:zk} and $(d)$ is from Lemma~\ref{lemma:psi_k}.
Furthermore, recall that the definition of the overall indicative cost $\psi_c(t)$ for LADTSR is
\begin{equation}
   \psi_c(t) = \sum_{k=1}^K \min \{\psi_k(t), \lambda_{s, k} \}. \label{eq:psic}
\end{equation}
Therefore, we have
\begin{equation*}
    \sum_{k=1}^K \min \{z_k, \lambda_{s,k}\} \ge \sum_{k=1}^K \min \{\psi_k(t_c), \lambda_{s,k}\} = \psi_c(t_c).
\end{equation*}
The first inequality $(a)$ has been proved.

The inequality $(b)$ holds because if LADTSR makes the combo purchase, we have $\psi_c(t_c) \ge \lambda_c$. Combining the inequalities $(a)$ and $(b)$, we conclude the proof for the sufficient condition. 

Next, we prove the necessary condition. When
$\sum_{k=1}^K \min \{ z_k, \lambda_{s,k} \} \ge \lambda_c$, we prove that LADTSR will make the combo purchase by contradiction. Suppose LADTSR does not make the combo purchase (i.e., $t_c>T$), which implies that we have $\psi_c(T) < \lambda_c$ at the end of the demand sequence, according to Line~\ref{line:combo_begin} of RDTSR (holds for LADTSR as well). However, we can show that $\psi_c(T) \ge \lambda_c$, which leads to the contradiction.

To see this, we first show that for each item $k$, we have
\begin{equation}
\min \{\psi_k(T),\lambda_{s,k}\} \ge \min \{z_k, \lambda_{s,k}\}.  
\end{equation}
We consider two cases: whether LADTSR makes the single purchase for item $k$ or not.
\begin{enumerate}[i)]
    \item If LADTSR makes the single purchase for $k$, we have $\psi_k(T) \ge \lambda_{s,k}$ according to Line~\ref{line:single_begin} of RDTSR (holds for LADTSR as well).
Therefore, we have $\min \{\psi_k(T), \lambda_{s,k} \} = \lambda_{s,k} \ge \min \{z_k, \lambda_{s,k}\}$. 
\item Otherwise, if it does not make the single purchase for $k$ (i.e., $t_k>T$), we have
\begin{equation} \label{eq:psik}
\begin{aligned}
    \psi_k(T) &\overset{(a)}{=} \sum_{t=1}^T \mathds{1}_{\{i(t)=k\}} \cdot \mathds{1}_{\{t<t_k\}} \cdot \mathds{1}_{\{t<t_c\}} \cdot a(t) \\
    &\overset{(b)}{=} \sum_{t=1}^T \mathds{1}_{\{i(t)=k\}}\cdot a(t) \overset{(c)}{=} z_k,
\end{aligned}
\end{equation}
where $(a)$ is due to the definition of $\psi_k(t)$ in Eq.~\eqref{eq:psi_k2}, $(b)$ is due to the combo purchase and single purchase for $k$ is never made according to the assumption (i.e., $t_k, t_c > T$), and $(c)$ is due to the definition of $z_k$ in Eq.~\eqref{eq:zk}. In addition, we have $\psi_k(T) \le \lambda_{s,k}$ because the algorithm does not make the combo purchase and the single purchase for any item $k$. Therefore, we have $\min \{\psi_k(T), \lambda_{s,k} \} = \psi_k(T) = z_k \ge \min \{z_k, \lambda_{s,k}\}$.
\end{enumerate}
Now, we have $\min \{\psi_k(T),\lambda_{s,k}\} \ge \min \{z_k, \lambda_{s,k}\}$ for any item $k$ by combining above two cases. 
Finally, we have
\begin{equation} \label{eq:contra}
\begin{aligned}
\psi_c(T) &\overset{(a)}{=} \sum_{k=1}^K \min \{\psi_k(T), \lambda_{s,k}\} \\
&\overset{(b)}{\ge} \sum_{k=1}^K \min \{z_k, \lambda_{s,k}\} \overset{(c)}{\ge} \lambda_c,
\end{aligned}   
\end{equation}
where $(a)$ is due to the definition of $\psi_c(t)$ in Eq.~\eqref{eq:psic}, $(b)$ is due to $\min \{\psi_k(T),\lambda_{s,k}\} \ge \min \{z_k, \lambda_{s,k}\}$ for each item $k$, and $(c)$ is due to the necessary condition.
Eq.~\eqref{eq:contra} leads to the contradiction that $\psi_c(T) < \lambda_c$.
\end{proof}

\begin{lemma} \label{lemma::combo2}
Given any total demand $z$, RDTSR will make the combo purchase iff \ $\sum_{k=1}^K \min \{ z_k, \lambda_{s} \} \ge \lambda_c$.
\end{lemma}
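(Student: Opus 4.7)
The plan is to observe that this lemma is the natural specialization of Lemma~\ref{lemma::combo} to the case where all item-specific single-purchase thresholds coincide with a common value $\lambda_s$. Indeed, RDTSR can be viewed as running LADTSR with the choice $\lambda_{s,k} = \lambda_s$ for every $k \in \{1, \dots, K\}$; under this identification, the overall indicative cost from Eq.~\eqref{psic} used by RDTSR matches exactly the modified version of Eq.~\eqref{psic} used by LADTSR. Hence the cleanest route is to invoke Lemma~\ref{lemma::combo} directly and substitute $\lambda_{s,k} = \lambda_s$ throughout.

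For completeness, I would also sketch a self-contained argument that mirrors the two-direction structure of Lemma~\ref{lemma::combo}. For the sufficient direction, suppose RDTSR makes the combo purchase, so $t_c \le T$ and $\psi_c(t_c) \ge \lambda_c$ by the triggering condition in Line~\ref{line:combo_begin} of Algorithm~\ref{alg:online}. Using Lemma~\ref{lemma:psi_k} together with the definition of $z_k$ in Eq.~\eqref{eq:zk}, I would show $z_k \ge \psi_k(t_c)$ for each item $k$, and hence
\[
    \sum_{k=1}^K \min\{z_k, \lambda_s\} \;\ge\; \sum_{k=1}^K \min\{\psi_k(t_c), \lambda_s\} \;=\; \psi_c(t_c) \;\ge\; \lambda_c.
\]

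For the necessary direction, I would argue by contradiction: assume $\sum_{k=1}^K \min\{z_k, \lambda_s\} \ge \lambda_c$ but that RDTSR never triggers the combo purchase, i.e., $t_c > T$ and therefore $\psi_c(T) < \lambda_c$. The key step is to show $\min\{\psi_k(T), \lambda_s\} \ge \min\{z_k, \lambda_s\}$ for each item $k$, split into two cases depending on whether RDTSR makes a single purchase for item $k$. If it does, then $\psi_k(T) \ge \lambda_s$ by the single-purchase triggering condition (Line~\ref{line:single_begin}), so $\min\{\psi_k(T), \lambda_s\} = \lambda_s$; if it does not, then by Lemma~\ref{lemma:psi_k} combined with the assumption $t_k, t_c > T$, we get $\psi_k(T) = z_k$, and furthermore $\psi_k(T) \le \lambda_s$, yielding $\min\{\psi_k(T), \lambda_s\} = z_k = \min\{z_k, \lambda_s\}$. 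Summing over $k$ gives $\psi_c(T) \ge \sum_k \min\{z_k, \lambda_s\} \ge \lambda_c$, contradicting $\psi_c(T) < \lambda_c$.

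The proof involves no real obstacles since essentially all the hard work is already carried out in Lemma~\ref{lemma::combo}; the main task is simply to recognize that substituting $\lambda_{s,k}$ by the uniform $\lambda_s$ preserves every step verbatim. If space is a concern, presenting the result purely as a corollary of Lemma~\ref{lemma::combo} with a one-line justification should suffice.
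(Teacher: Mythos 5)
Your proposal is correct and matches the paper's own proof, which likewise observes that RDTSR is the special case of LADTSR with $\lambda_{s,1} = \dots = \lambda_{s,K} = \lambda_s$ and invokes Lemma~\ref{lemma::combo} directly. The self-contained two-direction sketch you include is also faithful to how Lemma~\ref{lemma::combo} itself is proved, so nothing is missing.
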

\begin{proof}
Recall that in RDTSR, the single purchase for each item is the same (i.e., $\lambda_s$), which is a specific instance of that of LADTSR by letting $\lambda_{s, 1} = \dots = \lambda_{s, K}=\lambda_s$. Thus, the proof of Lemma~\ref{lemma::combo} also holds in this case.
\end{proof}

The last lemma in this section is to analyze the cost of the optimal offline algorithm. In Lemma~\ref{lemma:opt}, we will show that the cost of the optimal offline algorithm remains the same over all the demand sequences with the same total demand.
\begin{lemma} \label{lemma:opt}
Given any total demand $z$,  the cost of the optimal offline algorithm over all the demand sequences with the same total demand $z$ is the same. Let $\mathit{OPT}(z)$ denote the cost of the optimal offline algorithm on total demand $z$ and we have
\begin{equation} \label{eq:OPTcost}
\mathit{OPT}(z) = \min \{C_c, \sum_{k=1}^K \min \{z_k, C_s\} \}.
\end{equation}
\end{lemma}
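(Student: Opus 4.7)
The plan is to show that the optimal offline cost decomposes into a clean two-level choice whose value depends only on the vector $z$ and not on the temporal ordering of the arrivals in $\mathbf{D}$. Since the offline algorithm has full knowledge of $\mathbf{D}$, I will argue that any optimal solution can be rearranged into a canonical form without increasing cost, and then evaluate that canonical form directly.

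First I would establish a ``deferral/advance'' normalization. The key structural observations are: (i) if the offline algorithm ever makes a combo purchase at some time $t_c \le T$, it can instead make it at time $t_c = 1$ without increasing cost, because doing so covers strictly more demands and the fee $C_c$ is paid exactly once regardless; (ii) if the offline algorithm makes a single purchase for item $k$ at time $t_k$, we may as well assume $t_k$ equals the first time-slot in which a demand for item $k$ arrives (again without increasing cost, since it pays the same $C_s$ and covers at least as much); (iii) it is never strictly beneficial to make both a combo purchase and a single purchase for the same item, so we may assume that in the canonical solution, whenever a combo purchase is used no single purchases are used. Each of these transformations preserves feasibility for Constraint~\eqref{fulfill} and is easy to verify by direct case analysis on the cost expression in Eq.~(1).

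Next I would evaluate the canonical solution. After normalization the offline algorithm is characterized by a single binary choice. If it commits to a combo purchase, then by (i) and (iii) the total cost is exactly $C_c$, and all demands are covered automatically (so no rentals or single purchases are incurred). If it does not commit to a combo purchase, then for each item $k$ its cost decouples: by (ii), making a single purchase for $k$ covers every subsequent demand for $k$ at cost $C_s$, while refraining from a single purchase forces the algorithm to rent every demand for item $k$ at total cost $\sum_{t : i(t)=k} a(t) = z_k$. The per-item minimum is $\min\{z_k, C_s\}$, and summing over $k$ gives a no-combo cost of $\sum_{k=1}^K \min\{z_k, C_s\}$. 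Taking the smaller of the two options yields $\mathit{OPT}(z) = \min\{C_c,\ \sum_{k=1}^K \min\{z_k, C_s\}\}$, which manifestly depends only on $z$, establishing the order-invariance claim as an immediate corollary.

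The main obstacle I anticipate is making the normalization step airtight. In particular, one needs to carefully justify that shifting $t_c$ to $1$ never worsens the rental term $\sum_{t=1}^T a(t) r(t)$, because intuitively one might fear it conflicts with previously planned rentals; the cleanest way is to show directly that the modified decisions $r'(t) = 0$ for all $t$, $t_c' = 1$, $t_k' = \infty$ form a feasible solution of cost exactly $C_c$, and this upper bounds the cost of any solution that uses a combo purchase (because the original solution's cost is at least $C_c$). The analogous argument for single purchases is even simpler. Everything else is routine arithmetic.
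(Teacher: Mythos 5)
Your proposal is correct and follows essentially the same route as the paper's proof: reduce to a canonical solution (combo at time $1$ or never; single purchase at the first arrival or never), observe that the no-combo case decouples per item into $\min\{z_k, C_s\}$, and take the minimum with $C_c$. You simply spell out the normalization steps that the paper treats as immediate.
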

\begin{proof}
First, we consider the offline optimal cost without the combo purchase option. In this case, there are two options to fulfill the demand for each item: renting or purchasing. Clearly, the optimal choice for each item $k$ is as follows: when $z_k < C_s$, always rent; otherwise, when $z_k\ge C_s$, purchase at the beginning. Therefore, the total optimal cost for all the items is $\sum_{k=1}^K \min \{z_k, C_s\}$. 

Next, we consider the offline optimal cost with the combo purchase option. As the combo purchase can fulfill demands for all the items, the optimal choice can only be one of the two cases: either making the combo purchase at the beginning or never making the combo purchase at all. In the first case, the cost is $C_c$. In the second case, as we previously analyzed, the optimal cost without the combo purchase is $\sum_{k=1}^K \min \{z_k, C_s\}$. Therefore, the cost of the optimal offline algorithm is one of the minimum values among the two cases, i.e., $\min \{C_c, \sum_{k=1}^K \min \{z_k, C_s\} \}$.  
\end{proof}
\section{Proof of Theorem~\ref{thm:cr}}
\label{app:them:cr}


\begin{proof}

Our goal is to show that by choosing the thresholds $\lambda_s$ and $\lambda_c$ properly, the CR of RDTSR is at most $3 - \frac{1}{C_s} - \frac{1}{C_c}(2-\frac{1}{C_s})$.
In particular, when $\lambda_s = C_s$ and $\lambda_c = C_c$, RDTSR can achieve a CR of $3 - \frac{1}{C_s} - \frac{1}{C_c}(2-\frac{1}{C_s})$. 

The outline of the proof is as follows. 
Generally, our analysis has two steps:
\begin{enumerate}[1)]
    \item We derive the upper bound of CR over all the demand sequences, which is a function of two thresholds: $\lambda_s$ and $\lambda_c$ (see Eq.~\eqref{eq:up}); \\
    \item To minimze the CR by optimizing $\lambda_s$ and $\lambda_c$, we show that when $\lambda_s = C_s$ and $\lambda_c = C_c$, the CR is at most $3 - \frac{1}{C_s} - \frac{1}{C_c}(2-\frac{1}{C_s})$.
\end{enumerate}


\textbf{Step 1):} Since directly analyzing the CR on demand sequences is hard, we break it down into three substeps:
\begin{enumerate}[{1}a)]
    \item We derive the upper bound of CR over all the demand sequences with the same total demand $z$, which is a function of total demand $z$ and two thresholds $\lambda_s, \lambda_c$ (see Lemma~\ref{lemma:ALG}); \\
    \item We define a standard total demand (see Definition~\ref{definition}) to simplify the upper bound we obtained in Step 1a) (see Lemma~\ref{lemma:zstd}). \\
    \item We analyze the upper bound of CR that holds for any standard total demand, which is also the upper bound of CR over all the possible demand sequences. This upper bound is a function of $\lambda_s$ and $\lambda_c$ (i.e., Eq.~\eqref{eq:up}).
\end{enumerate}


\textbf{Step 1a):} We present the definition of CR on a total demand $z$ as follows. Let $\mathcal{D}_z$ denote the set of all the demand sequences that have the same total demand $z$, i.e., $\mathcal{D}_z = \{\mathbf{D} \ | \ \text{$\mathbf{D}$ has a total demand $z$} \}$. Then the CR on a total demand $z$, denoted by $\mathit{CR}(z)$, can be defined as the worst ratio of the cost of the online algorithm to that of the optimal offline algorithm over all the demand sequences with the same total demand $z$, i.e.,
\begin{equation} \label{eq:crz}
    CR(z) \coloneqq \mathop {\max }\limits_{\mathbf{D} \in {\mathcal{D}_z}} \frac{\mathit{ALG}(\mathbf{D})}{\mathit{OPT}(\mathbf{D})},
\end{equation}
where $\mathit{ALG}(\mathbf{D})$ represents the cost of RDTSR and $\mathit{OPT}(\mathbf{D})$ represents the cost of the optimal offline algorithm, both on the demand sequence $\mathbf{D}$.

Furthermore, according to Lemma~\ref{lemma::combo}, the cost of the optimal offline algorithm is the same over all the demand sequences in $\mathcal{D}_z$ (denoted as $\mathit{OPT}(z)$ in Eq.~\eqref{eq:OPTcost}). Therefore, we can rewrite $\mathit{CR}(z)$ as
\begin{equation}
    \mathit{CR}(z) =  \frac{\max_{\mathbf{D} \in \mathcal{D}_z}\mathit{ALG}(\mathbf{D})}{\mathit{OPT}(z)}.
\end{equation}
To find the upper bound for $\mathit{CR}(z)$, we can first focus on the upper bound for the cost of RDTSR on the total demand $z$, denoted by $U_{\mathrm{ALG}}(z)$, i.e.,
\begin{equation}
    \max_{\mathbf{D} \in \mathcal{D}_z} \mathit{ALG}(\mathbf{D}) \le U_{\mathrm{ALG}}(z).
\end{equation}
With $U_{\mathrm{ALG}}(z)$, we can define the upper bound for the CR on total demand $z$ (denoted by $U_{\mathrm{CR}}(z)$) as 
\begin{equation} \label{eq:UCR}
U_{\mathrm{CR}}(z) \coloneqq \frac{U_{\mathrm{ALG}}(z)}{\mathit{OPT}(z)},
\end{equation}
which is greater or equal to $\mathit{CR}(z)$.
To derive the value of $U_{CR}(z)$, we only need to analyze $U_{\mathrm{ALG}}(z)$ because $\mathit{OPT}(z)$ is already obtained in Lemma~\ref{lemma:opt}. 
Given a total demand $z$, although the cost of RDTSR can vary across different demand sequences, from Lemma~\ref{lemma::combo2}, we know that the combo purchase decision only depends on the total demand $z$. Therefore, we can consider two cases: if RDTSR makes the combo purchase or not. We present the result of $U_{\mathrm{ALG}}(z)$ in Lemma~\ref{lemma:ALG}.

\begin{lemma} \label{lemma:ALG}
Given a total demand $z$, the upper bound for RDTSR's cost $U_{\mathrm{ALG}}(z)$ is as follows:
\begin{enumerate}[i)]
    \item If RDTSR does not make the combo purchase, we have
\begin{equation} \label{eq:ALGcase1}
U_{\mathrm{ALG}}(z) = \sum_{k=1}^K U_{\mathrm{ALG}}^k(z),    
\end{equation}
where $U_{\mathrm{ALG}}^k(z)$ denotes the upper bound of the cost of RDTSR for item $k$ on total demand $z$ and we have
\begin{equation}
U_{\mathrm{ALG}}^k(z) = 
\begin{cases} \label{eq:case1}
    z_k, & z_k < \lambda_s,\\
    \lambda_s - 1 + C_s, & \text{otherwise};
\end{cases}     
\end{equation}
\item If RDTSR makes the combo purchase, we have
\begin{equation} \label{eq:case2}
\begin{aligned} 
U_{\mathrm{ALG}}(z) = &\lambda_c - 1 + C_c \\
&+ \min \{\sum_{k=1}^K \mathds{1}_{\{z_k \ge \lambda_s\}}, \frac{\lambda_c - 1}{\lambda_s}\} (C_s - 1).    
\end{aligned}
\end{equation}
\end{enumerate}
\end{lemma}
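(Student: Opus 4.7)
The plan is to treat the two cases separately, bounding the cost of RDTSR by accounting for the rental cost accrued before each purchase event and using the integrality of the thresholds and demand increments.

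For case i), since no combo purchase occurs, the cost splits into independent contributions from each item $k$. If $z_k < \lambda_s$, then by the evolution rule in Eq.~\eqref{eq:psi_k1} we have $\psi_k(t) \leq z_k < \lambda_s$ for all $t$, so no single purchase for item $k$ is ever triggered and every demand for $k$ is rented, giving a total cost of $z_k$ for item $k$. If $z_k \geq \lambda_s$, a single purchase is triggered at some time $t_k$; the triggering rule (Lines~\ref{line:combo_begin}--\ref{line:single_begin}) implies $\psi_k(t_k-1) < \lambda_s$, and since demands and $\lambda_s$ are integers, $\psi_k(t_k-1) \leq \lambda_s - 1$. The demand at $t_k$ is absorbed by the purchase, so the cost attributable to item $k$ is $\psi_k(t_k-1) + C_s \leq \lambda_s - 1 + C_s$. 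Summing over items yields Eq.~\eqref{eq:case1}.

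For case ii), let $S \coloneqq \{k : t_k < t_c\}$ denote the items for which a single purchase precedes the combo purchase. The total cost decomposes as
\[
C_c \;+\; |S|\,C_s \;+\; \sum_{k \in S} \psi_k(t_k - 1) \;+\; \sum_{k \notin S} \psi_k(t_c - 1),
\]
since for $k \in S$ the rental prior to $t_k$ equals $\psi_k(t_k-1)$, and for $k \notin S$ the rental prior to the combo equals $\psi_k(t_c-1)$. As in case i), $\psi_k(t_k-1) \leq \lambda_s - 1$ for $k \in S$. The key structural observation is that after a single purchase for item $k$, $\psi_k$ stops evolving by Eq.~\eqref{eq:psi_k1}, so $\psi_k(t_c - 1) = \psi_k(t_k) \geq \lambda_s$, making $\min\{\psi_k(t_c-1),\lambda_s\} = \lambda_s$; meanwhile for $k \notin S$ we have $\psi_k(t_c - 1) < \lambda_s$, making $\min\{\psi_k(t_c-1),\lambda_s\} = \psi_k(t_c-1)$. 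The combo triggering condition together with integrality gives
\[
\psi_c(t_c - 1) \;=\; |S|\,\lambda_s + \sum_{k \notin S} \psi_k(t_c - 1) \;\leq\; \lambda_c - 1.
\]
Substituting these bounds into the cost decomposition and simplifying yields $C_c + \lambda_c - 1 + |S|(C_s - 1)$.

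It remains to bound $|S|$. Trivially, a single purchase for item $k$ requires $\psi_k$ to reach $\lambda_s$, which necessitates $z_k \geq \lambda_s$, so $|S| \leq \sum_{k=1}^K \mathds{1}_{\{z_k \geq \lambda_s\}}$. From the displayed inequality above, $|S|\,\lambda_s \leq \lambda_c - 1$, so $|S| \leq (\lambda_c - 1)/\lambda_s$. Taking the tighter of the two bounds produces Eq.~\eqref{eq:case2}. The main delicate point in the argument will be the coupling between the two purchase types in case ii)---specifically, verifying that after a single purchase for item $k$, its contribution $\min\{\psi_k(\cdot),\lambda_s\}$ to the combo indicative cost stays pinned at exactly $\lambda_s$, which is what makes the bound $|S|\lambda_s \leq \lambda_c - 1$ available and allows the $|S|\lambda_s$ terms to cancel cleanly in the final simplification.
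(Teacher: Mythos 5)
Your proof is correct and follows essentially the same approach as the paper: a case split on whether the combo purchase occurs, bounding per-item rental costs by the indicative costs $\psi_k$, and using the threshold-triggering conditions together with integrality to get $\psi_k(t_k-1)\le\lambda_s-1$ and $\psi_c(t_c-1)\le\lambda_c-1$, then bounding the number of single purchases by the two quantities in the $\min$. The only step left implicit is that in case i) with $z_k\ge\lambda_s$ a single purchase is necessarily triggered (otherwise $\psi_k(T)=z_k\ge\lambda_s$ gives a contradiction), which the paper spells out explicitly.
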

We prove Lemma~\ref{lemma:ALG} in Appendix~\ref{app:ALG}.

\textbf{Step 1b):} From Eqs.~\eqref{eq:case1} and~\eqref{eq:case2}, we can see that $U_\mathrm{CR}(z)$ depends on the total demand $z$, which has $K$  variables (i.e., $z_1, \dots, z_K$).
When $K$ is large, it is still challenging for us to analyze the CR on the total demand $z$.
Hence, in the following, we establish a special total demand called \emph{standard total demand} $z_{\mathrm{std}}$ to analyze $U_{\mathrm{CR}}(z)$. The standard total demand has two properties: 1) all the standard total demand can be expressed by three variables (see Definition~\ref{definition}), and 2) for any total demand $z$, there exists a standard total demand $z_{\mathrm{std}}$, which satisfies $U_{\mathrm{CR}}(z) \le U_{\mathrm{CR}}(z_{\mathrm{std}})$.

Next, we give the definition of the standard total demand.



\begin{definition} \label{definition}
A total demand $z$ is said to be the standard total demand $z_{\mathrm{std}}(m,n,x)$ if it satisfies:
\begin{enumerate}[i)]
    \item $m$ items have a total demand of at least $C_s$;
    \item $n$ items have a total demand equal to $\lambda_s$;
    \item among all the remaining items, each item has a total demand within the interval $[0, \lambda_s)$. Let $x$ represent the sum of their total demands;
\end{enumerate}
here $m, n, x$ can be any non-negative integers.
\end{definition}


Next, we show a specific way to construct a standard total demand given any total demand $z$. The construction approach is presented in Algorithm~\ref{alg::construct}. First, we start with a total demand $z' = \{z'_1, \dots, z'_K\}$, which equals to $z$ (i.e., $z'_k=z_k$ for item $k=1, \dots, K$). Then, for each item $k$, we modify $z'$ as follows:
\begin{enumerate}[i)]
    \item If $\lambda_s \le z'_k <  C_s$, set $z'_k$ as $\lambda_s$. 
    \item All the remaining items remain the same.
\end{enumerate}
\setcounter{algocf}{2}
\begin{algorithm}[!tb]
\SetAlgoLined
\SetKwInOut{Input}{Input}\SetKwInOut{Output}{Output}
\Input{$z$}
\Output{$z'$}
$z' \leftarrow z$\;

\While{there exists item $k$ s.t. $\lambda_s \le z'_k < C_s$}{
\label{line:loop2start}
$z'_k \gets \lambda_s$\;
\label{line:loop2end}
}
\caption{Construct $z_{\mathrm{std}}(m,n,x)$ from $z$}
\label{alg::construct}
\end{algorithm}

By Lemma~\ref{lemma:zstd}, we claim that for any total demand $z$, following the above construction approach, we can find a standard total demand to bound $U_{\mathrm{CR}}(z)$.

\begin{lemma} \label{lemma:zstd}
Given any total demand $z$, the total demand $z'$ constructed by Algorithm~\ref{alg::construct} is a standard total demand. Furthermore, we have  
\begin{equation}
  U_{\mathrm{CR}}(z) \le U_{\mathrm{CR}}(z').
\end{equation}
\end{lemma}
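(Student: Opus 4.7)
The plan has two parts: first to verify that the vector $z'$ produced by Algorithm~\ref{alg::construct} fits Definition~\ref{definition}, and then to show that each elementary modification performed by the algorithm can only increase (or leave unchanged) the upper bound $U_{\mathrm{CR}}$ of the competitive ratio. Since $U_{\mathrm{CR}}(z) = U_{\mathrm{ALG}}(z)/\mathit{OPT}(z)$, it suffices to track how the numerator $U_{\mathrm{ALG}}$ and the denominator $\mathit{OPT}$ change across one iteration of the while-loop.

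For the first part, I would describe the terminal state of $z'$ by partitioning the indices. Items with $z_k \ge C_s$ are never touched by Lines~\ref{line:loop2start}--\ref{line:loop2end} and contribute the count $m$; items originally satisfying $\lambda_s \le z_k < C_s$ are reset to $\lambda_s$ and, together with those that already equaled $\lambda_s$, contribute the count $n$; every remaining item has $z'_k < \lambda_s$ and their values sum to some nonnegative integer $x$. Matching this partition with the three clauses of Definition~\ref{definition} directly shows $z' = z_{\mathrm{std}}(m,n,x)$.

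For the second part, the plan is an induction on the number of iterations of Algorithm~\ref{alg::construct}. It suffices to consider a single modification that replaces $z_k = v$ with $v \in [\lambda_s, C_s)$ by $\lambda_s$, and to argue that $U_{\mathrm{ALG}}$ stays the same while $\mathit{OPT}$ does not increase. For the numerator, I would invoke Lemma~\ref{lemma:ALG} and split into the two cases. Whether or not RDTSR makes the combo purchase is decided by $\sum_{k=1}^K \min\{z_k,\lambda_s\}$ (Lemma~\ref{lemma::combo2}), and since $\min\{v,\lambda_s\} = \min\{\lambda_s,\lambda_s\} = \lambda_s$, this sum is unchanged, so the two cases cannot swap. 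In the no-combo case, $U_{\mathrm{ALG}}^k = \lambda_s - 1 + C_s$ before and after because both $v$ and $\lambda_s$ lie in $[\lambda_s,\infty)$; in the combo case, the indicator $\mathds{1}_{\{z_k \ge \lambda_s\}}$ is $1$ in both states, so the bracketed minimum in Eq.~\eqref{eq:case2} is unaffected. For the denominator, Eq.~\eqref{eq:OPTcost} gives $\mathit{OPT} = \min\{C_c, \sum_k \min\{z_k,C_s\}\}$, and since $v < C_s$ the term $\min\{z_k,C_s\}$ drops from $v$ to $\lambda_s$, a decrease of $v - \lambda_s \ge 0$; taking the minimum with $C_c$ preserves monotonicity, so $\mathit{OPT}$ can only decrease. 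Therefore the ratio is nondecreasing under each iteration, and chaining the iterations yields $U_{\mathrm{CR}}(z) \le U_{\mathrm{CR}}(z')$.

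The main obstacle I anticipate is bookkeeping rather than depth: one has to be careful that the case distinction of Lemma~\ref{lemma:ALG} is preserved throughout the induction and that the trivial iteration $v = \lambda_s$ (where nothing changes) is interpreted correctly in the while-loop. Once the combo-purchase decision is shown to be invariant under the modification via Lemma~\ref{lemma::combo2}, everything else reduces to a direct comparison of the closed-form expressions in Eqs.~\eqref{eq:case1}, \eqref{eq:case2}, and \eqref{eq:OPTcost}.
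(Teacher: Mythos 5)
Your proposal is correct and follows essentially the same route as the paper: the paper also isolates a single loop iteration (its Lemma~\ref{std2}), shows the combo-purchase decision is invariant because $\sum_k \min\{z_k,\lambda_s\}$ is unchanged, concludes $U_{\mathrm{ALG}}$ is identical in both cases of Lemma~\ref{lemma:ALG}, and shows $\mathit{OPT}$ can only decrease via Eq.~\eqref{eq:OPTcost}. The only cosmetic difference is that the paper packages the per-iteration comparison as a separate lemma rather than an explicit induction.
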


We prove Lemma~\ref{lemma:zstd} in Appendix~\ref{app:zstd}. 

\textbf{Step 1c):} Given any total demand $z$, we use $z_{\mathrm{std}}(m, n, x)$ to denote the constructed standard total demand. To obtain $U_{\mathrm{CR}}(z_{\mathrm{std}}(m, n, x))$, we first derive $\mathit{OPT}(z_{\mathrm{std}}(m,n,x))$ and $U_{\mathrm{ALG}}(z_{\mathrm{std}}(m,n,x))$, respectively.
For $\mathit{OPT}(z_{\mathrm{std}}(m,n,x))$, 
we can rewrite Eq.~\eqref{eq:OPTcost} as
\begin{equation}\label{eq:OPT}
\begin{aligned}
\mathit{OPT}(z_{\mathrm{std}}(m,n,x)) &= \min \{C_c, \sum_{k=1}^K \min \{z_k, C_s\} \} \\
&=\min \{C_c, mC_s + n\lambda_s +x\},    
\end{aligned}
\end{equation}
where the second term $mC_s + n\lambda_s +x$ is because there are $m$ items with a total demand of at least $C_s$ and the remaining items have a total demand no greater than $C_s$ and the sum of their total demands is $n\lambda_s + x$ (in fact, $n$ items with a total demand equal to $\lambda_s \le C_s$ and the other items with a total demand less than $\lambda_s$).


For $U_{\mathrm{ALG}}(z_{\mathrm{std}}(m,n,x))$, if RDTSR does not make the combo purchase,  
according to Lemma~\ref{lemma:ALG}, 
we have
\begin{equation} \label{eq:ALGcaseA}
U_{\mathrm{ALG}}(z_{\mathrm{std}}(m,n,x)) = (m+n)(\lambda_s - 1 + C_s) + x,
\end{equation}
where the first item is because there are $m+n$ items with a total demand of at least $\lambda_s$, and the second item is due to for the remaining items, the total demand for each item is less than $\lambda_s$ ($\lambda_s \le C_s$) and the sum of their total demands is $x$.
On the other hand, if RDTSR makes the combo purchase, 
according to Lemma~\ref{lemma:ALG},
we have
\begin{equation} \label{eq:ALGcaseB}
\begin{aligned}
U_{\mathrm{ALG}}(z_{\mathrm{std}}(m,n,x))  &= \lambda_c - 1 + C_c + \\
&\quad \min \{m+n, \frac{\lambda_c - 1}{\lambda_s}\} (C_s - 1),    
\end{aligned}
\end{equation}
where the term $m+n$ is due to there are $m+n$ items having a total demand greater or equal to $\lambda_s$.

Next, we show how to find an upper bound of CR over all the standard total demand. For any standard total demand $z_{\mathrm{std}}(m,n,x)$,
we consider two cases: A) RDTSR makes the combo purchase on it, and B) RDTSR does not make the combo purchase on it. 

\textbf{Case A):} In this case, according to Eqs.~\eqref{eq:OPT} and~\eqref{eq:ALGcaseB}, $U_{\mathrm{CR}}$ becomes
\begin{equation}
\begin{aligned}
&U_{\mathrm{CR}}(z_{\mathrm{std}}(m,n,x)) \\
&\quad =  \frac{\lambda_c - 1 + C_c + \min \Big \{m+n,\frac{\lambda_c - 1}{\lambda_s}\Big \} (C_s - 1)}{\min \{C_c, mC_s + n\lambda_s + x\}}.
\end{aligned}   \label{ineq:crB1}
\end{equation}
Because the algorithm makes the combo purchase, according to Lemma~\ref{lemma::combo}, we have $\sum_{k=1}^K \min \{z_k, \lambda_s\} \ge \lambda_c$. Then for the standard total demand, we have
\begin{equation} \label{con:combo}
   \sum_{k=1}^K \min \{z_k, \lambda_s\} = (m + n)\lambda_s + x \ge \lambda_c.
\end{equation}

Observe that we can set $m=0$ without reducing $U_{\mathrm{CR}}$. To see this, for any $z_{\mathrm{std}}(m,n,x)$ with $m\neq0$, there exists another standard total demand $z'_{\mathrm{std}}(m',n',x')$ such that $m'=0$, $n'=m+n$, and $x'=x$. First, we have $(m' + n')\lambda_s + x' = (m + n)\lambda_s + x \ge \lambda_c$, which implies that RDTSR also makes the combo purchase on $z'_{\mathrm{std}}(m',n',x')$ according to Lemma~\ref{lemma::combo}. Hence, we have
\begin{equation}
\begin{aligned}
& U_{\mathrm{CR}}(z'_{\mathrm{std}}(m',n',x')) \\
&\quad = \frac{\lambda_c - 1 + C_c + \min \Big \{m' + n',  \frac{\lambda_c - 1}{\lambda_s} \Big \} (C_s - 1)}{\min \{C_c, m'C_s + n'\lambda_s + x\}} \\
&\quad = \frac{\lambda_c - 1 + C_c + \min \Big \{m+n, \frac{\lambda_c - 1}{\lambda_s}\Big \} (C_s - 1)}{\min \{C_c, (m+n)\lambda_s + x\}} \\
&\quad \overset{(a)}{\ge}
\frac{\lambda_c - 1 + C_c + \min \Big \{m+n, \frac{\lambda_c - 1}{\lambda_s} \Big \} (C_s - 1)}{\min \{C_c, mC_s + n\lambda_s + x\}} \\
&\quad = U_{\mathrm{CR}}(z_{\mathrm{std}}(m,n,x)),
\end{aligned}
\end{equation}
where $(a)$ is due to $(m+n)\lambda_s + x \le mC_s + n\lambda_s + x$ because we assume $\lambda_s \le C_s$.
Therefore, for any $z_{\mathrm{std}}(m,n,x)$ in Case A with $m > 0$, there exists another standard total demand $z'_{\mathrm{std}}(m',n',x')$ in Case A with $m'=0$ and $U_{\mathrm{CR}}(z'_{\mathrm{std}}(m',n',x')) \ge U_{\mathrm{CR}}(z_{\mathrm{std}}(m,n,x))$. 

Hence, Eq.~\eqref{ineq:crB1} is upper bounded by
\begin{equation}
\begin{aligned}
&U_{\mathrm{CR}}(z_{\mathrm{std}}(m,n,x)) \le U_{\mathrm{CR}}(z'_{\mathrm{std}}(m',n',x'))  \\
&\quad =  \frac{\lambda_c - 1 + C_c + \min \Big \{n',\frac{\lambda_c - 1}{\lambda_s} \Big \} (C_s - 1)}{\min \{C_c, n'\lambda_s + x\}}, \label{ineq:crB2}     
\end{aligned}
\end{equation}
and Eq.~\eqref{con:combo} becomes
\begin{equation} \label{con:combo2}
    n'\lambda_s + x \ge \lambda_c.
\end{equation}
According to Eq.~\eqref{con:combo2}, the denominator of Eq.~\eqref{ineq:crB2} is at least $\min \{C_c, \lambda_c\}$. 
We can rewrite Eq.~\eqref{ineq:crB2} as
\begin{equation}
\begin{aligned}
 &U_{\mathrm{CR}}(z_{\mathrm{std}}(m,n,x)) \\
 &\quad \le \frac{\lambda_c - 1 + C_c + \min \Big \{n',\frac{\lambda_c - 1}{\lambda_s} \Big \} (C_s - 1)}{\min \{C_c, \lambda_c\}}.    
\end{aligned}
\label{ineq:crB4}
\end{equation}
As $\min \{n',\frac{\lambda_c - 1}{\lambda_s} \} \le \frac{\lambda_c - 1}{\lambda_s}$, the numerator is at most $\lambda_c - 1 + C_c + \frac{\lambda_c - 1}{\lambda_s} (C_s - 1)$. Therefore, we have
\begin{equation}
U_{\mathrm{CR}}(z_{\mathrm{std}}(m,n,x)) \le  \frac{\lambda_c - 1 + C_c + \frac{\lambda_c - 1}{\lambda_s}(C_s - 1)}{\min \{C_c, \lambda_c\}}. \label{ineq:crB5}
\end{equation}

\textbf{Case B):} In this case, we consider two subcases: I) The optimal offline algorithm does not make the combo purchase. II) It makes the combo purchase.

\textbf{Case B-I):} According to Eqs.~\eqref{eq:OPT} and~\eqref{eq:ALGcaseA}, the upper bound $U_{\mathrm{CR}}(z_{\mathrm{std}})$ becomes
\begin{equation}
\begin{aligned}
U_{\mathrm{CR}}(z_{\mathrm{std}}(m,n,x)) &= \frac{(m+n)(\lambda_s-1+C_s)+x}{mC_s + n\lambda_s + x} \\
&\overset{(a)}{\le} \frac{(m+n)(\lambda_s-1+C_s)+x}{(m+n)\lambda_s + x} \\
&\le \frac{(m+n)(\lambda_s-1+C_s)}{(m+n)\lambda_s}   \\
&= \frac{\lambda_s-1+C_s}{\lambda_s}, \\
\end{aligned} \label{eq:A-I}
\end{equation}
where $(a)$ is due to $\lambda_s \le C_s$.

\textbf{Case B-II):} According to Eqs.~\eqref{eq:OPT} and~\eqref{eq:ALGcaseA}, the upper bound $U_{\mathrm{CR}}(z_{\mathrm{std}})$ becomes
\begin{equation} \label{eq:A-II}
    U_{\mathrm{CR}}(z_{\mathrm{std}}(m,n,x)) = \frac{(m+n)(\lambda_s - 1 + C_s) + x}{C_c}.
\end{equation}
Because RDTSR does not make the combo purchase, according to Lemma~\ref{lemma::combo}, we have
\begin{equation} \label{A-II-condition}
\begin{aligned}
\sum_{k=1}^K \min \{z_k, \lambda_s \} = (m+n)\lambda_s + x \le \lambda_c - 1.
\end{aligned}    
\end{equation}
Substitute Eq.~\eqref{A-II-condition} into~\eqref{eq:A-II}, we have
\begin{equation}
   U_{\mathrm{CR}}(z_{\mathrm{std}}(m,n,x)) \le \frac{\lambda_c - 1 + (m+n)(C_s - 1)}{C_c}. 
\end{equation}
It turns out that we can increase $m$ and $n$ to find another standard total demand $z'_{\mathrm{std}}$ in Case A and have $U_{\mathrm{CR}}(z'_{\mathrm{std}}) \ge U_{\mathrm{CR}}(z_{\mathrm{std}})$. Suppose that $z'_{\mathrm{std}}$ can be expressed by $m'$, $n'$, and $x'$, where $m'\ge m$, $n'\ge n$ and $x'=x$. The upper bound $U_{\mathrm{CR}}(z'_{\mathrm{std}}(m',n',x'))$ can be expressed as
\begin{equation}
\begin{aligned}
&U_{\mathrm{CR}}(z'_{\mathrm{std}}(m,n,x)) \\
&\quad = \frac{\lambda_c - 1 + C_c + \min \{m'+n', \frac{\lambda_c - 1}{\lambda_s}\}(C_s - 1)}{C_c}.    
\end{aligned}
\end{equation}

Clearly, we have $m + n \le m' + n'$; according to Eq.~\eqref{A-II-condition}, we have $m+n \le \frac{\lambda_c - 1}{\lambda_s}$. Therefore, we have $m + n \le \min \{m' + n', \frac{\lambda_c - 1}{\lambda_s}\}$. It directly implies that $U_{\mathrm{CR}}(z'_{\mathrm{std}}(m',n',x')) \ge U_{\mathrm{CR}}(z_{\mathrm{std}}(m,n,x))$.  Furthermore, we can conclude that the upper bound of $U_{\mathrm{CR}}(z_{\mathrm{std}})$ we find for Case A is also the upper bound for Case B-II.

In summary, in Case A and Case B-II the CR is upper bounded by Eq.~\eqref{ineq:crB5}, and in Case B-I it is upper bounded by Eq.~\eqref{eq:A-I}.
Combining two upper bounds, we have the CR which is upper bounded by
\begin{equation} \label{eq:up}
    \max \{\frac{\lambda_s - 1 + C_s}{\lambda_s}, \frac{\lambda_c - 1 + C_c + \frac{\lambda_c - 1}{\lambda_s}(C_s - 1)}{\min \{C_c, \lambda_c\}}\}.
\end{equation}




\textbf{Step 2):} The next step is to minimize Eq.~\eqref{eq:up} by optimizing the thresholds $\lambda_s$ and $\lambda_c$. Firstly, observe that when $\lambda_s$ increases, both terms in the Eq.~\eqref{eq:up} decrease. Therefore, the function takes the minimum when $\lambda_s=C_s$ since we assume $\lambda_s \le C_s$. Therefore, Eq.~\eqref{eq:up} becomes
\begin{equation}
\label{eq:up2}
    \max \{2-\frac{1}{C_s}, \frac{\lambda_c - 1 + C_c + \frac{\lambda_c - 1}{C_s}(C_s - 1)}{\min \{C_c, \lambda_c\}}\}.    
\end{equation}
Next, we analyze the second item in Eq.~\eqref{eq:up2}. We have
\begin{equation}
\label{eq:up3}
\begin{aligned}
    &\frac{\lambda_c - 1 + C_c + \frac{\lambda_c - 1}{C_s}(C_s - 1)}{\min \{C_c, \lambda_c\}} \\
&\quad \overset{(a)}{=} \frac{\lambda_c - 1 + C_c + \frac{\lambda_c - 1}{C_s}(C_s - 1)}{\lambda_c} \\
&\quad = 1 + \frac{C_c-1}{\lambda_c} + \frac{\lambda_c - 1}{\lambda_c}\frac{C_s-1}{C_s} \\
&\quad = 2 - \frac{1}{C_s} + \frac{1}{\lambda_c}(C_c +\frac{1}{C_s} - 2),
\end{aligned}
\end{equation}
where $(a)$ is due to we assume $\lambda_c \le C_c$. 
Because $C_c + C_s^{-1} \ge C_s + C_s^{-1} \ge 2$, Eq.~\eqref{eq:up3} decreases monotonically as $\lambda_c$ increases. When $\lambda_c$ takes the largest (i.e., $\lambda_c=C_c$), Eq.~\eqref{eq:up3} takes the minimum. Therefore, we have
\begin{equation}
    2 - \frac{1}{C_s} + \frac{1}{\lambda_c}(C_c +\frac{1}{C_s} - 2) = 3 - \frac{1}{C_s} - \frac{1}{C_c}(2-\frac{1}{C_s}).
\end{equation} 
Comparing $3 - \frac{1}{C_s} - \frac{1}{C_c}(2-\frac{1}{C_s})$ with $2-\frac{1}{C_s}$, we have
\begin{equation}
\begin{aligned}
    &3 - \frac{1}{C_s} - \frac{1}{C_c}(2-\frac{1}{C_s}) - (2-\frac{1}{C_s}) \\
    &\quad= 1 - \frac{1}{C_c}(2-\frac{1}{C_s}) \\
    &\quad= \frac{1}{C_c}(C_c +\frac{1}{C_s} - 2) \ge 0,
\end{aligned}
\end{equation}
where the last inequality is due to $C_c + C_s^{-1} \ge 2$. Finally, we conclude that when $\lambda_s = C_s$ and $\lambda_c = C_c$, the CR of RDTSR is at most 
\begin{equation}
3 - \frac{1}{C_s} - \frac{1}{C_c}(2-\frac{1}{C_s}).
\end{equation}
\end{proof}

\section{Proof of Lemma~\ref{lemma:ALG}.} \label{app:ALG}
\begin{proof}
We analyze $U_{\mathrm{ALG}}(z)$ in two cases: A) RDTSR does not make the combo purchase and B) RDTSR makes the combo purchase.

\textbf{Case A):} In this case, our goal is  show that if RDTSR does not make the combo purchase (i.e., $t_c > T$), then for any item $k$, we have
\begin{equation*}
U_{\mathrm{ALG}}^k(z) =
\begin{cases}
    z_k, & z_k < \lambda_s,\\
    \lambda_s - 1 + C_s, & \text{otherwise}.
\end{cases}
\end{equation*} 
We consider two subcases: I) the total demand of item $k$ is less than the single purchase threshold (i.e., $z_k < \lambda_s$) and II) the total demand of item $k$ is no less than the single purchase threshold  (i.e., $z_k \ge \lambda_s$).

\textbf{Case A-I):} If $z_k < \lambda_s$, RDTSR will never make the single purchase for item $k$. This is because RDTSR makes the single purchase for item $k$ in some time-slot $t$ only when $\psi_k(t)  \ge \lambda_s$ ($t \le T$). 
However, we have
\begin{equation}
\begin{aligned}
\psi_k(T) &\overset{(a)}{=} \sum_{t=1}^T \mathds{1}_{\{i(t)=k\}} \cdot \mathds{1}_{\{t < t_k\}} \cdot \mathds{1}_{\{t < t_c\}} \cdot a(t) \\
&\le \sum_{t=1}^T \mathds{1}_{\{i(t)=k\}}\cdot a(t) \\
&\overset{(b)}{=}z_k < \lambda_s,
\end{aligned}    
\end{equation}
where $(a)$ is due to Lemma~\ref{lemma:psi_k} and $(b)$ is due to the definition of $z_k$ (i.e., Eq.~\eqref{eq:zk}). 
Consequently, RDTSR will not make the single purchase for item $k$ and covers all the demand for item $k$ by rental with a cost of $z_k$. Moreover, we have $\psi_k(T)=z_k$ because neither the single purchase nor the combo purchase is made (i.e., $t_k > T$ and $t_c > T$).


\textbf{Case A-II):} If $z_k \ge \lambda_s$, then we can prove that RDTSR will make the single purchase for item $k$ by contradiction. 
Suppose that RDTSR does not make the single purchase for $k$, then we have $\mathds{1}_{\{t < t_k\}} = 1$ for any time-slot $t$.
Similarly, as we assumed that RDTSR does not make the combo purchase, we have $\mathds{1}_{\{t < t_c\}} = 1$ for any time-slot $t$.
Therefore, according to Lemma~\ref{lemma:psi_k}, 
we have $\psi_k(T) = \sum_{t=1}^T \mathds{1}_{\{i(t)=k\}}\cdot a(t) = z_k$. Since $z_k \ge \lambda_s$, we know that RDTSR will make the single purchase for item $k$, which is constracting with our assumption. Therefore, we conclude that RDTSR will make the single purchase for item $k$,  and the corresponding single purchase cost for item $k$ is $C_s$. 
Next, we analyze the rental cost of RDTSR for item $k$ before it makes the single purchase. Note that the rental cost for item $k$ cannot exceed $\psi_k(t)$ because any amount of rental cost for item $k$ will be first added to $\psi_k(t)$ according to the algorithm. Therefore, the rental cost is at most $\lambda_s - 1$ because if it is greater than $\lambda_s - 1$ (i.e., greater or equal to $\lambda_s$) in some time-slot $t$, then 
we have $\psi_k(t) \ge \lambda_s$, which means that RDTSR will make the single purchase for item $k$ in that time-slot. 
In summary, the total cost for item $k$ is $\lambda_s - 1 + C_s$.

Now we derive $U_{\mathrm{ALG}}^k(z)$ as
\begin{equation} \label{cost:fraction}
U_{\mathrm{ALG}}^k(z) = 
\begin{cases}
    z_k, & z_k < \lambda_s,\\
    \lambda_s - 1 + C_s, & \text{otherwise},
\end{cases}
\end{equation}
and the corresponding $U_{\mathrm{ALG}}(z)$ is
\begin{equation} 
U_{\mathrm{ALG}}(z) = \sum_{k=1}^K U_{\mathrm{ALG}}^k(z).    
\end{equation}

\textbf{Case B):} We want to show that if RDTSR makes the combo purchase,  we have
\begin{equation} \label{eq:case2result}
\begin{aligned}
U_{\mathrm{ALG}}(z) = &\lambda_c - 1 + C_c \\
&+ \min \{\sum_{k=1}^K \mathds{1}_{\{z_k \ge \lambda_s\}}, \frac{\lambda_c - 1}{\lambda_s}\} (C_s - 1).    
\end{aligned}    
\end{equation}
In time-slot $t_c$, we know that RDTSR makes the combo purchase with a cost of $C_c$. No cost is incurred after the time-slot $t_c$ because all the following demands are covered by the combo purchase. 
Next, we analyze the single purchase cost and the rental cost before $t_c$.

For the single purchase cost, let $n_s$ denote the number of single purchases made by RDTSR, and the total single purchase cost is $n_s C_s$. 

For the total rental cost, it turns out that the rental cost of RDTSR during the interval $[1,t_c-1]$ is at most $\lambda_c - 1 - n_s$. To see this, in the following, we first show that the total rental cost until time-slot $(t_c - 1)$ is at most $\psi_c(t_c - 1) - n_s$, and then we show that $\psi_c(t_c - 1) \le \lambda_c - 1$.



First, we show that the total rental cost until time-slot $(t_c - 1)$ is at most $\psi_c(t_c - 1) - n_s$. As RDTSR does not make the combo purchase during the interval $[1,t_c-1]$, we can directly apply the result in Case A) for the demand sequence during the interval $[1, t_c-1]$. Specifically, for any demand sequence $\mathbf{D} = \{d(t)\}_{t=1}^T \in \mathcal{D}_z$ with the total demand $z$, let $\mathbf{D}'$ denote the subsequence during $[1, t_c-1]$, i.e, $\mathbf{D}' = \{d(t)\}_{t=1}^{t_c-1}$. Let $z'$ denote the total demand of $\mathbf{D}'$. For each item $k$, we consider two cases: $z'_k < \lambda_s$ and $z'_k \ge \lambda_s$.
\begin{enumerate}[i)]
    \item If $z'_k < \lambda_s$, according to the analysis in Case A-I), we have the rental cost is $z'_k$ and $\psi_k(t_c-1)=z'_k$. Since $z'_k < \lambda_s$, the rental cost for item $k$ is $z'_k = \psi_k(t_c-1) < \min \{\psi_k(t_c - 1), \lambda_s\}$.
\item If $z'_k \ge \lambda_s$, according to the analysis in Case A-II), RDTSR will make the single purchase for item  $k$ (i.e., $\psi_k(t_c - 1) \ge \lambda_s$) and the rental cost is at most $\lambda_s - 1$. Therefore, the rental cost is at most $\lambda_s - 1 = \min \{\psi_k(t_c - 1), \lambda_s\} - 1$.
\end{enumerate}
Because we assume that the number of single purchases is $n_s$, therefore the total rental cost is at most $\sum_{k=1}^K \min \{\psi_k(t_c - 1), \lambda_s\} - n_s$.
According to the definition of $\psi_c(t)$ (i.e., Eq.~\eqref{eq:psic}), we have $\psi_c(t_c - 1)=\sum_{k=1}^K \min \{\psi_k(t_c - 1), \lambda_s \}$. Therefore, the total rental cost for all the items is at most $\psi_c(t_c - 1) - n_s$.



Then we show that 
\begin{equation} \label{eq:case2temp}
  \psi_c(t_c - 1) \le  \lambda_c - 1. 
\end{equation}
This is because if $\psi_c(t_c - 1) > \lambda_c - 1$ (i.e., $\psi_c(t_c - 1) \ge \lambda_c$), then according to Line~\ref{line:combo_begin} of RDTSR, it will make the combo purchase in time-slot $t_c - 1$ rather than $t_c$. This leads to the contradiction.
Therefore, we conclude that the total rental cost is at most $\lambda_c - 1 - n_s$.

In summary, if RDTSR makes the combo purchase and $n_s$ single purchases, its total cost is at most
\begin{equation}
\begin{aligned} \label{eq:result}
\lambda_c - 1 -n_s + C_c + n_s C_s \\
 = \lambda_c - 1 + C_s + n_s(C_s - 1).
\end{aligned}
\end{equation}
Next, we identify the value of $n_s$. Clearly, $U_{\mathrm{ALG}}(z)$ increases monotonically as $n_s$ increases. Therefore, we need to identify the possible maximum value of $n_s$. First, the count of single purchases cannot exceed the number of items for which its total demand exceeds $\lambda_s$. Then, we have
\begin{equation}
    n_s \le \sum_{k=1}^K \mathds{1}_{\{z_k \ge \lambda_s\}}.
\end{equation}
On the other hand, we can show that $n_s \le (\lambda_c - 1)/\lambda_s$. 
To see this, we only need to show that $n_s\lambda_s \le \psi_c(t_c - 1) \le \lambda_c - 1$.
The first inequality holds because if RDTSR makes the single purchase for item $k$, we have $\psi_k(t_c - 1) \ge \lambda_s$, which means $\lambda_s = \min \{\psi_k(t_c - 1), \lambda_s\}$. Therefore, for these $n_s$ items, we have $n_s\lambda_s \le \sum_{k=1}^K \min \{\psi_k(t_c - 1), \lambda_s \} = \psi_c(t_c - 1)$.
We have shown that $\psi_c(t_c - 1) \le \lambda_c - 1$ in Eq.~\eqref{eq:case2temp}, so the second inequality holds.

Now, we obtain the upper bound of $n_s$, which is  $n_s\le \min \{\sum_{k=1}^K \mathds{1}_{\{z_k \ge \lambda_s\}}, (\lambda_c - 1)/\lambda_s\}$. The upper bound of $\mathit{ALG}(z)$ is then
\begin{equation} \label{eq:cost_fraction_combo}
\begin{aligned}
U_{\mathrm{ALG}}(z) = &\lambda_c - 1 + C_c \\
&+ \min \{\sum_{k=1}^K \mathds{1}_{\{z_k \ge \lambda_s\}}, \frac{\lambda_c - 1}{\lambda_s}\} (C_s - 1).    
\end{aligned}    
\end{equation}
   
\end{proof}


\section{Proof of Lemma~\ref{lemma:zstd}} \label{app:zstd}
\begin{proof}
First, we show that the output total demand $z'$ is a standard total demand. According to Algorithm~\ref{alg::construct}, there is one loop, i.e., Lines~\ref{line:loop2start}-\ref{line:loop2end}. After this loop, no item has a total demand within the interval $(\lambda_s, C_s)$. 
In other words, the total demand can fall into one of three categories: i) at least $C_s$, ii) equal to $\lambda_s$, or iii) within the interval $[0, \lambda_s)$. Clearly, the constructed total demand $z'$ satisfies the three conditions in Definition~\ref{definition} at the same time.

Second, we show that $U_{\mathrm{CR}}(z') \ge U_{\mathrm{CR}}(z)$. Initially, we have $z'=z$. Then we only need to show that after the loop, $U_{\mathrm{CR}}(z')$ does not decrease. As we will see in Lemma~\ref{std2}, for each iteration in this loop, $U_{\mathrm{CR}}(z')$ does not decrease.
\end{proof}

\begin{lemma} \label{std2}
Given any total demand $z$, if there exists an item $j$ such that $\lambda_{s} \le z_j < C_s$, then there exists another total demand $z'=\{z'_1, \dots, z'_K\}$ such that
\begin{equation} \label{std2:z'}
z'_k = 
\begin{cases}
\lambda_s, &k=j,\\
z_k, &k\neq j,
\end{cases}
\end{equation}
and we have $U_{\mathrm{CR}}(z) \le U_{\mathrm{CR}}(z')$.
\end{lemma}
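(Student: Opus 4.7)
The plan is to show that replacing $z_j$ with $\lambda_s$ leaves the algorithm's upper-bound cost $U_{\mathrm{ALG}}$ unchanged while (weakly) decreasing the offline optimum $\mathit{OPT}$, so the ratio $U_{\mathrm{CR}} = U_{\mathrm{ALG}}/\mathit{OPT}$ cannot go down. The key observation driving everything is that $\min\{z_j, \lambda_s\} = \lambda_s = \min\{z'_j, \lambda_s\}$ since $\lambda_s \le z_j < C_s$ and $z'_j = \lambda_s$, so item $j$ contributes the same amount to every quantity that is capped at $\lambda_s$.

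First, I would verify that the combo-purchase decision is identical under $z$ and $z'$. By Lemma~\ref{lemma::combo2}, RDTSR makes a combo purchase iff $\sum_{k} \min\{z_k, \lambda_s\} \ge \lambda_c$, and by the observation above this sum is the same for $z$ and $z'$. So the two total demands fall into the same case (Case A or Case B) of Lemma~\ref{lemma:ALG}.

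Next, I would check that $U_{\mathrm{ALG}}(z) = U_{\mathrm{ALG}}(z')$ in each case. In Case A (no combo purchase), Lemma~\ref{lemma:ALG} gives $U_{\mathrm{ALG}}^k$ depending only on whether $z_k \ge \lambda_s$; for $k \neq j$ this is unchanged, and for $k = j$ both $z_j \ge \lambda_s$ and $z'_j = \lambda_s \ge \lambda_s$ hold, yielding $U_{\mathrm{ALG}}^j = \lambda_s - 1 + C_s$ in both cases. In Case B (combo purchase), the cost formula depends on $z$ only through $\sum_k \mathds{1}_{\{z_k \ge \lambda_s\}}$, which is again unchanged since $z_j \ge \lambda_s$ and $z'_j \ge \lambda_s$.

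Finally, by Lemma~\ref{lemma:opt},
\begin{equation*}
\mathit{OPT}(z) = \min\Bigl\{C_c,\ \sum_{k} \min\{z_k, C_s\}\Bigr\},
\end{equation*}
and since $z_j < C_s$ and $z'_j = \lambda_s \le z_j$, we have $\min\{z'_j, C_s\} = \lambda_s \le z_j = \min\{z_j, C_s\}$, while all other terms agree. Hence $\mathit{OPT}(z') \le \mathit{OPT}(z)$. Combining this with $U_{\mathrm{ALG}}(z) = U_{\mathrm{ALG}}(z')$ yields $U_{\mathrm{CR}}(z) = U_{\mathrm{ALG}}(z)/\mathit{OPT}(z) \le U_{\mathrm{ALG}}(z')/\mathit{OPT}(z') = U_{\mathrm{CR}}(z')$, as desired. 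There is no real obstacle here; the only thing to be careful about is confirming that the condition $\lambda_s \le z_j < C_s$ is exactly what makes item $j$ invariant under capping at $\lambda_s$ (for RDTSR's behavior) while still movable for the offline optimum (which caps at $C_s$).
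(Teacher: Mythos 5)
Your proposal is correct and follows essentially the same route as the paper's proof: show that the combo-purchase decision and hence $U_{\mathrm{ALG}}$ are unchanged (since item $j$'s contribution to every $\lambda_s$-capped quantity is invariant), while $\mathit{OPT}$ weakly decreases because $z'_j = \lambda_s \le z_j < C_s$. No gaps.
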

\begin{proof}
By the definition of $U_{\mathrm{CR}}(z)$,  we can proceed the proof in the following two steps: 1) we show that $\mathit{OPT}(z) \ge \mathit{OPT}(z')$, and 2) we show that $U_{\mathrm{ALG}}(z) = U_{\mathrm{ALG}}(z')$.

\textbf{Step 1):} For $\mathit{OPT}(z)$, we have
\begin{equation}
\begin{aligned}
    \mathit{OPT}(z) &\overset{(a)}{=} \min \{C_c,\sum_{k=1}^K \min \{ C_s, z_k\} \} \\
    &\overset{(b)}{=} \min \{C_c, z_j + \sum_{\substack{k=1, k \neq j}}^K \min \{ C_s, z_k\} \} \\
    &\overset{(c)}{\ge} \min \{C_c, z'_j + \sum_{\substack{k=1, k \neq j}}^K \min \{ C_s, z'_k\} \} \\
    &\overset{(d)}{=} \min \{C_c, \sum_{k=1}^K \min \{ C_s, z'_k\} \}  \overset{(e)}{=} \mathit{OPT}(z'),
\end{aligned}
\end{equation}
where $(a)$ comes from the definition of $\mathit{OPT}(z)$ in Eq.~\eqref{eq:OPTcost}, $(b)$ is due to $\min \{C_s, z_j\} = z_j$ because $z_j \le C_s$, $(c)$ is due to $z'_k \le z_k$ for any item $k$ according to Eq.~\eqref{std2:z'}, 
$(d)$ is due to $\min \{C_s, z'_j\} = z'_j$ because $z'_j = \lambda_s \le C_s$, and $(e)$ comes from Lemma~\ref{lemma:opt}.

\textbf{Step 2):} Then, we show that $U_{\mathrm{ALG}}(z) = U_{\mathrm{ALG}}(z')$.
To see this, we first show that $\sum_{k=1}^K \min \{z_k, \lambda_s\} = \sum_{k=1}^K \min \{z'_k, \lambda_s\}$, which implies that RDTSR's combo purchase decisions are the same for $z$ and $z'$ according to Lemma~\ref{lemma::combo}. This is due to two facts: 1) For item $j$, we have $\min \{z_j, \lambda_s \} = \min \{ z'_j, \lambda_s \} = \lambda_s$ because $z_j, z'_j \ge \lambda_s$. 2) For item $k$ that $k\neq j$, we have $z_k = z'_k$ according to Eq.~\eqref{std2:z'}. Therefore, we conclude that $\sum_{k=1}^K \min \{z_k, \lambda_s\} = \sum_{k=1}^K \min \{z'_k, \lambda_s\}$ because $\min \{z_k, \lambda_s\} = \min \{z'_k, \lambda_s\}$ for any item $k$.
Next, to compare $U_{\mathrm{ALG}}(z)$ and $U_{\mathrm{ALG}}(z')$, we consider two cases: whether RDTSR makes the combo purchase on $z$ and $z'$ or not.

If RDTSR makes the combo purchase, according to Lemma~\ref{lemma:ALG} we have $U_{\mathrm{ALG}}(z) = \lambda_c - 1 + C_c + \min \{\sum_{k=1}^K \mathds{1}_{\{z_k \ge \lambda_s\}}, \frac{\lambda_c - 1}{\lambda_s}\} (C_s - 1)$.
We only need to compare the term related to $z$ and $z'$ (i.e., $\sum_{k=1}^K \mathds{1}_{\{z_k \ge \lambda_s\}}$ and $\sum_{k=1}^K \mathds{1}_{\{z'_k \ge \lambda_s\}}$).
It turns out that $\sum_{k=1}^K \mathds{1}_{\{z_k \ge \lambda_s\}} = \sum_{k=1}^K \mathds{1}_{\{z'_k \ge \lambda_s\}}$ because for item $j$, we have $z_j, z'_j \ge \lambda_s$ (i.e., $\mathds{1}_{\{z_j \ge \lambda_s\}} = \mathds{1}_{\{z'_j \ge \lambda_s\}} = 1$), and for item $k$ that $k\neq j$, we have $z_k = z'_k$ according to Eq.~\eqref{std2:z'} (i.e., $\mathds{1}_{\{z_k \ge \lambda_s\}} = \mathds{1}_{\{z'_k \ge \lambda_s\}}$).
Therefore, in this case, we have $U_{\mathrm{ALG}}(z) = U_{\mathrm{ALG}}(z')$.

If RDTSR does not make the combo purchase, according to Lemma~\ref{lemma:ALG}, we have
\begin{equation*}
U_{\mathrm{ALG}}^k(z) = 
\begin{cases}
    z_k, & z_k < \lambda_s,\\
    \lambda_s - 1 + C_s, & \text{otherwise}.
\end{cases}
\end{equation*}
We claim that $U_{\mathrm{ALG}}^k(z) = U_{\mathrm{ALG}}^k(z')$ for $k=1,\dots,K$. For item $j$, we have $z_j, z'_j \ge \lambda_s$ and $U_{\mathrm{ALG}}^j(z) = U_{\mathrm{ALG}}^j(z') = \lambda_s - 1 + C_s$; for item $k$ such that $k\neq j$, we have $z_k = z'_k$ according to Eq.~\eqref{std2:z'} and  hence $U_{\mathrm{ALG}}^k(z) = U_{\mathrm{ALG}}^k(z')$. Therefore, we have $U_{\mathrm{ALG}}(z) = \sum_{k=1}^K U_{\mathrm{ALG}}^k(z) =  U_{\mathrm{ALG}}^k(z')$.

Combining two cases, we have $U_{\mathrm{ALG}}(z) = U_{\mathrm{ALG}}(z')$.
\end{proof}

\section{Proof of Lemma~4.1} \label{app:lemma-4.1}
\begin{proof}
First, we claim that given any total demand $z$ and a prediction $y$, the cost of FTP over all the demand sequences $\mathbf{D} \in {\mathcal{D}_z}$ is the same. We use $\mathit{ALG}(z)$ to denote the cost of FTP on a total demand $z$. There are two cases for FTP: If the prediction suggests the combo purchase, i.e., $\sum_{k=1}^K \min \{C_s, y_k \} \ge C_c$, FTP will make the combo purchase in $t=1$.
Otherwise, FTP will never make the combo purchase. 
For each item $k$, FTP will make the single purchase for $k$ if the prediction suggests so (i.e., $y_k \ge C_s$), otherwise FTP always pays the demand for $k$ by rental. 
Therefore, we have
\begin{align} \label{eq:lemmaALG}
    \mathit{ALG}(z) = 
    \begin{cases}
    C_c, & \sum_{k=1}^K \min(C_s, y_k) \ge C_c ,\\
    \sum_{k=1}^K{\mathit{ALG}_k(z)}, & \text{otherwise},
    \end{cases}
\end{align}
where $\mathit{ALG}_k(z)$ denotes the cost of FTP for item $k$ when it does not make the combo purchase, i.e.,
\begin{align}
    \mathit{ALG}_k(z) = 
    \begin{cases}
    C_s, & y_k \ge C_s ,\\
    z_k, & \text{otherwise}.
    \end{cases}
\end{align}
We can see that the cost of FTP only depends on the actual total demand $z$ and the predicted total demand $y$. For the optimal offline algorithm, according to Lemma~\ref{lemma:opt}, we have
\begin{equation} \label{eq:lemmaOPT}
\mathit{OPT}(z) = \min \{C_c, \sum_{k=1}^K \min \{z_k, C_s\} \}.
\end{equation}

Therefore, we can analyze the cost of FTP and the cost of the optimal offline algorithm based on the total demand. Specifically, given a total demand of $z$, we consider two cases: A) FTP makes the combo purchase on $z$, and B) FTP does not make the combo purchase on $z$. 
Similarly, we consider two cases
for the optimal offline algorithm: I)  the optimal offline algorithm makes the combo purchase, and II) the optimal offline algorithm does not make the combo purchase. Therefore,
we consider $2 \times 2 = 4$ cases and we show that the
result holds for each case.


\textbf{Case A-I):} In this case, we have $\mathit{ALG}(z)=C_c$ and $\mathit{OPT}(z)=C_c$ from Eqs.~\eqref{eq:lemmaALG} and~\eqref{eq:lemmaOPT}, respectively. As $\eta \ge 0$, we have $\mathit{ALG}(z) \le \mathit{OPT}(z) + \eta$.

\textbf{Case A-II):} In this case, we have $\mathit{ALG}(z)=C_c$ and $\mathit{OPT}(z)=\sum_{k=1}^K \min \{z_k, C_s\}$ from Eqs.~\eqref{eq:lemmaALG} and~\eqref{eq:lemmaOPT}, respectively.
Next, we show that for each item $k$, we have
\begin{align}
    \min \{z_k, C_s\} + \eta_k \ge \min \{y_k, C_s \}. \label{ineq:A-II}
\end{align}
We show this for $z_k \ge C_s$ and $z_k < C_s$, respectively: 
\begin{enumerate}[i)]
    \item If $z_k \ge C_s$, we have $\min \{z_k, C_s\} + \eta_k = C_s + \eta_k \ge C_s \ge \min \{y_k, C_s \}$.
    \item If $z_k < C_s$, we have $\min \{z_k, C_s\} + \eta_k = z_k + \eta_k \ge y_k \ge \min \{y_k, C_s\}$.
\end{enumerate}
Then we have
\begin{equation}
    \begin{aligned}
        \mathit{ALG}(z) &= C_c \\
        &\overset{(a)}{\le}  \sum_{k=1}^K \min \{y_k, C_s\} \\
        &\overset{(b)}{\le} \sum_{k=1}^K  (\min \{z_k, C_s\} +\eta_k) \\
        &\le \mathit{OPT}(z) + \eta,
    \end{aligned}
\end{equation}
where $(a)$ is due to the condition that the prediction suggests the combo purchase (i.e., Eq.~\eqref{eq:lemmaALG}) and $(b)$ is due to Eq.~\eqref{ineq:A-II}.


\textbf{Case B-I):} In this case, we have $\mathit{ALG}(z) = \sum_{k=1}^K{\mathit{ALG}_k(z)}$ and $\mathit{OPT}(z)=C_c$ from Eqs.~\eqref{eq:lemmaALG} and~\eqref{eq:lemmaOPT}, respectively. We show that for each item $k$, we have
\begin{equation}
    ALG_k(z) \le \min \{C_s, y_k\} + \eta_k. \label{ineq:B-I}
\end{equation}
We show this for $y_k \ge C_s$ and $y_k < C_s$, respectively:
\begin{enumerate}[i)]
    \item If $y_k \ge C_s$, we have $\mathit{ALG}_k(z) = C_s = \min \{C_s, y_k\}\le \min\{C_s, y_k\} + \eta_k$,
    \item If $y_k < C_s$, we have $ALG_k(z) = z_k \le y_k + \eta_k = \min \{C_s, y_k\} + \eta_k $.   
\end{enumerate}
Then we have
\begin{equation}
\begin{aligned}
    \mathit{ALG}(z) &= \sum_{k=1}^K \mathit{ALG}_k(z) \\
    & \overset{(a)}{\le} \sum_{k=1}^K(\min \{C_s, y_k\} +  \eta_k)\\
    & \overset{(b)}{<} C_c + \eta = OPT + \eta,
\end{aligned}
\end{equation}
where $(a)$ comes from the Eq.~\eqref{ineq:B-I} and $(b)$ comes from the condition that the prediction does not suggest the combo purchase (i.e., $\sum_{k=1}\min\{C_s, y_k\} < C_c$).





\textbf{Case B-II):} In this case, we have $\mathit{ALG}(z)=\sum_{k=1}^K\mathit{ALG}_k(z)$ and $\mathit{OPT}(z)=\sum_{k=1}^K\min \{z_k, C_s\}$ from Eqs.~\eqref{eq:lemmaALG} and~\eqref{eq:lemmaOPT}, respectively. Without the combo purchase, for each item $k$, this problem degenerates into a ski-rental problem (i.e., $z_k$ is the ski days and $C_s$ is the purchase price). 
In this case, FTP will make the single purchase for item $k$ if the predicted total demand $y_k \ge C_s$ or keep renting for all the demand for item $k$ if the predicted total demand $y_k < C_s$. In fact, FTP is exactly Algorithm~1 in~\citet{purohit2018improving}. Therefore, we can apply the Lemma~2.1 proved in~\citet{purohit2018improving}, which states that the cost of the algorithm is less than the sum of the offline optimal and the prediction error, i.e., 
\begin{equation}
    \mathit{ALG}_k(z) \le \min\{z_k, C_s\} + \eta_k.
\end{equation}
Finally, by summing up all the items, we obtain
\begin{equation}
\begin{aligned}
 \mathit{ALG}(z) = \sum_{k=1}^K\mathit{ALG}_k(z) &\le \sum_{k=1}^K\min\{z_k, C_s\} + \eta_k \\
 &= \mathit{OPT}(z) + \eta.    
\end{aligned}
\end{equation} 

Combining all the four subcases, we complete the proof.

\end{proof}


\section{Proof of Consistency} \label{sec:consistency}
\begin{proof}
In this section, we prove the first upper bound $(1 + \theta + \theta^2) + \frac{1+2\theta}{1-\theta}\frac{\eta}{\mathit{OPT}(\mathbf{D})}$. The outline of the proof is as follows. Our goal is to show that this upper bound holds for any demand sequence under any predictions. First, given any predictions, we consider two cases: if $\lambda_c = \theta^2 C_c$ and if $\lambda_c = C_c/\theta$. For each case, given any demand sequence, similar to the proof of Lemma~4.1, we consider four subcases. 
That is, we consider two subcases for LADTSR: A) LADTSR makes the combo purchase. B) LADTSR does not make the combo purchase. 
For each subcase A and B, we consider two subcases for the optimal offline algorithm: I) The optimal offline algorithm makes the combo purchase. II) The optimal offline algorithm does not make the combo purchase. Therefore, we consider $8$ subcases. 
For each subcase, we will derive the upper bound of CR, which is a function of the trust parameter $\theta$ and the prediction error $\eta$. 


Before we delve into each subcase, similar to the proof for Theorem~\ref{thm:cr} and Lemma~\ref{lemma:FTP}, we need to analyze the cost of LADTSR on a total demand $z$. We still use $\mathit{CR}(z)$ to denote the CR of LADTSR, $U_{\mathrm{ALG}}(z)$ to denote the upper bound of the cost of LADTSR, and $U_{\mathrm{CR}}(z)$ to denote the upper bound of CR of LADTSR, respectively. Therefore, we have
\begin{equation}
   \mathit{CR}(z) \le U_{\mathrm{CR}}(z) = \frac{U_{\mathrm{ALG}}(z)}{\mathit{OPT}(z)}.
\end{equation}

Next, we analyze $U_{\mathrm{ALG}}(z)$. Similar to Lemma~\ref{lemma:ALG}, we present the result of $U_{\mathrm{ALG}}(z)$ for LADTSR in Lemma~\ref{lemma:LA-ALG}.

\begin{lemma} \label{lemma:LA-ALG}
Given a total demand $z$, the upper bound for LADTSR's cost $U_{\mathrm{ALG}}(z)$ is as follows:
\begin{enumerate}[i)]
    \item If LADTSR does not make the combo purchase,
\begin{equation} \label{eq:laALGcase1}
U_{\mathrm{ALG}}(z) = \sum_{k=1}^K U_{\mathrm{ALG}}^k(z),    
\end{equation}
where $U_{\mathrm{ALG}}^k(z)$ denotes the upper bound for LADTSR's cost on item $k$ and we have
\begin{equation} \label{eq:lacase1}
U_{\mathrm{ALG}}^k(z) = 
\begin{cases}
    z_k, & \text{if} \ z_k < \lambda_{s,k},\\
    \lambda_{s,k} + C_s, & \text{otherwise}.
\end{cases}     
\end{equation}
\item If LADTSR makes the combo purchase, let $s_1$ denote the number of items whose single purchase is $C_s/\theta$ and its total demand for this item is no less than the single purchase $C_s/\theta$ (i.e., those items $k$ such that $\lambda_{s, k}=C_s/\theta$ and $z_k \ge \lambda_{s, k}=C_s/\theta$), and let $s_2$ denote the number of items whose single purchase is $\theta C_s$ and its total demand for this item is no less than the single purchase $\theta C_s$ (i.e., those items $k$ such that $\lambda_{s, k}=\theta C_s$ and $z_k \ge \lambda_{s, k}=\theta C_s$),
then we have
\begin{equation} \label{eq:LAcase2}
U_{\mathrm{ALG}}(z) = \lambda_c + C_c +f(s_1, s_2)C_s,
\end{equation}
where $f(s_1, s_2)$ is a function that calculates the maximum number of single purchases and is computed by
\begin{equation}
f(s_1, s_2) = \min \{\frac{\lambda_c}{\theta C_s}, s_2 + \frac{\lambda_c -\theta C_s s_2}{C_s/\theta}, s_1+s_2 \}.    
\end{equation}
\end{enumerate}
\end{lemma}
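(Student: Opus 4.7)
The plan is to adapt the proof of Lemma~\ref{lemma:ALG} to the learning-augmented setting, where single purchase thresholds are item-specific, with $\lambda_{s,k}\in\{\theta C_s,\ C_s/\theta\}$. For part (i), when LADTSR makes no combo purchase, I would argue item by item, mirroring Case A of Lemma~\ref{lemma:ALG}. If $z_k<\lambda_{s,k}$, then Lemma~\ref{lemma:psi_k} gives $\psi_k(T)\le z_k<\lambda_{s,k}$, so the single purchase for item $k$ is never triggered and all item-$k$ demand is rented at cost $z_k$. If $z_k\ge\lambda_{s,k}$, a contradiction argument parallel to Case A-II of Lemma~\ref{lemma:ALG} forces LADTSR to make a single purchase for item $k$, and the pre-purchase rental cost is at most $\lambda_{s,k}$, giving a per-item cost of at most $\lambda_{s,k}+C_s$. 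Summing over $k$ yields Eq.~\eqref{eq:laALGcase1}.

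For part (ii), when LADTSR makes the combo purchase at some time $t_c$, I would split the total cost into the rental cost on $[1,t_c-1]$, the combo cost $C_c$, and the total single purchase cost (nothing is incurred after $t_c$). On $[1,t_c-1]$ LADTSR provably makes no combo purchase, so applying part (i) to this truncated sequence bounds the rental cost for item $k$ by $\min\{\psi_k(t_c-1),\lambda_{s,k}\}$; summing and using the (modified) definition of $\psi_c$ for LADTSR gives a total rental cost of at most $\psi_c(t_c-1)\le\lambda_c$, since the combo was not triggered at $t_c-1$. Combined with $n_s C_s$ for the single purchases, the total cost is at most $\lambda_c+C_c+n_s C_s$, and it remains to show $n_s\le f(s_1,s_2)$.

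To bound $n_s$, let $n_1$ count the single purchases LADTSR makes for large-threshold items ($\lambda_{s,k}=C_s/\theta$) and $n_2$ those for small-threshold items ($\lambda_{s,k}=\theta C_s$). Each such purchase requires $z_k\ge\lambda_{s,k}$, giving $n_1\le s_1$ and $n_2\le s_2$. Moreover, each single-purchased item contributes exactly $\lambda_{s,k}$ to $\psi_c(t_c-1)$ via the $\min\{\psi_k,\lambda_{s,k}\}$ cap, so $(C_s/\theta)n_1+(\theta C_s)n_2\le\psi_c(t_c-1)\le\lambda_c$. Maximizing $n_1+n_2$ subject to these constraints yields exactly the three branches of $f(s_1,s_2)$: the item-count bound $s_1+s_2$; the pure-budget bound using only the cheaper (small-threshold) purchases, $\lambda_c/(\theta C_s)$; and the mixed bound obtained by saturating $n_2=s_2$ and spending the remaining budget on large-threshold items, $s_2+(\lambda_c-\theta C_s s_2)/(C_s/\theta)$. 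The main obstacle is this last step: justifying that the LP-style optimization is correctly captured by a minimum over these three expressions requires a careful case split on whether $s_2\theta C_s\le\lambda_c$ (middle term active) or $s_2\theta C_s>\lambda_c$ (middle term superseded by the first), together with the prioritization argument that small-threshold purchases are always preferred when maximizing $n_s$ under a linear budget since $\theta C_s<C_s/\theta$ for $\theta\in(0,1)$.
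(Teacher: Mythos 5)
Your proposal follows essentially the same route as the paper's proof: part (i) reuses the item-wise argument of Lemma~\ref{lemma:ALG} (with $\lceil\lambda_{s,k}\rceil-1\le\lambda_{s,k}$ absorbing the non-integer thresholds), and part (ii) uses the same decomposition into rental cost at most $\psi_c(t_c-1)\le\lambda_c$, combo cost $C_c$, and $n_sC_s$, with $n_s$ bounded via $n_1\le s_1$, $n_2\le s_2$, and $n_1C_s/\theta+n_2\theta C_s\le\lambda_c$. The ``main obstacle'' you flag is not actually one: since the lemma only asserts an upper bound, you need not solve the LP exactly but merely verify that $n_s=n_1+n_2$ is below each of the three expressions separately (the third follows from $n_1+n_2\le(1-\theta^2)n_2+\theta\lambda_c/C_s$, which is increasing in $n_2$, so substituting $n_2\le s_2$ requires no case split on the sign of $\lambda_c-\theta C_s s_2$).
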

We prove Lemma~\ref{lemma:LA-ALG} in Appendix~\ref{app:LA-ALG}.

Next, we consider the case when $\lambda_c = \theta^2 C_c$, which means the prediction suggests a combo purchase. According to Line~\ref{line:setthreshold_combo} in LADTSR, we have 
\begin{equation} \label{eq:combo-cond}
    \sum_{k=1}^K \min \{y_k, C_s\} \ge C_c.
\end{equation}

\textbf{Case A-I):} In this case, we have $\mathit{OPT}(z)=C_c$. For LADTSR, according to Lemma~\ref{lemma:LA-ALG}, we have 
\begin{equation}\label{eq:ALGtheta^2} 
\begin{aligned}
U_{\mathrm{ALG}}(z) &= \lambda_c + C_c + f(s_1, s_2)C_s \\
&\overset{(a)}{\le }\lambda_c + C_c + \frac{\lambda_c}{\theta C_s}C_s \\
&= \theta^2C_c + C_c + \theta C_c \\
&= (1 + \theta + \theta^2) C_c \\
&\le  (1 + \theta + \theta^2) (\mathit{OPT}(z) + \eta),
\end{aligned}    
\end{equation}
where $(a)$ is due to $f(s_1, s_2)\le \frac{\lambda_c}{\theta C_s}$. Therefore, we have
\begin{equation}
    \mathit{CR}(z) \le \frac{U_{\mathrm{ALG}}(z)}{\mathit{OPT}(z)} \le (1 + \theta + \theta^2) (1 + \frac{\eta}{\mathit{OPT}(z)}).
\end{equation}

\textbf{Case A-II):} In this case, from Lemma~\ref{lemma:opt} we have $\mathit{OPT}(z)=\sum_{k=1}^K \min\{C_s, z_k\}$. For LADTSR, from  Eq.~\eqref{eq:ALGtheta^2}, we have
\begin{equation}
 U_{\mathrm{ALG}}(z) \le (1 + \theta + \theta^2) C_c.   
\end{equation}
Next, we will show that $C_c \leq \mathit{OPT}(z) + \eta$. This is because
\begin{equation}
\begin{aligned}
C_c &\overset{(a)}{\leq} \sum_{k=1}^K \min \{ y_k, C_s \} \\
    & \overset{(b)}{\leq} \sum_{k=1}^K (\min \{z_k, C_s \} + \eta_k)  \\
    & = \mathit{OPT} + \eta,
\end{aligned}
\end{equation}
where $(a)$ is due to Eq.~\eqref{eq:combo-cond} and $(b)$ is shown as follows:
\begin{itemize}
    \item If $z_k \ge C_s$, we have $\min \{z_k, C_s\} + \eta_k = C_s + \eta_k \ge C_s \ge \min \{y_k, C_s \}$.
    \item If $z_k < C_s$, we have $\min \{z_k, C_s\} + \eta_k = z_k + \eta_k \ge y_k \ge \min \{y_k, C_s\}$.
\end{itemize}
Therefore we have,
\begin{equation}
    \mathit{CR}(z) \le \frac{U_{\mathrm{ALG}}(z)}{\mathit{OPT}(z)} = (1 + \theta + \theta^2) (1 + \frac{\eta}{\mathit{OPT}(z)}).
\end{equation}

\textbf{Case B-I):} In this case, we will show that such a total demand does not exist. First, we have
\begin{equation} \label{eq:notexist}
\begin{aligned}
\sum_{k=1}^K \min \{z_k, \lambda_{s,k} \} 
&\overset{(a)}{\geq}  \sum_{k=1}^K \min \{z_k,\theta C_s \} \\
&\geq \sum_{k=1}^K \min \{ \theta z_k ,\theta C_s \} \\
&= \theta \sum_{k=1}^K\min \{ z_k, C_s \} \\
&\overset{(b)}{\geq} \theta C_c \geq \theta^2 C_c = \lambda_c,
\end{aligned}
\end{equation}
where $(a)$ is due to $\lambda_{s,k} \geq \theta C_s$ for any item $k$ and $(b)$ is due to the optimal offline algorithm making the combo purchase. Based on Lemma~\ref{lemma::combo}, Eq.~\eqref{eq:notexist} implies that LADTSR will make the combo purchase, leading to the contradiction that it does not make the combo purchase in this case.

\textbf{Case B-II):} In this subcase, the problem degenerates into the ski-rental problem. 
Furthermore, our LADTSR is equivalent to the LADTSR proposed in~\citet{purohit2018improving}. That is, buy on day $\ceil{\theta C_s}$ if $y_k \ge C_s$, otherwise buy on day $\ceil{C_s / \theta}$.
Therefore, we can directly apply the Theorem~2.2 in their work, i.e., for each item $k$, the cost of LADTSR for item $k$ (denoted by $\mathit{ALG}_k(z)$) is at most $(1 + \theta) \mathit{OPT}_k(z) + \frac{1}{1-\theta} \eta_k$, where $\mathit{OPT}_k(z)$ is the cost of the optimal offline algorithm for item $k$. As the total cost of both algorithms in this case is the sum of their cost for each item, we have
\begin{equation}
\begin{aligned}
\mathit{CR}(z) &= \frac{\sum_{k=1}^K \mathit{ALG}_k(z)}{\sum_{k=1}^K \mathit{OPT}_k(z)} \\
& \le \frac{(1+\theta)\sum_{k=1}^K \mathit{OPT}_k(z) + \frac{1}{1-\theta} \sum_{k=1}^K \eta_k}{\sum_{k=1}^K \mathit{OPT}_k(z)} \\
&= 1+\theta + \frac{1}{1-\theta}\frac{\eta}{\mathit{OPT}(z)}.
\end{aligned}
\end{equation}

Combining all the four subcases, we have
\begin{equation}
\begin{aligned}
\mathit{CR}(z) \le &\max \{1 + \theta + \theta^2, 1 + \theta\} \\
&\quad + \max \{1 + \theta + \theta^2, \frac{1}{1-\theta}\} \frac{\eta}{\mathit{OPT}(z)}.
\end{aligned}
\end{equation}
Clearly, we have $1+\theta \le 1 + \theta + \theta^2$. Comparing $1 + \theta + \theta^2$ and $\frac{1}{1-\theta}$, we have
\begin{equation}
\begin{aligned}
&\frac{1}{1-\theta} - (1 + \theta + \theta^2) \\
&\quad = \frac{1-(1+\theta+\theta^2)(1-\theta)}{1-\theta} \\
&\quad = \frac{1-(1-\theta^3)}{1-\theta} \\
&\quad = \frac{\theta^3}{1-\theta} > 0.
\end{aligned}    
\end{equation}
Therefore, for the case when $\lambda_c = \theta^2 C_c$, we have
\begin{equation}  \label{eq:result-theta^2}
\mathit{CR}(z) \le 1 + \theta + \theta^2 + \frac{1}{1-\theta} \frac{\eta}{\mathit{OPT}(z)}.
\end{equation}

Now we consider another case when $\lambda_c = C_c/\theta$. According to Lines~{\ref{line:setthreshold_combo}-\ref{line:setthreshold_end}} in LADTSR, we have
\begin{equation} \label{eq:combo-cond2}
    \sum_{k=1}^K \min \{C_s, y_k\} < C_c.
\end{equation}

We follow the above proof and consider four subcases. 

\textbf{Case A-I):} In this case, we have $\mathit{OPT}(z) = C_c$ and LADTSR makes the combo purchase. First, we need to find the connection between the $\mathit{OPT}(z)$ and the prediction error $\eta$. It turns out that we can show that
\begin{align}
    C_c \leq \frac{\theta}{1 - \theta} \eta. \label{ineq::eta}
\end{align}
Suppose we have $n$ items (index from $1$ to $n$) with single purchase threshold $\theta C_s$ and $m$ items (index from $n+1$ to $n+m$) with threshold $C_s / \theta$. Clearly, we have $n+m=K$. 

Because LADTSR makes the combo purchase, according to Lemma~\ref{lemma::combo}, we have $\sum_{k=1}^K \min \{z_k, \lambda_{s,k} \} \geq \lambda_c$. Therefore, we have
\begin{equation}
    \sum_{k=1}^n \min \{z_k, \theta C_s\} + \sum_{k=n+1}^{n+m} \min \{z_k, \frac{C_s}{\theta}  \} \geq   \frac{C_c}{\theta}.
\end{equation}
Move $\sum_{k=1}^n \min \{z_k, \theta C_s\}$ to the RHS and we have,
\begin{equation}
\begin{aligned}
\sum_{k=n+1}^{n+m} \min \{z_k, \frac{C_s}{\theta}  \} & \geq   \frac{C_c}{\theta} - \sum_{k=1}^n \min \{z_k,  \theta C_s \} \\
&\overset{(a)}{\geq}   \frac{C_c}{\theta}   - n\theta C_s\\
& \geq \frac{1-\theta}{\theta}C_c + (C_c - n \theta C_s  )\\
& \geq \frac{1-\theta}{\theta}C_c + (C_c - n C_s), \label{ineq:yk0}   
\end{aligned}
\end{equation}
where $(a)$ is due to $\min \{z_k, \theta C_s\} \leq \theta C_s$. 

On the other hand, according to Lines~\ref{line:setthreshold_begin}-\ref{line:setthreshold_single} in LADTSR, for any item $k$, if $\lambda_{s, k}=C_s/\theta$, we have $y_k < C_s$; if $\lambda_{s, k} = \theta C_s$, we have $y_k \ge C_s$. Then, Eq.~\eqref{eq:combo-cond2} becomes
\begin{equation} \label{eq:nC_s}
\sum_{k=1}^K \min \{C_s, y_k\} = nC_s + \sum_{k=n+1}^{n+m}y_k < C_c.
\end{equation}
Move $nC_s$ to the RHS, we have
\begin{equation}
\sum_{k=n+1}^{n+m}y_k < C_c - nC_s. \label{ineq::yk}
\end{equation}
Substitute Eq.~\eqref{ineq::yk} into~\eqref{ineq:yk0} we have,
\begin{equation}
\sum_{k=n+1}^{n+m} \min \{z_k, \frac{C_s}{\theta} \} \geq \frac{1-\theta}{\theta}C_c + \sum_{k=n+1}^{n+m} y_k. \label{ineq::yk1}
\end{equation}
On the other hand, we have
\begin{equation}
    \sum_{k=n+1}^{n+m} \min \{z_k, \frac{C_s}{\theta}\} \leq \sum_{k=n+1}^{n+m} z_k \leq \sum_{k=n+1}^{n+m}  (y_k + \eta_k),   \label{ineq::yk2}
\end{equation}
Combining Eqs.~\eqref{ineq::yk1} and~\eqref{ineq::yk2} we obtain,
\begin{equation}
\begin{aligned}
\frac{1-\theta}{\theta}C_c + \sum_{k=n+1}^{n+m} y_k &\leq \sum_{k=n+1}^{n+m}  (y_k + \eta_k), \\
    \frac{1-\theta}{\theta}C_c &\leq \sum_{k=n+1}^{n+m} \eta_k \leq \eta. \label{ineq::2-b1}    
\end{aligned}
\end{equation}
Therefore, we show that $C_c \leq \frac{\theta}{1-\theta}\eta$. 

Next, we analyze $U_{\mathrm{ALG}}(z)$. According to Lemma~\ref{lemma:LA-ALG}, as $f(s_1, s_2) \le s_2 + \frac{\lambda_c - \theta C_s s_2}{C_s/\theta}$, we have
\begin{equation}
\begin{aligned}
U_{\mathrm{ALG}}(z) &\le \lambda_c + C_c + (s_2 + \frac{\lambda_c - \theta C_s s_2}{C_s/\theta})C_s \\
&= \frac{C_c}{\theta} + C_c + (s_2 + \frac{C_c}{C_s} - \theta^2 s_2)C_s \\
&\leq \frac{C_c}{\theta} + C_c + (s_2 + \frac{C_c}{C_s})C_s. \\
\end{aligned}
\end{equation}

Next, we show that $s_2 \le C_c/C_s$. From Eq.~\eqref{eq:nC_s}, as $\sum_{k=n+1}^{n+m}y_k \ge 0$, we have $n < \frac{C_c}{C_s}$. Clearly, we have $s_2 \le n < C_c/C_s$ because $n$ is the total number of items with a threshold equal to $\theta C_s$. Therefore, we have
\begin{equation} \label{eq:UALG-AI}
\begin{aligned}
U_{\mathrm{ALG}}(z) &\leq \frac{C_c}{\theta} + C_c + \frac{2C_c}{C_s}C_s \\
&= C_c + (2+\frac{1}{\theta})C_c.
\end{aligned}
\end{equation}
Substitute Eq.~\eqref{ineq::eta} into Eq.~\eqref{eq:UALG-AI}, we have
\begin{equation}
\begin{aligned}
U_{\mathrm{ALG}}(z) &\leq C_c +  \frac{\theta}{1-\theta}\eta(2+\frac{1}{\theta}) \\
&= \mathit{OPT}(z) + \frac{1+2\theta}{1-\theta}\eta.
\end{aligned}
\end{equation}
Therefore, we have
\begin{equation}
    \mathit{CR}(z) \le \frac{U_{\mathrm{ALG}}(z)}{\mathit{OPT}(z)} = 1 + \frac{1+2\theta}{1-\theta}\frac{\eta}{\mathit{OPT}(z)}.
\end{equation}

\textbf{Case A-II):} In this case, we show that such a total demand does not exist. First, we have
\begin{equation}
\begin{aligned}
\sum_{k=1}^K \min \{z_k, \lambda_{s,k} \} &\overset{(a)}{\leq} \sum_{k=1}^K \min \{z_k,  \frac{C_s}{\theta}  \} \\
    &\leq \sum_{k=1}^K \min \{  \frac{z_k}{\theta}, \frac{C_s}{\theta} \} \\
    & = \frac{1}{\theta}\sum_{k=1}^K \min \{z_k, C_s\} \overset{(b)}{<} \frac{C_c}{\theta}, \\ 
\end{aligned}
\end{equation}
where $(a)$ is due to $\lambda_{s, k} \le C_s/\theta$ for any item $k$ and $(b)$ is due to the condition that the optimal offline algorithm does not make the combo purchase. According to Lemma~\ref{lemma::combo}, it implies that LADTSR does not make the combo purchase, leading to the contradiction.

\textbf{Case B-I):} In this case, LADTSR does not make the combo purchase. According to Lemma~\ref{lemma:LA-ALG}, we have
\begin{equation}
U_{\mathrm{ALG}}^k(z) = 
\begin{cases}
    z_k, & z_k < \lambda_{s,k},\\
    \lambda_{s,k} + C_s, & \text{otherwise}.
\end{cases}     
\end{equation}
Next, we show that for each item $k$, we have
\begin{equation}
    U_{\mathrm{ALG}}^k(z) \leq (1+\theta) (\min \{C_s, y_k\} + \eta_k). \label{ineq:ALG/theta}
\end{equation}
We consider two cases: $y_k \ge C_s$ and $y_k < C_s$.
\begin{itemize}
    \item If $y_k \geq C_s$, we have $\lambda_{s, k} = \theta C_s $ and $\min \{y_k, C_s\} = C_s$. We consider two subcases: $z_k < \theta C_s$ and $z_k \ge \theta C_s$.
    \begin{itemize}
        \item If $z_k < \theta C_s$, we have $U_{\mathrm{ALG}}^k(z) = z_k < \theta C_s \leq C_s$. As we have $C_s = \min \{y_k, C_s\} \le \min \{y_k, C_s\} + \eta_k \le (1+\theta)(\min \{y_k, C_s\} + \eta_k)$, we obtain $U_{\mathrm{ALG}}^k(z) \le (1+\theta) (\min \{C_s, y_k\} + \eta_k)$.
        \item If $z_k \geq \theta C_s$, we have $U_{\mathrm{ALG}}^k(z) =  \theta C_s + C_s \leq (1+\theta) C_s \leq (1+\theta)(\min \{y_k, C_s\} + \eta_k)$.
    \end{itemize}
    \item If $y_k < C_s$, we have $\lambda_{s, k} = C_s/\theta $ and $\min \{y_k, C_s\}= y_k$. Similarly, we consider two subcases: $z_k < C_s/\theta$ and $z_k \ge C_s/\theta$.
    \begin{itemize}
        \item If $z_k <  C_s / \theta$, we have $U_{\mathrm{ALG}}^k(z) = z_k$. As we have $z_k \leq y_k + \eta_k = \min \{y_k, C_s\} + \eta_k \le (1+\theta)(\min \{y_k, C_s\} + \eta_k)$, we obtain $U_{\mathrm{ALG}}^k(z) \le (1+\theta) (\min \{C_s, y_k\} + \eta_k)$.
        \item If $z_k \geq C_s / \theta $, we have $U_{\mathrm{ALG}}^k(z) = C_s/\theta + C_s = (1+\theta)  (C_s / \theta)  \leq (1+\theta) z_k \leq (1+\theta)(\min \{y_k, C_s\} + \eta_k)$. 
    \end{itemize}
\end{itemize}
Combining all the cases, we show that Eq.~\eqref{ineq:ALG/theta} holds. As $\mathit{OPT}(z) = C_c$, we have
\begin{equation}
\begin{aligned}
\mathit{CR}(z) &\le \frac{\sum_{k=1}^K U_{\mathrm{ALG}}^k(z)}{\mathit{OPT}(z)} \\
&\le \frac{(1+\theta)(\sum_{k=1}^K (\min \{y_k, C_s\} + \eta_k))}{C_c} \\
&\overset{(a)}{\leq} \frac{(1+\theta)(C_c + \eta)}{C_c} \\
&= 1 + \theta + (1+\theta)\frac{\eta}{\mathit{OPT}(z)}, 
\end{aligned}
\end{equation}
where $(a)$ is due to Eq.~\eqref{eq:combo-cond2}.

\textbf{Case B-II):} In this case, same as Case B-II when $\lambda_c = \theta^2 C_c$ we have
\begin{equation}
    \mathit{CR}(z) \le 1 + \theta + \frac{1}{1-\theta}\frac{\eta}{\mathit{OPT}(z)}.
\end{equation}

Combining all the four subcases, we derive the upper bound for CR when $\lambda_c = C_c/\theta$, i.e., 
\begin{equation}
\begin{aligned}
\mathit{CR}(z)\leq &\max\{1, 1+\theta\} \\
&\quad + \max\{\frac{1+2\theta}{1-\theta}, 1+\theta, \frac{1}{1-\theta}\} \frac{\eta}{\mathit{OPT}(z)}.
\end{aligned}
\end{equation}
Clearly, we have $\frac{1+2\theta}{1-\theta} \ge \frac{1}{1-\theta} \ge 1 + \theta$ for $\theta \in (0, 1)$. Therefore, we have
\begin{equation}
\mathit{CR}(z) \leq 1+\theta + \frac{1+2\theta}{1-\theta}\frac{\eta}{\mathit{OPT}(z)}.
\end{equation}
Finally, combing the upper bound we derive when $\lambda_c = \theta^2 C_c$ (i.e., Eq.~\eqref{eq:result-theta^2}), we have
\begin{equation}
\begin{aligned}
\mathit{CR}(z) \leq \max \{1 + \theta&, 1 + \theta + \theta^2 \} \\
&+ \max \{\frac{1}{1-\theta}, \frac{1+2\theta}{1-\theta} \} \frac{\eta}{\mathit{OPT}(z)}. 
\end{aligned}
\end{equation}
Clearly, we have $1 + \theta \le 1 + \theta + \theta^2$
and $\frac{1}{1-\theta} \le \frac{1+2\theta}{1-\theta}$.
Therefore, we obtain the upper bound as 
\begin{equation}
    \mathit{CR}(z) \le (1+\theta + \theta^2)+ \frac{1+2\theta}{1-\theta}\frac{\eta}{\mathit{OPT}(z)}.
\end{equation}
According to Lemma~\ref{lemma:opt}, the optimal offline cost only depends on the total demand vector regardless of specific demand sequence $\mathbf{D}$. Therefore, we have
\begin{equation}
    \mathit{CR}(z) \le (1+\theta + \theta^2)+ \frac{1+2\theta}{1-\theta}\frac{\eta}{\mathit{OPT}(\mathbf{D})}.
\end{equation}
\end{proof}

\section{Proof of Lemma~\ref{lemma:LA-ALG}} \label{app:LA-ALG}
\begin{proof}
We prove the result of $U_{\mathrm{ALG}}(z)$ in two cases respectively, i.e., A) LADTSR does not make the combo purchase and B) LADTSR makes the combo purchase.

\textbf{Case A):} In this case, we want to show that
\begin{equation}
U_{\mathrm{ALG}}^k(z) = 
\begin{cases}
    z_k, & z_k < \lambda_{s,k},\\
    \lambda_{s,k} + C_s, & \text{otherwise}.
\end{cases}     
\end{equation}

Because all the demands are integers, therefore $\psi_k(t)$ is always an integer. Therefore, $\psi_k(t) \ge \lambda_{s, k}$ is equivalent to $\psi_k(t) \ge \ceil{\lambda_{s, k}}$. Furthermore, we can directly apply the result in Lemma~\ref{lemma:ALG}. That is, the cost of LADTSR for item $k$ is at most $z_k$ if $z_k < \ceil{\lambda_{s,k}}$, otherwise it is at most $\ceil{\lambda_{s,k}} - 1 + C_s \le \lambda_{s,k} + C_s$. Because $z_k < \ceil{\lambda_{s,k}}$ is equivalent to $z_k < \lambda_{s,k}$ as $z_k$ are integers, we have
\begin{equation}
U_{\mathrm{ALG}}^k(z) = 
\begin{cases}
    z_k, & z_k < \lambda_{s,k},\\
    \lambda_s + C_s, & \text{otherwise}.
\end{cases}     
\end{equation}

\textbf{Case B):} In this case, we want to show that
\begin{equation}
U_{\mathrm{ALG}}(z) = \lambda_c + C_c +f(s_1, s_2)C_s,
\end{equation}
where
\begin{equation}
f(s_1, s_2) = \min \{\frac{\lambda_c}{\theta C_s}, s_2 + \frac{\lambda_c -\theta C_s s_2}{C_s/\theta}, s_1+s_2 \}.    
\end{equation}

Similar to the proof for Lemma~\ref{lemma:ALG}, suppose LADTSR makes $n_s$ single purchases before $t_c$ (recall that $t_c$ is the time-slot when the combo purchase is made), and we will discuss the value of $n_s$ later. Next, we will show that
\begin{equation}
U_{\mathrm{ALG}}(z) = \lambda_c + C_c + n_sC_s.
\end{equation}
To see this, as the combo purchase cost is $C_c$ and the single purchase cost is $n_sC_s$, we only need to show that the total rental cost is $\lambda_c$. To see this, we first show that the rental cost for each item $k$ is at most $\min \{\psi_k(t_c - 1), \lambda_{s, k}\}$.

On the one hand, in any time-slot $t$, if LADTSR chooses to cover the demand by rental (with a rental cost of $a(t)$), we have $\psi_{i(t)}(t) = \psi_{i(t)}(t-1) + a(t)$, which means the associated indicative cost will also increase by $a(t)$. It directly implies that the rental cost for each item $k$ until time-slot $t_c - 1$ cannot exceed $\psi_k(t_c - 1)$. 

On the other hand, the rental cost for each item $k$ cannot exceed $\lambda_{s, k}$ as well. 
We can prove this by contradiction. Soppuse in some time-slot $t$, LADTSR chooses to rent for item $i(t)$ and its rental cost exceeds $\lambda_{s, i(t)}$. As the rental cost is at most $\psi_{i(t)}(t)$, we have $\psi_{i(t)}(t) \ge \lambda_{s, i(t)}$, which directly implies that LADTSR will make the single purchase for $i(t)$, leading to the contradiction.

Therefore, the total rental cost is at most 
\begin{equation}
\sum_{k=1}^K \min \{\psi_k(t_c - 1), \lambda_{s, k}\} \overset{(a)}{=} \psi_c(t_c - 1) \overset{(b)}{\le} \lambda_c,  
\end{equation}
where $(a)$ is due to the definition of $\psi_c(t)$ in Eq.~\eqref{eq:psic} and $(b)$ is due to the combo purchase is not made before $t_c$.

Next, we analyze the value of $n_s$. Recall that $s_1$ is the number of all the items $k$ such that $\lambda_{s, k} = C_s / \theta$ and $z_k \ge C_s / \theta$, and $s_2$ is the number of all the items $k$ such that $\lambda_{s, k} =  \theta C_s$ and $z_k \ge \theta C_s $.
We only need to show that i) $n_s \le \frac{\lambda_c}{\theta C_s}$, ii) $n_s \le s_2 + \frac{\lambda_c- \theta C_s s_2}{C_s/\theta}$, and iii) $n_s\le s_1+s_2$.
\begin{enumerate}[i)]
    \item At the end of time-slot $t_c - 1$, for each item $k$ such that LADTSR makes the single purchase for it, we have $\psi_k(t_c - 1) \ge \lambda_{s, k} \ge \theta C_s$. Therefore, we have $\lambda_c \ge \psi_c(t_c - 1) \ge \sum_{k=1}^K \min \{\psi_k(t_c - 1), \lambda_{s, k}\} \ge n_s \theta C_s$.
    \item Suppose among $n_s$ items, there are $n_1$ items with a threshold equal to $C_s /\theta$ and $n_2$ items with a threshold equal to $\theta C_s$. Clearly, we have $n_1 \le s_1$ and $n_2 \le s_2$ because LADTSR will not make the single purchase for the item that has a total demand less than its single purchase threshold. Then similar to the proof for 1), we have $\lambda_c \ge n_1 C_s/ \theta + n_2 \theta C_s$. As $n_s = n_1 + n_2$, we have
    \begin{equation}
        n_s = n_1 + n_2 \le n_2 + \frac{\lambda_c - \theta C_s n_2}{C_s/\theta} = (1-\theta^2)n_2 + \frac{\theta \lambda_c}{C_s}.
    \end{equation}
    As $1-\theta^2 \ge 0$ and $n_2 \le s_2$, we have
    \begin{equation}
        n_2 \le (1-\theta^2)s_2 + \frac{\theta \lambda_c}{C_s} = s_2 + \frac{\lambda_c - \theta C_s s_2}{C_s/\theta}.
    \end{equation}
    \item Clearly, we have $n_s = n_1 + n_2 \le s_1 + s_2$ because $n_1 \le s_1$ and $n_2 \le s_2$.
\end{enumerate}

\end{proof}

\section{Proof of Robustness} \label{sec::robust}
\begin{proof}
In this section, we prove the other upper bound $1+\theta^{-1} + \theta^{-3}$.
Our goal is still to derive an upper bound of CR over all the demand sequences under any prediction quality. 

The analysis follows a similar pattern to that of Theorem~\ref{thm:cr}. Specifically, we proceed with the proof in two steps:
\begin{enumerate}[{1}a)]
    \item We establish the upper bound of CR for all demand sequences with the same total demand. \\
    \item We define a standard total demand (see Definition~\ref{definition}) to simplify the upper bound we obtained in Step 1a) (see Lemma~\ref{lemma:la-zstd}). \\
    \item We determine the upper bound of CR over all the standard total demand, regardless of the specific demand sequence, which is $1 + \theta^{-1} + \theta^{-3}$.
\end{enumerate}

\textbf{Step 1a):} We can directly apply the result of Lemma~\ref{lemma:LA-ALG} in the proof of consistency.

\textbf{Step 1b):} Given the complexity of $U_{\mathrm{ALG}}(z)$, we establish the standard total demand for LADTSR.
However, the difficulty here is that the single purchase threshold can be two values: $\theta C_s$ or $C_s/\theta$. Therefore, we need to extend our definition of standard total demand for RDTSR. First, we provide the extended definition as follows.
\begin{definition} \label{la:definition}
Given any prediction $y$ and its associated $\lambda_{s, k}$ (for item $k=1, \dots, K$), a total demand $z$ is said to be the standard total demand $z_{\mathrm{std}}^y(m_1, m_2, n_2, x)$ if it satisfies:
\begin{enumerate}[i)]
    \item  for items with a threshold equal to $C_s/\theta$:
    \begin{itemize}
        \item $m_1$ items have a total demand of at least $ \ceil{C_s / \theta}$; 
    \end{itemize}
    \item for items with a threshold equal to $\theta C_s$: 
    \begin{itemize}
        \item $m_2$ items have a total demand of at least $C_s$,
        \item $n_2$ items have a total demand equal to $\ceil{\theta C_s}$;
    \end{itemize}
    \item among all the remaining items, each item $k$ has a total demand within the interval $[0, \min \{C_s, \lambda_{s, k}\})$. Let $x$ represent the sum of their total demands;
\end{enumerate}
here $m_1, m_2, n_2, x$ can be any non-negative integers.
\end{definition}

Next, we present an approach to construct a standard total demand given any total demand $z$, which is presented in Algorithm~\ref{alg::la-construct}. Specifically, we still start with a total demand $z' = \{z'_1, \dots, z'_K\}$ equal to $z$. 
Then we modify $z'$ as follows.
\begin{enumerate}[i)]
    \item If there exists an item $k$ such that $\lambda_{s, k} = C_s/\theta$ and $C_s \le z'_k < \ceil{C_s/\theta}$, set $z'_k$ as $\ceil{C_s/\theta}$.
    \item If there exists an item $k$ such that $\lambda_{s, k} = \theta C_s$ and $\ceil{\theta C_s} \le z'_k < C_s$, set $z'_k$ as $\ceil{\theta C_s}$.
    \item All the remaining items remain the same.
\end{enumerate}

\begin{algorithm}[!tb]
\SetAlgoLined
\SetKwInOut{Input}{Input}\SetKwInOut{Output}{Output}
\Input{$z$, $y$}
\Output{$z'$}
$z' \leftarrow z$\;
Calculate $\lambda_{s, k}$ for each item $k$ according to $y$\;
\While{there exists item $k$ s.t. $\lambda_{s, k} = C_s/\theta$ and $C_s\le z'_k < \ceil{C_s/\theta}$}
{\label{line:la-loop2start}
$z'_k \leftarrow \ceil{C_s/\theta}$\;
\label{line:la-loop2end}
}


\While{there exists item $k$ s.t. $\lambda_{s, k} = \theta C_s$ and $\ceil{\theta C_s} \le z'_k < C_s$}{
\label{line:la-loop3start}
$z'_k \leftarrow \ceil{\theta C_s}$\;
\label{line:la-loop3end}
}
\caption{Construct $z^y_{\mathrm{std}}(m_1, m_2, n_2, x)$ from $z$ and $y$}
\label{alg::la-construct}
\end{algorithm}

We will show that the constructed total demand $z'$ follows Definition~\ref{la:definition} and can bound $U_{\mathrm{CR}}(z)$ in Lemma~\ref{lemma:la-zstd}.


\begin{lemma} \label{lemma:la-zstd}
Given any total demand $z$, the total demand $z'$ constructed by Algorithm~\ref{alg::la-construct} is a standard total demand. Furthermore, we have  
\begin{equation}
  U_{\mathrm{CR}}(z) \le U_{\mathrm{CR}}(z').
\end{equation}
\end{lemma}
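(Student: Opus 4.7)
The plan mirrors the structure of the proof of Lemma~\ref{lemma:zstd}, now with two while loops rather than one. First, I would verify that the output $z'$ satisfies Definition~\ref{la:definition} by inspecting the loop guards: after the first loop, every item $k$ with $\lambda_{s,k} = C_s/\theta$ has either $z'_k \ge \lceil C_s/\theta \rceil$ or $z'_k < C_s$ (equivalently $z'_k < \min\{C_s, \lambda_{s,k}\}$); after the second loop, every item $k$ with $\lambda_{s,k} = \theta C_s$ falls into one of $z'_k \ge C_s$, $z'_k = \lceil \theta C_s \rceil$, or $z'_k < \lceil \theta C_s \rceil$, and the last is equivalent to $z'_k < \theta C_s$ for integer $z'_k$. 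Matching these categories with the definition yields $(m_1, m_2, n_2, x)$.

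For the inequality $U_{\mathrm{CR}}(z) \le U_{\mathrm{CR}}(z')$, since $z'$ is initialized to $z$ and modified iteratively, it suffices to prove two helper lemmas (one per loop) each asserting that a single iteration cannot decrease $U_{\mathrm{CR}}$. The second loop decreases $z_k$ from some $v \in [\lceil \theta C_s \rceil, C_s)$ down to $\lceil \theta C_s \rceil$; this is the direct analogue of Lemma~\ref{std2}. Because $z_k \ge \lambda_{s,k} = \theta C_s$ both before and after, $\min\{z_k, \lambda_{s,k}\} = \theta C_s$ is unchanged, so by Lemma~\ref{lemma::combo} the combo decision is identical, by Lemma~\ref{lemma:LA-ALG} the whole upper bound $U_{\mathrm{ALG}}$ is identical, and $\mathit{OPT}$ weakly decreases since $\min\{z_k, C_s\}$ drops. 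Hence the ratio does not decrease.

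The first loop, which increases $z_k$ from some $v \in [C_s, \lceil C_s/\theta \rceil)$ up to $\lceil C_s/\theta \rceil$, is the main obstacle. Here $\mathit{OPT}$ is unchanged because $\min\{z_k, C_s\} = C_s$ both before and after, but $\sum_j \min\{z_j, \lambda_{s,j}\}$ strictly increases in the $k$-th coordinate (from $v$ to $C_s/\theta$), so the combo purchase trigger of LADTSR can flip from ``not triggered'' to ``triggered.'' I would split into three subcases. If the combo is triggered in neither, Lemma~\ref{lemma:LA-ALG} gives $U_{\mathrm{ALG}}^k$ jumping from $v$ to $\lambda_{s,k} + C_s$, a strict increase, and other items are unaffected. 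If the combo is triggered in both, $s_1$ increments by one while $s_2$ is fixed, and $f(s_1, s_2) = \min\{\lambda_c/(\theta C_s),\, s_2 + (\lambda_c - \theta C_s\, s_2)/(C_s/\theta),\, s_1 + s_2\}$ is clearly monotone non-decreasing in $s_1$, so $U_{\mathrm{ALG}}$ can only increase.

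The delicate subcase is when the combo triggers only after the modification. Here I would pair the pre-modification estimate $U_{\mathrm{ALG}}(z) \le \sum_j \min\{z_j, \lambda_{s,j}\} + (s_1(z) + s_2(z))\, C_s$ (obtained by summing the per-item bounds of Lemma~\ref{lemma:LA-ALG}) together with the non-combo condition $\sum_j \min\{z_j, \lambda_{s,j}\} \le \lambda_c - 1$, and compare against the post-modification value $U_{\mathrm{ALG}}(z') = \lambda_c + C_c + f(s_1(z)+1, s_2(z))\, C_s$, using $C_c > C_s$ and the explicit form of $f$ to close the gap. I expect this third subcase to be the primary difficulty, as it requires reconciling the per-item non-combo bound with the global combo bound at the exact moment the trigger fires; once it is handled, all remaining steps reduce to mechanical applications of Lemmas~\ref{lemma:opt},~\ref{lemma::combo}, and~\ref{lemma:LA-ALG}.
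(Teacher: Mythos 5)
Your plan matches the paper's proof essentially step for step: the same verification of Definition~\ref{la:definition} from the loop guards, the same per-iteration helper lemmas (the paper's Lemmas~\ref{la:std1} and~\ref{la:std2}), the same observation that the second loop leaves $\min\{z_k,\lambda_{s,k}\}$ and hence the combo decision and $U_{\mathrm{ALG}}$ unchanged while $\mathit{OPT}$ weakly decreases, and the same three-way case split for the first loop. The ``delicate'' subcase you flag is resolved in the paper exactly as you outline, by combining $U_{\mathrm{ALG}}(z)=\sum_j\min\{z_j,\lambda_{s,j}\}+(s_1+s_2)C_s$ with the non-combo condition $\sum_j\min\{z_j,\lambda_{s,j}\}<\lambda_c$ and then checking $s_1+s_2\le f(s'_1,s'_2)$ term by term against the explicit form of $f$.
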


We prove Lemma~\ref{lemma:la-zstd} in Appendix~\ref{app:la-zstd}.
Hence, with any prediction $y$ and any total demand $z$, Algorithm~\ref{alg::la-construct} allows us to find a standard total demand to bound $U_{\mathrm{CR}}(z)$.

\textbf{Step 1c):} Given any total demand $z$, we use $z_{\mathrm{std}}^y(m_1,m_2, n_2, x)$ to denote the constructed standard total demand.
To obtain $U_{\mathrm{CR}}(z_{\mathrm{std}}^y(m_1,m_2, n_2, x))$, we consider two cases: A) LADTSR makes the combo purchase on $z_{\mathrm{std}}^y(m_1,m_2, n_2, x)$; B) LADTSR does not make the combo purchase on $z_{\mathrm{std}}^y(m_1,m_2, n_2, x)$.



\textbf{Case A):}  We first analyze $\mathit{OPT}(z_{\mathrm{std}}^y(m_1, m_2, n_2, x))$. According to Definition~\ref{la:definition}, there are $m_1+m_2$ items that have a total demand of at least $C_s$; for the remaining items, their total demand is less than $C_s$ and the sum of their total demand is $n_2\ceil{\theta C_s} + x$. Same as $z_k$, we use $z_{\mathrm{std}}^y(m_1, m_2, n_2, x)_k$ to represent the total demand of item $k$ in $z_{\mathrm{std}}^y(m_1, m_2, n_2, x)$. Then, we can rewrite Eq.~\eqref{eq:OPTcost} as
\begin{equation}
\begin{aligned}
     &\mathit{OPT}(z_{\mathrm{std}}^y(m_1, m_2, n_2, x)) \\
     &\quad = \min \{C_c, \sum_{k=1}^K \min \{C_s, z_{\mathrm{std}}^y(m_1, m_2, n_2, x)_k\} \} \\
     &\quad = \min \{C_c,  (m_1 + m_2)C_s + n_2 \left \lceil \theta C_s \right \rceil + x\}.
\end{aligned}
\end{equation}

Next, we analyze $U_{\mathrm{ALG}}(z_{\mathrm{std}}^y(m_1, m_2, n_2, x))$. For items with a threshold $C_s/\theta$, there are $m_1$ items having a total demand of at least $\ceil{C_s/\theta}\ge C_s/\theta$ (i.e., $s_1 = m_1$). 
Similarly, for items with a threshold $\theta C_s$, there are $m_2$ items having a total demand of at least $C_s \ge \theta C_s$ and $n_2$ items having a total demand of at least $\ceil{\theta C_s} \ge \theta C_s$ (i.e., $s_2=m_2 + n_2$). According to Lemma~\ref{lemma:LA-ALG}, we can rewrite $U_{\mathrm{ALG}}(z_{\mathrm{std}}^y(m_1, m_2, n_2, x))$ as
\begin{equation}
U_{\mathrm{ALG}}(z_{\mathrm{std}}^y(m_1, m_2, n_2, x)) = \lambda_c + C_c + f(m_1, m_2+n_2) C_s.
\end{equation}
Therefore, we can express $U_{\mathrm{CR}}(z_{\mathrm{std}}^y(m_1, m_2, n_2, x))$ as
\begin{equation} \label{eq:laUCR1}
\begin{aligned}
&U_{\mathrm{CR}}(z_{\mathrm{std}}^y(m_1, m_2, n_2, x)) \\
&\quad = \frac{\lambda_c + C_c + f(m_1, m_2+n_2)C_s}{\min \{C_c,  (m_1 + m_2)C_s + n_2 \left \lceil \theta C_s \right \rceil + x\}}.
\end{aligned}
\end{equation}
As LADTSR makes the combo purchase in this case, according to Lemma~\ref{lemma::combo}, we have
\begin{equation} \label{eq:lacombo0}
    \sum_{k=1}^K \min \{z_{\mathrm{std}}^y(m_1, m_2, n_2, x)_k, \lambda_{s,k}\} \ge \lambda_c.
\end{equation}
According to Definition~\ref{la:definition}, there are $m_1$ items with a total demand exceeding its threshold $C_s/\theta$; there are $m_2+n_2$ items with a total demand exceeding its threshold $\theta C_s$; all the remaining items having a total demand less than its threshold and the sum of their total demands is $x$. Therefore, Eq.~\eqref{eq:lacombo0} becomes
\begin{equation} \label{eq:lacombo}
m_1 \frac{C_s}{\theta} + (m_2+n_2)\theta C_s + x  \ge \lambda_c.
\end{equation}

Next, we claim that we only need to consider the case when $m_2=0$. To see this, consider another standard total demand $z'^y_{\mathrm{std}}(m'_1, m'_2, n'_2, x')$ such that $m'_1 =m_1$, $m'_2=0$, $n'_2 = m_2 + n_2$, and $x'=x$. Note that the prediction for $z'^y_{\mathrm{std}}(m'_1, m'_2, n'_2, x')$ remains the same. First, we have
\begin{equation}
\label{eq:lacombo2}
\begin{aligned}
&\sum_{k=1}^K \min \{z'^y_{\mathrm{std}}(m'_1, m'_2, n'_2, x')_k, \lambda_{s, k} \} \\
&\quad \overset{(a)}{=} m'_1 \frac{C_s}{\theta} + (m'_2 + n'_2) \theta C_s + x' \\
&\quad = m_1 \frac{C_s}{\theta} +  (m_2 + n_2) \theta C_s + x \\
&\quad \ge \lambda_c,
\end{aligned}
\end{equation}
where $(a)$ is due to the same reason for Eq.~\eqref{eq:lacombo} (replace $m_1$, $m_2$, $n_2$, $x$ by $m'_1$, $m'_2$, $n'_2$, $x'$).

Eq.~\eqref{eq:lacombo2} implies that LADTSR will make the combo purchase on $z'^y_{\mathrm{std}}(m'_1, m'_2, n'_2, x')$ according to Lemma~\ref{lemma::combo}.
Therefore, according to Lemma~\ref{lemma:LA-ALG}, we have
\begin{equation}
\begin{aligned}
&U_{\mathrm{CR}}(z'^y_{\mathrm{std}}(m'_1, m'_2, n'_2, x')) \\
&\quad = \frac{\lambda_c + C_c +f(m'_1, m'_2+n'_2) C_s}{\min \{C_c,  (m'_1 + m'_2)C_s + n'_2 \left \lceil \theta C_s \right \rceil + x'\}},    
\end{aligned}
\end{equation}
Then, we can show that
\begin{equation} \label{eq:la-m2=0}
    U_{\mathrm{CR}}(z'^y_{\mathrm{std}}(m'_1, m'_2, n'_2, x')) \ge U_{\mathrm{CR}}(z_{\mathrm{std}}^y(m_1, m_2, n_2, x)).
\end{equation} 
To see this, by comparing the denominator, we have
\begin{equation}
\begin{aligned}
&\min \{C_c,  (m'_1 + m'_2)C_s + n'_2 \left \lceil \theta C_s \right \rceil + x'\} \\
&\quad = \min \{C_c,  (m_1+0)C_s + (m_2 + n_2) \left \lceil \theta C_s \right \rceil + x\} \\
&\quad \le \min\{C_c,  (m_1 + m_2) C_s + n_2 \left \lceil \theta C_s \right \rceil + x\}.
\end{aligned}
\end{equation}
For the numerator, as $m'_1 = m_1$ and $m_2'+n'_2 = m_2 + n_2$, we have $f(m'_1, m'_2+n'_2) = f(m_1, m_2+n_2)$.
Therefore, the numerator of $U_{\mathrm{CR}}(z'^y_{\mathrm{std}}(m'_1, m'_2, n'_2, x'))$ is identical to that of $U_{\mathrm{CR}}(z^y_{\mathrm{std}}(m_1, m_2, n_2, x))$, while $U_{\mathrm{CR}}(z'^y_{\mathrm{std}}(m'_1, m'_2, n'_2, x'))$ has a smaller denominator. So we conclude that Eq.~\eqref{eq:la-m2=0} holds. 

Therefore, given any $z^y_{\mathrm{std}}(m_1, m_2, n_2, x)$, there exists another standard total demand $z'^y_{\mathrm{std}}(m'_1, 0, n'_2, x')$ and we have
\begin{equation} \label{eq:laUCR2}
\begin{aligned}
&U_{\mathrm{CR}}(z^y_{\mathrm{std}}(m_1, m_2, n_2, x)) \\
&\quad \le U_{\mathrm{CR}}(z'^y_{\mathrm{std}}(m'_1, 0, n'_2, x')) \\
&\quad = \frac{\lambda_c + C_c +f(m'_1, n'_2) C_s}{\min \{C_c,  m'_1C_s + n'_2 \ceil{\theta C_s} + x'\}},
\end{aligned}
\end{equation}
and Eq.~\eqref{eq:lacombo} 
becomes
\begin{equation} \label{eq:gcon}
m'_1 \frac{C_s}{\theta} + n'_2 \theta C_s + x' \ge \lambda_c.  
\end{equation}
Next, relax $m'_1$, $n'_2$, $x'$ to be non-negative real numbers and let $g(m'_1, n'_2, x')$ denote the RHS of Eq.~\eqref{eq:laUCR2}, i.e.,
\begin{equation} \label{eq:g}
    g(m'_1, n'_2, x') = \frac{\lambda_c + C_c +f(s'_1, s'_2) C_s}{\min \{C_c,  m'_1C_s + n'_2 \ceil{\theta C_s} + x'\}}.
\end{equation}
If we can find an upper bound holds for $g(m'_1, n'_2, x')$ with the constraint Eq.~\eqref{eq:gcon}, it is also an upper bound for $ U_{\mathrm{CR}}(z'^y_{\mathrm{std}}(m'_1, 0, n'_2, x'))$. That is,
\begin{equation} \label{eq:la-com}
\begin{aligned}
&\forall \ m'_1, n'_2, x' \in \mathbb{N}, U_{\mathrm{CR}}(z'^y_{\mathrm{std}}(m'_1, 0, n'_2, x'))\\
&\quad \le \max_{m'_1, n'_2, x' \in \mathbb{N}} U_{\mathrm{CR}}(z'^y_{\mathrm{std}}(m'_1, 0, n'_2, x')) \\
&\quad \le \max_{m'_1, n'_2, x' \in \mathbb{R}^+\cup \{0\}} g(m'_1, n'_2, x').
\end{aligned}
\end{equation}


Then, we claim that to find the upper bound for the function $g(m'_1, n'_2, x')$, we only need to focus on the case when $m'_1C_s/\theta + n'_2 \theta C_s + x' = \lambda_c$. 
To see this, for all the possible $m'_1$, $n'_2$, and $x'$ such that $m'_1C_s/\theta + n'_2 \theta C_s + x' > \lambda_c$, we show that there exists another pair of $\Tilde{m}_1$, $\Tilde{n}_2$, and $\Tilde{x}$ such that $\Tilde{m}_1 C_s/\theta +\Tilde{n}_2 \theta C_s + \Tilde{x} = \lambda_c$ and $g(\Tilde{m}_1, \Tilde{n}_2, \Tilde{x}) \ge g(m'_1, n'_2, x')$.
Specifically, we consider two cases: if $x'=0$ and if $x' > 0$.
\begin{enumerate}[i)]
    \item If $x'=0$, we consider two subcases: if $m'_1=0$ and if $m'_1 > 0$.
    \begin{itemize}
        \item If $m'_1=0$, then Eq.~\eqref{eq:g} becomes
        \begin{equation} \label{eq:laUCR3}
        g(m'_1, n'_2, x') = \frac{\lambda_c + C_c +f(0, n'_2) C_s}{\min \{C_c,  n'_2\ceil{\theta C_s} \}},  
        \end{equation}
where 
\begin{equation}
f(0, n'_2) = \min \{\frac{\lambda_c}{\theta C_s}, n'_2 + \frac{\lambda_c -\theta C_s n'_2}{C_s/\theta}, n'_2\}.
\end{equation}
Also, we have $n'_2\theta C_s > \lambda_c$. Therefore, for $f(0, n'_2)$ we have

\begin{equation}
\begin{aligned}
n'_2 + \frac{\lambda_c -\theta C_s n'_2}{C_s/\theta} &= (1-\theta^2)n'_2 + \frac{\theta\lambda_c}{C_s} \\
& > (1 - \theta^2)\frac{\lambda_c}{\theta C_s} + \frac{\theta\lambda_c}{C_s} \\
&= \frac{\lambda_c}{\theta C_s}.
\end{aligned}
\end{equation}
As $n'_2 > \frac{\lambda_c}{\theta C_s}$, then we have $f(0, n'_2) = \frac{\lambda_c}{\theta C_s}$. 
Therefore, Eq.~\eqref{eq:laUCR3} becomes
\begin{equation}
g(m'_1, n'_2, x') = \frac{\lambda_c + C_c + \frac{\lambda_c}{\theta C_s} C_s}{\min \{C_c,  n'_2 \ceil{\theta C_s} \}}.  
\end{equation}
Let $\Tilde{n}_2 = \frac{\lambda_c}{\theta C_s} < n'_2$ and $\Tilde{m}_1 = \Tilde{x} = 0$, then we have
\begin{equation} 
\begin{aligned}
g(\Tilde{m}_1, \Tilde{n}_2, \Tilde{x}) &= \frac{\lambda_c + C_c + \frac{\lambda_c}{\theta C_s} C_s}{\min \{C_c,  \Tilde{n}_2 \ceil{\theta C_s} \}} \\
&\ge \frac{\lambda_c + C_c + \frac{\lambda_c}{\theta C_s} C_s}{\min \{C_c,  n'_2 \ceil{\theta C_s} \}} \\
&= g(m'_1, n'_2, x').
\end{aligned}
\end{equation}
 
\item If $m'_1 > 0$, because $m'_1C_s/\theta  + n'_2\theta C_s > \lambda_c$, then for $f(m'_1, n'_2)$ we have
\begin{equation}
n'_2 + \frac{\lambda_c - \theta C_s n'_2}{C_s/\theta} < n'_2 + \frac{m'_1 C_s/\theta}{C_s/\theta} = n'_2 + m'_1.
\end{equation}
Therefore, Eq.~\eqref{eq:g} becomes
\begin{equation} \label{eq:laUCR4}
g(m'_1, n'_2, x') = \frac{\lambda_c + C_c +f(m'_1, n'_2) C_s}{\min \{C_c, m'_1C_s + n'_2 \ceil{\theta C_s} \}}, 
\end{equation}
where 
\begin{equation}
f(m'_1, n'_2) = \min \{\frac{\lambda_c}{\theta C_s}, n'_2 + \frac{\lambda_c - \theta C_s n'_2}{C_s/\theta}\}.
\end{equation}
If $n'_2\theta C_s \le \lambda_c$, let $\Tilde{m}_1C_s/\theta + n'_2\theta C_s = \lambda_c$ ($\Tilde{m}_1 \le m'_1$), $\Tilde{n}_2=n'_2$, and $\Tilde{x}=0$, we have
\begin{equation} \label{eq:compare3}
\begin{aligned}
g(\Tilde{m}_1, \Tilde{n}_2, \Tilde{x}) &= \frac{\lambda_c + C_c +f(\Tilde{m}_1, \Tilde{n}_2) C_s}{\min \{C_c, \Tilde{m}_1C_s + \Tilde{n}_2 \ceil{\theta C_s}\}} \\
&= \frac{\lambda_c + C_c +f(m'_1, n'_2) C_s}{\min \{C_c, \Tilde{m}_1C_s + n'_2 \ceil{\theta C_s}\}} \\
&\ge \frac{\lambda_c + C_c +f(m'_1, n'_2) C_s}{\min \{C_c, m'_1C_s + n'_2 \ceil{\theta C_s}\}} \\
&= g(m'_1, n'_2, x').
\end{aligned}
\end{equation}
If $n'_2\theta C_s > \lambda_c$, then set $\Tilde{m}_1=0$, $\Tilde{n}_2=\frac{\lambda_c}{\theta C_s}$ and $\Tilde{x}=0$, we have
\begin{equation}
\begin{aligned}
g(\Tilde{m}_1, \Tilde{n}_2, \Tilde{x}) &= g(0, \Tilde{n}_2, 0) \\
&\overset{(a)}{\ge} g(0, n'_2, 0) \\
&\overset{(b)}{\ge} g(m'_1, n'_2, 0) \\
&= g(0, \Tilde{n}_2, x'),
\end{aligned}
\end{equation}
where $(a)$ is shown in the case when $m'_1=x'=0$ and $(b)$ is shown in Eq.~\eqref{eq:compare3}.

\end{itemize}
\item When $x'>0$, if $m'_1C_s/\theta + n'_2\theta C_s \le \lambda_c$, let $m'_1C_s/\theta + n'_2\theta C_s + \Tilde{x} = \lambda_c$ ($\Tilde{x} < x'$), $\Tilde{m}_1=m'_1$, and $\Tilde{n}_2=n'_2$. Then, from Eq.~\eqref{eq:g} we have
\begin{equation}
\begin{aligned}
g(\Tilde{m}_1, \Tilde{n}_2, \Tilde{x}) &= \frac{\lambda_c + C_c +f(\Tilde{m}_1, \Tilde{n}_2) C_s}{\min \{C_c, \Tilde{m}_1C_s+ \Tilde{n}_2\ceil{\theta C_s} + \Tilde{x} \}} \\
&= \frac{\lambda_c + C_c +f(m'_1, n'_2) C_s}{\min \{C_c, m'_1C_s+ n'_2\ceil{\theta C_s} + \Tilde{x} \}} \\
&\ge \frac{\lambda_c + C_c +f(m'_1, n'_2) C_s}{\min \{C_c, m'_1C_s+ n'_2\ceil{\theta C_s} + x' \}} \\
&= g(m'_1, n'_2, x').
\end{aligned}
\end{equation}

If $m'_1C_s/\theta + n'_2\theta C_s > \lambda_c$, let $\Tilde{x}=0$, similarly we have
\begin{equation}
g(m'_1, n'_2, \Tilde{x}) \le g(m'_1, n'_2, x').
\end{equation}
And we already show how to bound $g(m'_1, n'_2, 0)$ in the case when $x'=0$.

\end{enumerate}
Therefore, we can conclude that if $m'_1C_s/\theta + n'_2\theta C_s + x' > \lambda_c$, there exists non-negative real variables $\Tilde{m}_1, \Tilde{n}_2, \Tilde{x}$ such that $\Tilde{m}_1 C_s/\theta + \Tilde{n}_2 \theta C_s + \Tilde{x}=\lambda_c$ and $g(\Tilde{m}_1, \Tilde{n}_2, \Tilde{x}) \ge g(m'_1, n'_2, x')$.

Next, we focus on $g(\Tilde{m}_1, \Tilde{n}_2, \Tilde{x})$ with $\Tilde{m}_1 C_s/\theta + \Tilde{n}_2 \theta C_s + \Tilde{x}=\lambda_c$. Substitute $\Tilde{n}_2 \theta C_s + \Tilde{x} = \lambda_c -  \Tilde{m}_1C_s/\theta$ into $g(\Tilde{m}_1, \Tilde{n}_2, \Tilde{x})$, we have
\begin{equation} \label{eq:laUCR5}
\begin{aligned}
g(\Tilde{m}_1, \Tilde{n}_2, \Tilde{x})& \le \frac{\lambda_c + C_c + f(\Tilde{m}_1, \Tilde{n}_2) C_s}{\min \{C_c,  \Tilde{m}_1C_s + \Tilde{n}_2 \ceil{\theta C_s} + \Tilde{x}\}} \\
&\le \frac{\lambda_c + C_c + f(\Tilde{m}_1, \Tilde{n}_2) C_s}{\min \{C_c,  \Tilde{m}_1C_s + \Tilde{n}_2 \theta C_s + \Tilde{x}\}} \\
&= \frac{\lambda_c + C_c + f(\Tilde{m}_1, \Tilde{n}_2) C_s}{\min \{C_c,  \lambda_c - \Tilde{m}_1 (\frac{C_s}{\theta} - C_s)\}},
\end{aligned}  
\end{equation}
where
\begin{equation}
f(\Tilde{m}_1, \Tilde{n}_2) = \min \{\frac{\lambda_c}{\theta C_s}, \Tilde{n}_2 + \frac{\lambda_c - \theta C_s \Tilde{n}_2}{C_s/\theta}, \Tilde{m}_1+\Tilde{n}_2 \}.
\end{equation}
As $f(\Tilde{m}_1, \Tilde{n}_2) \le \Tilde{m}_1+\Tilde{n}_2$, we have
\begin{equation} \label{eq:laUCR6}
g(\Tilde{m}_1, \Tilde{n}_2, \Tilde{x}) \le  \frac{\lambda_c + C_c + (\Tilde{m}_1+\Tilde{n}_2) C_s}{\min \{C_c,  \lambda_c - \Tilde{m}_1(\frac{C_s}{\theta} - C_s)\}}.
\end{equation}
Also, because $\Tilde{m}_1 \frac{C_s}{\theta} + \Tilde{n}_2 \theta C_s + \Tilde{x} = \lambda_c$ and $\Tilde{x} \ge 0$, we have
\begin{equation}
 \Tilde{n}_2 = \frac{\lambda_c - \Tilde{x} - \Tilde{m}_1C_s/\theta}{\theta C_s} \le \frac{\lambda_c - \Tilde{m}_1C_s/\theta}{\theta C_s}.
\end{equation}
Therefore, Eq.~\eqref{eq:laUCR6} becomes
\begin{equation} \label{eq:laUCR7}
g(\Tilde{m}_1, \Tilde{n}_2, \Tilde{x}) \le \frac{\lambda_c + C_c +(\Tilde{m}_1 + \frac{\lambda_c - \Tilde{m}_1C_s/\theta}{\theta C_s})C_s}{\min \{C_c,  \lambda_c - \Tilde{m}_1 (\frac{C_s}{\theta} - C_s)\}}.  
\end{equation}

Next, to analyze the upper bound of CR holds for any $\Tilde{m}_1$, we consider two cases: if $\lambda_c = \theta^2 C_c$ and if $\lambda_c = C_c/\theta$. 

If $\lambda_c = \theta^2 C_c$, then Eq.~\eqref{eq:laUCR7} becomes
\begin{equation} \label{eq:laUCR7-1}
\begin{aligned}
& g(\Tilde{m}_1, \Tilde{n}_2, \Tilde{x}) \\
&\quad \le  \frac{\theta^2 C_c + C_c +(\Tilde{m}_1 + \frac{\theta^2 C_c - \Tilde{m}_1C_s/\theta}{\theta C_s}) C_s}{\min \{C_c,  \theta^2 C_c - \Tilde{m}_1 (\frac{C_s}{\theta} - C_s)\}} \\
&\quad = \frac{\theta^2 C_c + C_c +(\Tilde{m}_1 + \frac{\theta^2 C_c - \Tilde{m}_1C_s/\theta}{\theta C_s}) C_s}{\min \{C_c,  \theta^2 C_c - \Tilde{m}_1C_s (\frac{1}{\theta} - 1)\}}.
\end{aligned}
\end{equation}
As $1/\theta - 1 \ge 0$, for the denominator we have
\begin{equation}
\theta^2 C_c - \Tilde{m}_1C_s(\frac{1}{\theta} - 1) \le \theta^2 C_c \le C_c.
\end{equation} 
Therefore, Eq.~\eqref{eq:laUCR7-1} becomes
\begin{equation} \label{eq:laUCR7-2}
\begin{aligned}
&g(\Tilde{m}_1, \Tilde{n}_2, \Tilde{x}) \\
&\quad \le \frac{\theta^2 C_c + C_c +(\Tilde{m}_1 + \frac{\theta^2 C_c - \Tilde{m}_1C_s/\theta}{\theta C_s}) C_s}{\theta^2 C_c - \Tilde{m}_1(\frac{C_s}{\theta} - C_s)} \\
&\quad = \frac{(1 + \theta + \theta^2) C_c + \Tilde{m}_1C_s (1 - \frac{1}{\theta^2})}{\theta^2 C_c + \Tilde{m}_1C_s(1 - \frac{1}{\theta})}. 
\end{aligned} 
\end{equation}
As $\Tilde{m}_1$, $\Tilde{n}_2$ ,and $\Tilde{x}$ are non-negative and $\Tilde{m}_1 C_s/\theta + \Tilde{n}_2\theta C_s + \Tilde{x} = \theta^2 C_c$, we have
\begin{equation}
    0 \le \Tilde{m}_1 = \frac{\theta^2 C_c - \Tilde{x} - \theta C_s \Tilde{n}_2}{C_s/\theta} \le \frac{\theta^3 C_c}{C_s}.
\end{equation}
Substitute $0 \le \Tilde{m}_1 \le \frac{\theta^3 C_c}{C_s}$ into the denominator of Eq.~\eqref{eq:laUCR7-2} we have
\begin{equation}
    0 < \theta^3 C_c \le \theta^2 C_c + \Tilde{m}_1 C_s (1-\frac{1}{\theta}) \le \theta^2 C_c.
\end{equation}
Therefore, The RHS function in Eq.~\eqref{eq:laUCR7-2} is defined over $\Tilde{m}_1 \in [0, \frac{\theta^3 C_c}{C_s}]$. On the other hand, by taking derivative of the RHS function to $\Tilde{m}_1$, we have
\begin{equation}
\begin{aligned}
& \frac{d}{d \Tilde{m}_1} \left ( \frac{(1 + \theta + \theta^2) C_c + \Tilde{m}_1C_s (1 - \frac{1}{\theta^2})}{\theta^2 C_c + \Tilde{m}_1C_s(1 - \frac{1}{\theta})} \right ) \\
&\quad = \begin{aligned}[t]
    &\frac{C_s(1-\frac{1}{\theta^2})(\theta^2 C_c + \Tilde{m}_1C_s(1-\frac{1}{\theta}))}{(\theta^2 C_c + \Tilde{m}_1C_s(1-\frac{1}{\theta}))^2} \\
    &\quad -  \frac{C_s(1-\frac{1}{\theta})((1 + \theta + \theta^2) C_c + \Tilde{m}_1C_s (1 - \frac{1}{\theta^2}))}{(\theta^2 C_c + \Tilde{m}_1C_s(1-\frac{1}{\theta}))^2}
\end{aligned} \\
&\quad = \frac{C_s(1-\frac{1}{\theta^2})\theta^2C_c + C_s(1-\frac{1}{\theta})(1 + \theta + \theta^2) C_c}{(\theta^2 C_c + \Tilde{m}_1C_s(1-\frac{1}{\theta}))^2}
\end{aligned}
\end{equation}
As the denominator is non-negative and the numerator is independent of $\Tilde{m}_1$, the RHS function in Eq.~\eqref{eq:laUCR7-2} is monotone to $\Tilde{m}_1$.
Therefore, it takes the maximum only when $\Tilde{m}_1=0$ or $\Tilde{m}_1=\frac{\theta^3 C_c}{C_s}$.
When $\Tilde{m}_1 = 0$, we have
\begin{equation}
\begin{aligned}
    \frac{(1 + \theta + \theta^2) C_c + \Tilde{m}_1C_s (1 - \frac{1}{\theta^2})}{\theta^2 C_c + \Tilde{m}_1C_s(1 - \frac{1}{\theta})} &= \frac{(1 + \theta + \theta^2) C_c}{\theta^2 C_c} \\
    &= 1 + \frac{1}{\theta} + \frac{1}{\theta^2}.
\end{aligned}
\end{equation}
When $\Tilde{m}_1 = \frac{\theta^3 C_c}{C_s}$, we have
\begin{equation}
\begin{aligned}
    &\frac{(1 + \theta + \theta^2) C_c + \Tilde{m}_1 C_s (1 - \frac{1}{\theta^2})}{\theta^2 C_c + \Tilde{m}_1C_s(1 - \frac{1}{\theta})} \\
    &\quad = \frac{(1 + \theta + \theta^2) C_c + \frac{\theta^3 C_c}{C_s} C_s (1 - \frac{1}{\theta^2})}{\theta^2 C_c + \frac{\theta^3 C_c}{C_s}C_s(1 - \frac{1}{\theta})} \\
    &\quad= \frac{(1 + \theta +  \theta^2 + \theta^3 - \theta) C_c}{(\theta^2 + \theta^3 - \theta^2) C_c} \\
    &\quad= \frac{1+\theta+\theta^3}{\theta^3} \\
    &\quad = 1 + \frac{1}{\theta} + \frac{1}{\theta^3}  \ge 1 + \frac{1}{\theta} + \frac{1}{\theta^2}.
\end{aligned}   
\end{equation}
Therefore, combining Eqs.~\eqref{eq:laUCR2} and~\eqref{eq:la-com} we have
\begin{equation}
U_{\mathrm{CR}}(z^y_{\mathrm{std}}(m_1, n_1, n_2, x)) \le 1 + \frac{1}{\theta} + \frac{1}{\theta^3}.
\end{equation}

Next, we turn to $\lambda_c = C_c / \theta$, then Eq.~\eqref{eq:laUCR7} becomes
\begin{equation} \label{eq:laUCR8-1}
\begin{aligned}
& g(\Tilde{m}_1, \Tilde{n}_2, \Tilde{x}) \\
&\quad \le \frac{C_c/\theta + C_c +(\Tilde{m}_1 + \frac{C_c/\theta - \Tilde{m}_1C_s/\theta}{\theta C_s}) C_s}{\min \{C_c,  C_c/\theta - \Tilde{m}_1C_s (\frac{1}{\theta} - 1)\}}. 
\end{aligned}
\end{equation}
Because $\Tilde{m}_1$, $\Tilde{n}_2$, and $\Tilde{x}$ are non-negative and $\Tilde{m}_1 C_s/\theta +  \Tilde{n}_2\theta C_s + \Tilde{x} = C_c/\theta$, we have
\begin{equation}
    0 \le \Tilde{m}_1 = \frac{C_c/\theta - \Tilde{n}_2\theta C_s - \Tilde{x}}{C_s/\theta} \le \frac{C_c}{C_s}.
\end{equation}
Therefore, for the denominator in Eq.~\eqref{eq:laUCR8-1} we have
\begin{equation}
\frac{C_c}{\theta} - \Tilde{m}_1C_s (\frac{1}{\theta} - 1) \ge \frac{C_c}{\theta} - \frac{C_c}{C_s}C_s (\frac{1}{\theta} - 1) \ge C_c.
\end{equation}
Therefore, Eq.~\eqref{eq:laUCR8-1} becomes
\begin{equation}
\begin{aligned}
&g(\Tilde{m}_1, \Tilde{n}_2, \Tilde{x}) \\
&\quad \le  \frac{C_c/\theta + C_c +(\Tilde{m}_1 + \frac{C_c/\theta - \Tilde{m}_1C_s/\theta}{\theta C_s}) C_s}{C_c} \\
&\quad = (1 + \frac{1}{\theta} + \frac{1}{\theta^2}) + \frac{C_s (1-\frac{1}{\theta^2})}{C_c}\Tilde{m}_1 \\
&\quad \overset{(a)}{\le} 1 + \frac{1}{\theta} + \frac{1}{\theta^2} \le 1 + \frac{1}{\theta} + \frac{1}{\theta^3},
\end{aligned}   
\end{equation}
where $(a)$ is due to the term $\frac{C_s(1-\theta^{-2})}{C_c}\Tilde{m}_1 \le 0$. 
Therefore, combining Eqs.~\eqref{eq:laUCR2} and~\eqref{eq:la-com} we have
\begin{equation}
U_{\mathrm{CR}}(z^y_{\mathrm{std}}(m_1, n_1, n_2, x)) \le 1 + \frac{1}{\theta} + \frac{1}{\theta^3}.
\end{equation}
Now, we conclude that the CR of LADTSR is upper bounded by $1 + \theta^{-1} + \theta^{-3}$ in Case A.

\textbf{Case B):}  In this case, LADTSR does not make the combo purchase. We consider two subcases: I) If the optimal offline algorithm does not make the combo purchase. II) If it makes the combo purchase.

\textbf{Case B-I):} We show that the CR upper bound $1 + \theta^{-1} + \theta^{-3}$ holds for all the total demand directly without relying on the standard total demand. In this case, the problem degenerates into the ski-rental problem (i.e., $z_k$ is the ski day for item $k$ and $C_s$ is the purchase cost).
Therefore, we can directly apply Theorem~$2.2$ in their work, i.e., for each item $k$, the cost of LADTSR for item $k$ (denoted by $\mathit{ALG}_k(z)$) is at most $(1 + \theta^{-1}) \mathit{OPT}_k(z)$, where $\mathit{OPT}_k(z)$ is the cost of the optimal offline algorithm for item $k$. As the total cost of both algorithms in Case B-I is the sum of their cost for each item, we have
\begin{equation}
\begin{aligned}
\mathit{CR}(z) &= \frac{\sum_{k=1}^K \mathit{ALG}_k(z)}{\sum_{k=1}^K \mathit{OPT}_k(z)} \\
& \le (1 + \theta^{-1})  \frac{\sum_{k=1}^K \mathit{OPT}_k(z) }{\sum_{k=1}^K \mathit{OPT}_k(z)} \\
& = 1 + \frac{1}{\theta} \le 1 + \frac{1}{\theta} + \frac{1}{\theta^3}.
\end{aligned}
\end{equation}


\textbf{Case B-II):} We can show that this case is dominated by Case A. First, we have $\mathit{OPT}(z^y_{\mathrm{std}}(m_1, m_2, n_2, x)) = C_c$. For the cost of LADTSR, rewrite Eq.~\eqref{eq:lacase1} we have
\begin{equation}
\begin{aligned}
    &U_{\mathrm{ALG}}(z^y_{\mathrm{std}}(m_1, m_2, n_2, x)) \\
    \quad &= m_1(\frac{C_s}{\theta}+ C_s) + (m_2 + n_2)(\theta C_s + C_s) + x.
\end{aligned}
\end{equation}
Therefore, for $U_{\mathit{CR}}(z^y_{\mathrm{std}}(m_1, m_2, n_2, x))$ we have
\begin{equation} \label{eq:laA-II}
\begin{aligned}
    &U_{\mathit{CR}}(z^y_{\mathrm{std}}(m_1, m_2, n_2, x)) \\
    \quad &= \frac{m_1(\frac{C_s}{\theta}+ C_s) + (m_2 + n_2)(\theta C_s + C_s) + x}{C_c}.
\end{aligned}
\end{equation}
As LADTSR does not make the combo purchase, according to Lemma~\ref{lemma::combo}, we have
\begin{equation} \label{eq:laA-IIcon}
\begin{aligned}
&\sum_{k=1}^K \min \{z^y_{\mathrm{std}}(m_1, m_2, n_2, x)_k, \lambda_{s, k} \} \\
&\quad= m_1 \frac{C_s}{\theta} + (m_2 + n_2)\theta C_s + x \\
&\quad < \lambda_c.
\end{aligned}
\end{equation}
Then Eq.~\eqref{eq:laA-II} becomes
\begin{equation} 
    U_{\mathit{CR}}(z^y_{\mathrm{std}}(m_1, m_2, n_2, x)) < \frac{\lambda_c + (m_1 + m_2 + n_2)C_s}{C_c}.
\end{equation}
Then we can find another standard total demand $z'^{y'}_{\mathrm{std}}(m'_1, m'_2, n'_2, x')$ such that
\begin{equation} \label{eq:BIIcon}
\begin{aligned}
& m'_1 \ge m_1, m'_2 = m_2, n'_2 = n_2, x' = x, \\
& m'_1 \frac{C_s}{\theta} + (m'_2 + n'_2)\theta C_s + x' \ge \lambda_c.
\end{aligned}
\end{equation}
Note that $z'^{y'}_{\mathrm{std}}(m'_1, m'_2, n'_2, x')$ has a different prediction $y'$ as it has more items compared to $z^y_{\mathrm{std}}(m_1, m_2, n_2, x)$. 
For these extra items, we can assume their predicted total demand is $0$ because the single purchase threshold for these items is $C_s/\theta$, which implies their predicted total demand is less than $C_s$. For any other item, we assume the predicted total demand is the same as $y$. Let $K'$ be the number of items in $z'^{y'}_{\mathrm{std}}(m'_1, m'_2, n'_2, x')$. Then, we have
\begin{equation}
\sum_{k=1}^{K'} \min \{y'_k, C_s\} = \sum_{k=1}^{K} \min \{y_k, C_s\}.
\end{equation}
According to Lines~\ref{line:setthreshold_combo}-\ref{line:setthreshold_end} of LADTSR, it means LADTSR has the same combo purchase threshold on both $y$ and $y'$. According to the construction (i.e., Eq.~\eqref{eq:BIIcon}) and Lemma~\ref{lemma::combo}, LADTSR will make the combo purchase on $z'^{y'}_{\mathrm{std}}(m'_1, m'_2, n'_2, x')$.

Therefore, we have
\begin{equation}
\begin{aligned}
&U_{\mathit{CR}}(z'^{y'}_{\mathrm{std}}(m'_1, m'_2, n'_2, x')) \\
&\quad = \frac{\lambda_c + C_c + f(m'_1, m'_2+n'_2)C_s}{C_c}. 
\end{aligned}
\end{equation}
Next we show that $m_1 + m_2 + n_2 \le f(m'_1, m'_2+n'_2)$.
\begin{enumerate}[i)]
    \item First, from Eq.~\eqref{eq:laA-IIcon} we have
    \begin{equation}
    \begin{aligned}
        (m_1 + m_2 + n_2)\theta C_s &\le m_1 \frac{C_s}{\theta} + (m_2 + n_2)\theta C_s + x\\
        & < \lambda_c.
    \end{aligned}
    \end{equation}
    Therefore, we have $m_1 + n_1 + m_2 < \frac{\lambda_c}{\theta C_s}$.
    \item Second, as $m'_1 \ge m_1$, $m'_2 = m_2$, and $n'_2 = n_2$, we have
    \begin{equation}
        m_1 + m_2 + n_2 \le m'_1 + m'_2 + n'_2.
    \end{equation}
    \item Finally, from Eq.~\eqref{eq:laA-IIcon} we have
    \begin{equation}
        m_1 < \frac{\lambda_c - \theta C_s (m_2 + n_2)}{C_s/\theta}.
    \end{equation}
    Therefore, we have
    \begin{equation}
    \begin{aligned}
    m_1 + m_2 + n_2 &\le m_2 + n_2 + \frac{\lambda_c - \theta C_s (m_2 + n_2)}{C_s/\theta} \\
    &= m'_2 + n'_2 + \frac{\lambda_c - \theta C_s (m'_2 + n'_2)}{C_s/\theta}.
    \end{aligned}
    \end{equation}
\end{enumerate}
Therefore, we conclude that $m_1 + m_2 + n_2 \le f(m'_1, m'_2+n'_2)$. Furthermore, we have $U_{\mathit{CR}}(z^y_{\mathrm{std}}(m_1, m_2, n_2, x)) \le U_{\mathit{CR}}(z'^{y'}_{\mathrm{std}}(m'_1, m'_2, n'_2, x'))$. On the other hand, we have $\mathit{OPT}(z'^{y'}_{\mathrm{std}}(m'_1, m'_2, n'_2, x')) = C_c$ with increased $m'_1$. Therefore, we have
\begin{equation*}
U_{\mathit{CR}}(z'^{y'}_{\mathrm{std}}(m'_1, m'_2, n'_2, x')) \ge U_{\mathit{CR}}(z^{y}_{\mathrm{std}}(m_1, m_2, n_2, x)).   
\end{equation*}
As we have shown $U_{\mathit{CR}}(z'^{y'}_{\mathrm{std}}(m'_1, m'_2, n'_2, x')) \le 1 + \theta^{-1} + \theta^{-3}$ in Case A, we conclude the proof for Case B-II.

Finally, by combining Case A and B, we conclude that the CR of LADTSR is upper bounded by $1 + \theta^{-1} + \theta^{-3}$.
\end{proof}

\section{Proof of Lemma~\ref{lemma:la-zstd}} \label{app:la-zstd}
\begin{proof}
First, we show that the total demand $z'$ constructed by Algorithm~\ref{alg::la-construct} is a standard total demand. First, recall that there are two loops in Algorithm~\ref{alg::la-construct}, i.e., Lines~\ref{line:la-loop2start}-\ref{line:la-loop2end}, and  Lines~\ref{line:la-loop3start}-\ref{line:la-loop3end}. For each item, we consider two cases: $\lambda_{s, k} = C_s/\theta$ and $\lambda_{s, k}=\theta C_s$. 
\begin{itemize}

    \item If $\lambda_{s, k} = C_s/\theta$, after the first loop, the total demand $z'_k$ can be either $z'_k \ge \ceil{C_s/\theta}$ or $[0, C_s)$. That is, except $z'_k \ge \ceil{C_s/\theta}$, the total demand can only be within the interval $[0, \min\{C_s, \lambda_{s, k}\})$.
    \item If $\lambda_{s, k} = \theta C_s$, after the second loop, there are only three cases for the total demand $z'_k$: $z'_k \ge C_s$, $z'_k=\ceil{\theta C_s}$, and $0 \le z'_k < \theta C_s$. That is, except $z'_k \ge C_s$ and $z_k = \ceil{\theta C_s}$, the total demand can only be within the interval $[0, \min\{C_s, \lambda_{s, k}\})$.
\end{itemize}
Combining two cases, we conclude that $z'$ satisfies $(a)$, $(b)$, $(c)$ at the same time.

Second, we show that $U_{\mathrm{CR}}(z') \ge U_{\mathrm{CR}}(z)$. Initially, we have $z' = z$. Then, For each iteration in the first loop, we will show that $U_{\mathrm{CR}}(z')$ does not decrease in Lemma~\ref{la:std1}. For each iteration in the second loop, we will show that $U_{\mathrm{CR}}(z')$ does not decrease in Lemma~\ref{la:std2}. Then we can conclude that $U_{\mathrm{CR}}(z')\ge U_{\mathrm{CR}}(z)$.
\end{proof}

\begin{lemma} \label{la:std1}
Given any total demand $z$, if there exists item $j$ such that, $\lambda_{s,j} = C_s/\theta$ and $C_s \le z_j < \ceil{\lambda_{s,j}}$, then there exists another total demand $z'=\{z'_1, \dots, z'_K\}$ such that
\begin{equation} \label{la:std1:z'}
z'_k = 
\begin{cases}
\ceil{\lambda_{s,j}}, &k=j,\\
z_k, &k\neq j,
\end{cases}
\end{equation}
and we have $U_{\mathrm{CR}}(z) \le U_{\mathrm{CR}}(z')$.
\end{lemma}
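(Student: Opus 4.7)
The plan is to mirror the structure used in the proof of Lemma~\ref{std2}: reduce $U_{\mathrm{CR}}(z) \le U_{\mathrm{CR}}(z')$ to the two sub-claims (i) $\mathit{OPT}(z) \ge \mathit{OPT}(z')$ and (ii) $U_{\mathrm{ALG}}(z) \le U_{\mathrm{ALG}}(z')$. The wrinkle relative to Lemma~\ref{std2} is that here the modification \emph{increases} $z_j$ (from $[C_s, \ceil{C_s/\theta})$ up to $\ceil{C_s/\theta}$) rather than decreases it, so $z'$ pointwise dominates $z$; because the combo trigger $\sum_k \min\{z_k, \lambda_{s,k}\}$ is monotone in each coordinate, the transition can flip LADTSR's combo decision from off to on, which turns out to be the delicate case.

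For (i), both $z_j$ and $z'_j$ are at least $C_s$, so $\min\{z_j, C_s\} = \min\{z'_j, C_s\} = C_s$ and $z_k = z'_k$ for $k \ne j$; Lemma~\ref{lemma:opt} then gives $\mathit{OPT}(z) = \mathit{OPT}(z')$. For (ii), I would classify by LADTSR's combo behavior via Lemma~\ref{lemma::combo}, first observing that because $z_j$ is an integer strictly below $C_s/\theta = \lambda_{s,j}$ while $z'_j = \ceil{C_s/\theta} \ge C_s/\theta$, the combo indicator strictly increases by $C_s/\theta - z_j > 0$, ruling out the combo-on-$z$/non-combo-on-$z'$ possibility. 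In the both-non-combo case, Lemma~\ref{lemma:LA-ALG}(i) shows that only item $j$ changes, with contribution rising from $z_j$ to $\lambda_{s,j} + C_s \ge z_j$. In the both-combo case, Lemma~\ref{lemma:LA-ALG}(ii) yields $s_1(z') = s_1(z) + 1$ and $s_2(z') = s_2(z)$, and each of the three arguments of the $\min$ defining $f$ is non-decreasing in $s_1$ (two are independent of $s_1$, the third grows by one), so $f$ cannot decrease.

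The main obstacle is the mixed subcase where $z$ is non-combo but $z'$ is combo. Here I would recast the non-combo bound in the unified identity $U_{\mathrm{ALG}}(z) = \sum_k \min\{z_k, \lambda_{s,k}\} + (s_1(z) + s_2(z)) C_s$, combine the non-combo condition $\sum_k \min\{z_k, \lambda_{s,k}\} < \lambda_c$ with $U_{\mathrm{ALG}}(z') = \lambda_c + C_c + f(s_1(z)+1, s_2(z)) C_s$, and reduce the desired inequality to $\bigl(s_1(z) + s_2(z) - f(s_1(z)+1, s_2(z))\bigr) C_s \le C_c$. I would resolve it by a three-way case split on which argument attains the $\min$ in $f$: the $s_1+s_2$ branch is immediate; the middle branch is dispatched by plugging the constraint $s_1(z) C_s/\theta + s_2(z) \theta C_s < \lambda_c$ into the LHS, which becomes strictly negative; and the tightest $\lambda_c/(\theta C_s)$ branch requires maximizing $s_1(z) + s_2(z)$ over the polytope $\{(a,b) \ge 0 : a C_s/\theta + b \theta C_s < \lambda_c\}$, whose maximum, because $\theta \in (0,1)$ makes the $b$-coefficient strictly smaller, is attained on the $b$-axis at $b = \lambda_c/(\theta C_s)$, after which substituting each of the two possible values $\lambda_c \in \{\theta^2 C_c, C_c/\theta\}$ yields the required bound (using $\theta^3 + \theta \le 1 + \theta$ in the first and $1 \le 1 + \theta^{-2}$ in the second). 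Combining the three subcases gives (ii), and together with (i) the lemma follows.
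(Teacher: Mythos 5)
Your proposal is correct and follows essentially the same route as the paper's proof: both establish $\mathit{OPT}(z)=\mathit{OPT}(z')$ via $\min\{z_j,C_s\}=\min\{z'_j,C_s\}=C_s$, then split on LADTSR's combo decision (using monotonicity of $\sum_k\min\{z_k,\lambda_{s,k}\}$ to exclude combo-on-$z$/non-combo-on-$z'$), and handle the delicate mixed case by comparing $s_1+s_2$ against the three arguments of $f$ under the non-combo constraint $s_1 C_s/\theta+s_2\theta C_s<\lambda_c$. Your reduction to $(s_1+s_2-f(s_1+1,s_2))C_s\le C_c$ with a case split on the argmin of $f$ is just a repackaging of the paper's direct bound $s_1+s_2\le f(s_1',s_2')$, so no substantive difference.
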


\begin{proof}
The proof has two steps: 1) We show that $\mathit{OPT}(z) = \mathit{OPT}(z')$. 2) We show that $U_{\mathrm{ALG}}(z) \le U_{\mathrm{ALG}}(z')$.

\textbf{Step 1):} For $\mathit{OPT}(z)$ we have
\begin{equation}
\begin{aligned}
    \mathit{OPT}(z) &\overset{(a)}{=} \min \{C_c,\sum_{k=1}^K \min \{ C_s, z_k\} \} \\
    &\overset{(b)}{=} \min \{C_c, C_s+ \sum_{k=1, k \neq j}^K \min \{ C_s, z_k\} \} \\
    &\overset{(c)}{=} \min \{C_c, C_s + \sum_{k=1, k \neq j}^K \min \{ C_s, z'_k\} \} \\
    &\overset{(d)}{=} \min \{C_c, \sum_{k=1}^K \min \{ C_s, z'_k\} \}  \overset{(e)}{=} \mathit{OPT}(z'),
\end{aligned}    
\end{equation}
where $(a)$ is due to the definition of $\mathit{OPT}(z)$ (i.e., Eq.~\eqref{eq:OPTcost}), $(b)$ is due to $z_j \ge C_s$, $(c)$ is due to for all the items $k$ such that $k\neq j$ we have $z_k = z'_k$ (i.e., $\min \{C_s, z_k\} = \min \{C_s, z'_k\}$) according to Eq.~\eqref{la:std1:z'}, $(d)$ is due to for item $j$, we have $z'_j = \ceil{C_s/\theta} \ge C_s$ according to Eq.~\eqref{la:std1:z'}, and $(e)$ is due to the definition of $\mathit{OPT}(z)$.

\textbf{Step 2):} We consider two cases: A) If LADTSR makes the combo purchase on $z$. B) If LADTSR does not make the combo purchase on $z$.

\textbf{Case A):}
In this case, according to Lemma~\ref{lemma::combo}, we have $\sum_{k=1}^K \min \{z_k, \lambda_{s, k}\} \ge \lambda_c$. On the other hand, we have $\sum_{k=1}^K \min \{z'_k, \lambda_{s, k}\} \ge \sum_{k=1}^K \min \{z_k, \lambda_{s, k}\}$ since $z'_k \ge z_k$ for item $k=1,\dots,K$ according to Eq.~\eqref{la:std1:z'}. Therefore, LADTSR will make the combo purchase on $z'$ as well. 

Then, according to Lemma~\ref{lemma:LA-ALG}, we have
\begin{equation}
U_{\mathrm{ALG}}(z) = \lambda_c + C_c +f(s_1, s_2)C_s,
\end{equation}
where 
\begin{equation}
f(s_1, s_2) = \min \{\frac{\lambda_c}{\theta C_s}, s_1 + \frac{\lambda_c -\theta C_s s_1}{C_s/\theta}, s_1+s_2 \}. 
\end{equation}
Let $s'_1$ (resp. $s'_2$) denote the number of items in $z'$ such that its threshold is equal to $C_s/\theta$ (resp. $\theta C_s$) and its total demand exceeds its threshold. Because $s'_1 \ge s_1$ and $s'_2 \ge s_2$ since $z'_k \ge z_k$ for item $k=1, \dots, K$, we have $U_{\mathrm{ALG}}(z) \le U_{\mathrm{ALG}}(z')$.

\textbf{Case B):} In this case, we consider two subcases: I) If LADTSR makes the combo purchase on $z'$. II) If LADTSR does not make the combo purchase on $z'$.

\textbf{Case B-I):} 
Let $s_0$ denote the number of items that have a total demand less than its single purchase threshold. Clearly, we have $K=s_0 + s_1 + s_2$.
Suppose for item $k=1,\dots,s_0$, we have $z_k < \lambda_{s, k}$; for item $k=s_0+1,\dots, s_0+s_1$, we have $\lambda_{s, k} = C_s/\theta$ and $z_k\ge C_s/\theta$; and for item $k=s_0+s_1+1,\dots, s_0+s_1+s_2$, we have $\lambda_{s, k} = \theta / C_s$ and $z_k\ge \theta C_s$. 
According to lemma~\ref{lemma:LA-ALG} we have
\begin{equation} \label{eq:laBI1}
    U_{\mathrm{ALG}}(z)= \sum_{k=1}^{s_0} z_k + s_1 (C_s/\theta + C_s) + s_2(\theta C_s + C_s).
\end{equation}
Because LADTSR does not make the combo purchase on $z$, according to Lemma~\ref{lemma::combo}, we have
\begin{equation}\label{eq:laBI2}
    \sum_{k=1}^K \min \{z_k, \lambda_{s, k}\} = \sum_{k=1}^{s_0} z_k + s_1 C_s/\theta + s_2 \theta C_s <\lambda_c.
\end{equation}
Combining Eqs.~\eqref{eq:laBI1} and~\eqref{eq:laBI2} we have
\begin{equation}
    U_{\mathrm{ALG}}(z) < \lambda_c + (s_1 + s_2) C_s.
\end{equation}

On the other hand, for $z'$, because LADTSR makes the combo purchase on it, we have
\begin{equation}
    U_{\mathrm{ALG}}(z) = \lambda_c + C_c + f(s'_1, s'_2)C_s,
\end{equation}
where 
\begin{equation}
f(s'_1, s'_2)= \min \Big \{\frac{\lambda_c}{\theta C_s},s'_1 + s'_2, s'_2 + \frac{\lambda_c -\theta C_s s'_2}{C_s/\theta} \Big \}. 
\end{equation}
Therefore, we only need to show that $s_1 + s_2 \le f(s'_1, s'_2)$.
\begin{itemize}
    \item First, from Eq.~\eqref{eq:laBI2} we have
    \begin{equation}
        (s_1 + s_2)\theta C_s \le s_1 \frac{C_s}{\theta} + s_2\theta C_s + x < \lambda_c.
    \end{equation}
    Therefore, we have $s_1 + s_2 < \frac{\lambda_c}{\theta C_s}$.
    \item Second, we have $s_1 \le s'_1$ and $s_2 \le s'_2$ because $z'_k \ge z_k$ for item $k=1, \dots, K$. Therefore, we have
    \begin{equation}
        s_1 + s_2 \le s'_1 + s'_2.
    \end{equation}
    \item Finally, from Eq.~\eqref{eq:laBI2} we have
    \begin{equation}
        s_1 < \frac{\lambda_c - \theta C_c s_2 - \sum_{k=1}^{s_0}z_k}{C_s/\theta} \le \frac{\lambda_c - \theta C_s s_2}{C_s/\theta}.
    \end{equation}
    Therefore, we have
    \begin{equation}
        s_1 + s_2 \le s_2 + \frac{\lambda_c - \theta C_s s_2}{C_s/\theta} = (1-\theta^2) s_2 + \frac{\lambda_c}{C_s/\theta}.
    \end{equation}
    As $s_2 \le s'_2$ and $1-\theta^2 \ge 0$, we have
    \begin{equation}
        s_1 + s_2 \le (1-\theta^2) s'_2 + \frac{\lambda_c}{C_s/\theta} = s'_2 + \frac{\lambda_c - \theta C_s s'_2}{C_s/\theta}.
    \end{equation}
\end{itemize}
Therefore, we conclude that $s_1 + s_2 \le f(s'_1, s'_2)$ and $U_{\mathrm{ALG}}(z) \le U_{\mathrm{ALG}}(z')$.

\textbf{Case B-II):}
For item $j$, we have $U_{\mathrm{ALG}}^j(z') = C_s/\theta + C_s \ge C_s/\theta + 1 \ge \ceil{C_s/\theta}$ because $z'_j = \ceil{C_s/\theta} \ge C_s/\theta$ and $U_{\mathrm{ALG}}^j(z) = z_j < \ceil{C_s/\theta}$ because $z_j < C_s/\theta$. Therefore, we have $U_{\mathrm{ALG}}^j(z) \le U_{\mathrm{ALG}}^j(z')$.

For item $k$ such that $k\neq j$, we have $U_{\mathrm{ALG}}^k(z') = U_{\mathrm{ALG}}^k(z)$ since $z'_k = z_k$ according to Eq.~\eqref{la:std1:z'}.

Therefore, we have $U_{\mathrm{ALG}}^j(z) \le U_{\mathrm{ALG}}^j(z')$.

We conclude the proof by combining Cases A and B.
\end{proof}

\begin{lemma} \label{la:std2}
Given any total demand $z$, if there exists item $j$ such that $\lambda_{s, j} = \theta C_s$ and $\ceil{\lambda_{s,j}} \le z_j < C_s$, then there exists another total demand $z'=\{z'_1, \dots, z'_K\}$ such that
\begin{equation} \label{la:std2:z'}
z'_k = 
\begin{cases}
\ceil{\lambda_{s,j}} , &k=j,\\
z_k, &k\neq j,
\end{cases}
\end{equation}
and we have $U_{\mathrm{CR}}(z) \le U_{\mathrm{CR}}(z')$.
\end{lemma}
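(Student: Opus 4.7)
The plan is to follow the same two-step template used for Lemma~\ref{la:std1} and its RDTSR analogue Lemma~\ref{std2}: show that (i) $\mathit{OPT}(z) \ge \mathit{OPT}(z')$ and (ii) $U_{\mathrm{ALG}}(z) = U_{\mathrm{ALG}}(z')$, which together imply $U_{\mathrm{CR}}(z) \le U_{\mathrm{CR}}(z')$ by the definition $U_{\mathrm{CR}} = U_{\mathrm{ALG}}/\mathit{OPT}$. The construction only alters item $j$ by mapping $z_j \in [\ceil{\theta C_s}, C_s)$ down to $z'_j = \ceil{\theta C_s}$, so the whole analysis boils down to how this single coordinate change affects the two quantities.

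For step (i) I would appeal to Lemma~\ref{lemma:opt}. Since $z_k = z'_k$ for all $k \ne j$, and for item $j$ we have $\min\{C_s, z_j\} = z_j \ge \ceil{\theta C_s} = z'_j = \min\{C_s, z'_j\}$ (using $z_j < C_s$ and $z'_j < C_s$), it follows that $\sum_k \min\{C_s, z_k\} \ge \sum_k \min\{C_s, z'_k\}$ and hence $\mathit{OPT}(z) \ge \mathit{OPT}(z')$.

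For step (ii) I would first verify, via Lemma~\ref{lemma::combo}, that LADTSR's combo purchase decision coincides on $z$ and $z'$. Since $\lambda_{s,j} = \theta C_s$, and both $z_j \ge \ceil{\theta C_s} \ge \theta C_s$ and $z'_j = \ceil{\theta C_s} \ge \theta C_s$, the clamped contribution of item $j$ satisfies $\min\{z_j, \lambda_{s,j}\} = \min\{z'_j, \lambda_{s,j}\} = \theta C_s$, while all other coordinates match, so $\sum_k \min\{z_k, \lambda_{s,k}\} = \sum_k \min\{z'_k, \lambda_{s,k}\}$. Then I would split according to Lemma~\ref{lemma:LA-ALG}. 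If no combo purchase is made, for item $j$ both $z_j$ and $z'_j$ exceed $\lambda_{s,j}$, so $U_{\mathrm{ALG}}^j(z) = U_{\mathrm{ALG}}^j(z') = \theta C_s + C_s$, and the other per-item terms match trivially, giving $U_{\mathrm{ALG}}(z) = U_{\mathrm{ALG}}(z')$. If a combo purchase is made, the counts $s_1$ and $s_2$ entering $f(s_1, s_2)$ are the same under $z$ and $z'$: item $j$ has threshold $\theta C_s$ and total demand $\ge \theta C_s$ in both cases, so it is counted in $s_2$ on both sides, and all other items contribute identically to $s_1$ and $s_2$. Hence $U_{\mathrm{ALG}}(z) = \lambda_c + C_c + f(s_1, s_2) C_s = U_{\mathrm{ALG}}(z')$.

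The argument is almost entirely mechanical and mirrors Lemma~\ref{la:std1}. The only point requiring a little care is the integrality: because demands are integers, $z_j \ge \theta C_s$ is equivalent to $z_j \ge \ceil{\theta C_s}$, so replacing $z_j$ by $\ceil{\theta C_s}$ preserves every inequality against $\lambda_{s,j} = \theta C_s$ on which LADTSR's branching conditions depend. I do not anticipate any serious obstacle; the lemma is the natural companion to Lemma~\ref{la:std1}, handling the second possible value of $\lambda_{s,k}$ induced by the prediction.
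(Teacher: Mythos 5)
Your proposal is correct and follows essentially the same route as the paper's proof: establish $\mathit{OPT}(z) \ge \mathit{OPT}(z')$ via Lemma~\ref{lemma:opt}, show the combo decision is invariant via Lemma~\ref{lemma::combo} (since $\min\{z_j,\lambda_{s,j}\} = \min\{z'_j,\lambda_{s,j}\} = \lambda_{s,j}$), and then check $U_{\mathrm{ALG}}(z) = U_{\mathrm{ALG}}(z')$ in both the combo and no-combo branches of Lemma~\ref{lemma:LA-ALG}, with the counts $s_1, s_2$ unchanged. No gaps.
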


\begin{proof}
Similar to Lemma~\ref{std2}, we only need to show that 1) $\mathit{OPT}(z) \ge \mathit{OPT}(z')$ and 2) $U_{\mathrm{ALG}}(z) = U_{\mathrm{ALG}}(z')$.

\textbf{Step 1):} For $\textit{OPT}(z)$, we have
\begin{equation}
\begin{aligned}
    \mathit{OPT}(z) &\overset{(a)}{=} \min \{C_c,\sum_{k=1}^K \min \{ C_s, z_k\} \} \\
    &\overset{(b)}{=} \min \{C_c, z_j + \sum_{\substack{k=1, k \neq j}}^K \min \{ C_s, z_k\} \} \\
    &\overset{(c)}{\ge} \min \{C_c, z'_j + \sum_{\substack{k=1, k \neq j}}^K \min \{ C_s, z'_k\} \} \\
    &\overset{(d)}{=} \min \{C_c, \sum_{k=1}^K \min \{ C_s, z'_k\} \}  \overset{(e)}{=} \mathit{OPT}(z'),
\end{aligned}
\end{equation}
where $(a)$ is due to the definition of $\mathit{OPT}(z)$ (i.e., Eq.~\eqref{eq:OPTcost}), $(b)$ is due to $\min \{C_s, z_j\} = z_j$ because $z_j < C_s$, $(c)$ is due to $z'_k \le z_k$ for item $k=1,\dots,K$ according to Eq.~\eqref{la:std2:z'}, 
$(d)$ is due to $\min \{C_s, z'_j\} = z'_j$ because $z'_j = \ceil{\theta C_s} \le C_s$, and $(e)$ is due to the definition of $\mathit{OPT}(z)$.

\textbf{Step 2):} For LADTSR, we first show that the combo purchase decision of LADTSR is the same on $z$ and $z'$. According to Lemma~\ref{lemma::combo}, we only need to show that $\sum_{k=1}^K \min \{z_k, \lambda_{s, k}\} = \sum_{k=1}^K \min \{z'_k, \lambda_{s, k}\}$. 
This is due to two facts:
\begin{itemize}
    \item For item $j$, we have $\min \{z_j, \lambda_{s, j} \} = \min \{ z'_j, \lambda_{s, j} \} = \lambda_{s, j}$ since $z_j, z'_j \ge \ceil{\lambda_{s, j}} \ge \lambda_{s, j}$ according to Eq.~\eqref{la:std2:z'}.
    \item For item $k$ such that $k\neq j$, we have $z_k = z'_k$ according to Eq.~\eqref{la:std2:z'}.
\end{itemize}
Thus, we get $\sum_{k=1}^K \min \{z_k, \lambda_{s, k}\} = \sum_{k=1}^K \min \{z'_k, \lambda_{s, k}\}$.

Next, we consider two cases: whether LADTSR makes the combo purchase on $z$ and $z'$ or not.

If LADTSR makes the combo purchase, according to Lemma~\ref{lemma:LA-ALG} we have
\begin{equation}
U_{\mathrm{ALG}}(z) = \lambda_c + C_c +f(s_1, s_2)C_s,
\end{equation}
where 
\begin{equation}
f(s_1, s_2) = \min \{\frac{\lambda_c}{\theta C_s}, s_2 + \frac{\lambda_c -\theta C_s s_2}{C_s/\theta}, s_1+s_2 \}. 
\end{equation}
Let $s'_1$ (resp. $s'_2$) denote the number of items in $z'$ with a threshold equal to $C_s/\theta$ (resp. $\theta C_s$) and a total demand exceeds its threshold. Then we only need to show that $f(s'_1, s'_2) = f(s_1, s_2)$. 

For item $j$, we have $z_j \ge \lambda_{s, j}$ and $z'_j = \ceil{\lambda_{s, j}} \ge \lambda_{s, j}$ and for any other item $k$ such that $k\neq j$, we have $z_k = z'_k$ according Eq.~\eqref{la:std2:z'}. Therefore, we have $s'_1 = s_1$ and $s'_2 = s_2$. Furthermore, we have $U_{\mathrm{ALG}}(z)=U_{\mathrm{ALG}}(z')$.

If LADTSR does not make the combo purchase, in Eq.~\eqref{eq:lacase1}, we have
\begin{equation}
U_{\mathrm{ALG}}^k(z) = 
\begin{cases}
    z_k, & z_k < \lambda_{s,k},\\
    \lambda_{s,k} + C_s, & \text{otherwise}.
\end{cases}     
\end{equation}
For item $j$, we have $z_j, z'_j \ge \lambda_{s, j}$ and $U_{\mathrm{ALG}}^j(z) = U_{\mathrm{ALG}}^j(z') = \lambda_{s,j} + C_s$; for item $k$ such that $k\neq j$, we have $z_k = z'_k$ and hence $U_{\mathrm{ALG}}^k(z) = U_{\mathrm{ALG}}^k(z')$. Therefore, we have $U_{\mathrm{ALG}}(z) = \sum_{k=1}^K U_{\mathrm{ALG}}^k(z) = U_{\mathrm{ALG}}^k(z')$.

Combining two cases, we have $U_{\mathrm{ALG}}(z) = U_{\mathrm{ALG}}(z')$. 
\end{proof}
\begin{figure*}[!t]
\begin{subfigure}{0.16\linewidth}
\centering
\includegraphics[width=0.95\textwidth]{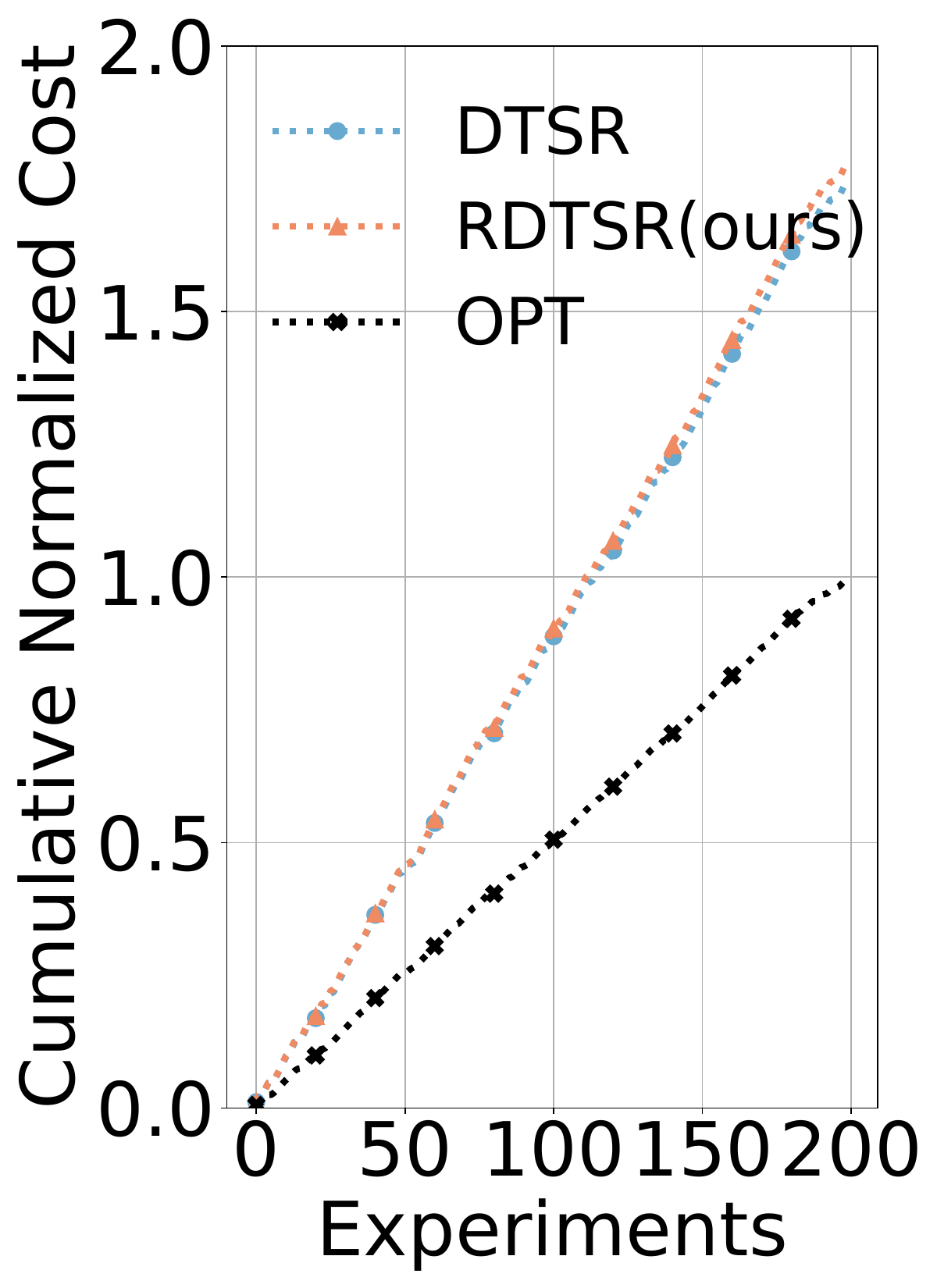}
 \caption{$C_c=20$}
\label{fig:u20}
\end{subfigure}
\begin{subfigure}{0.16\linewidth}
\centering
\includegraphics[width=0.95\textwidth]{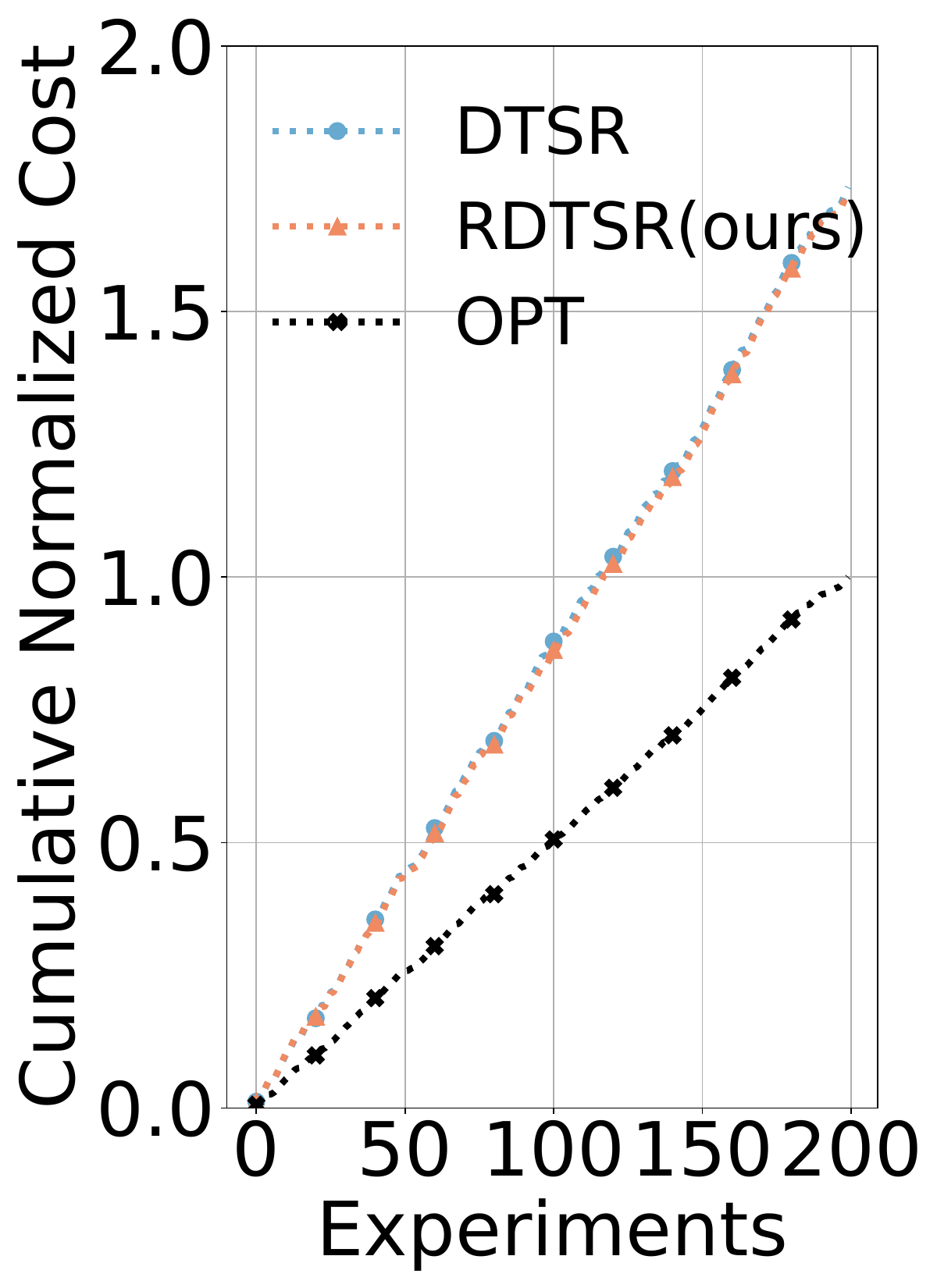}
 \caption{$C_c=25$}
\label{fig:u25}
\end{subfigure}
\begin{subfigure}{0.16\linewidth}
\centering
\includegraphics[width=0.95\textwidth]{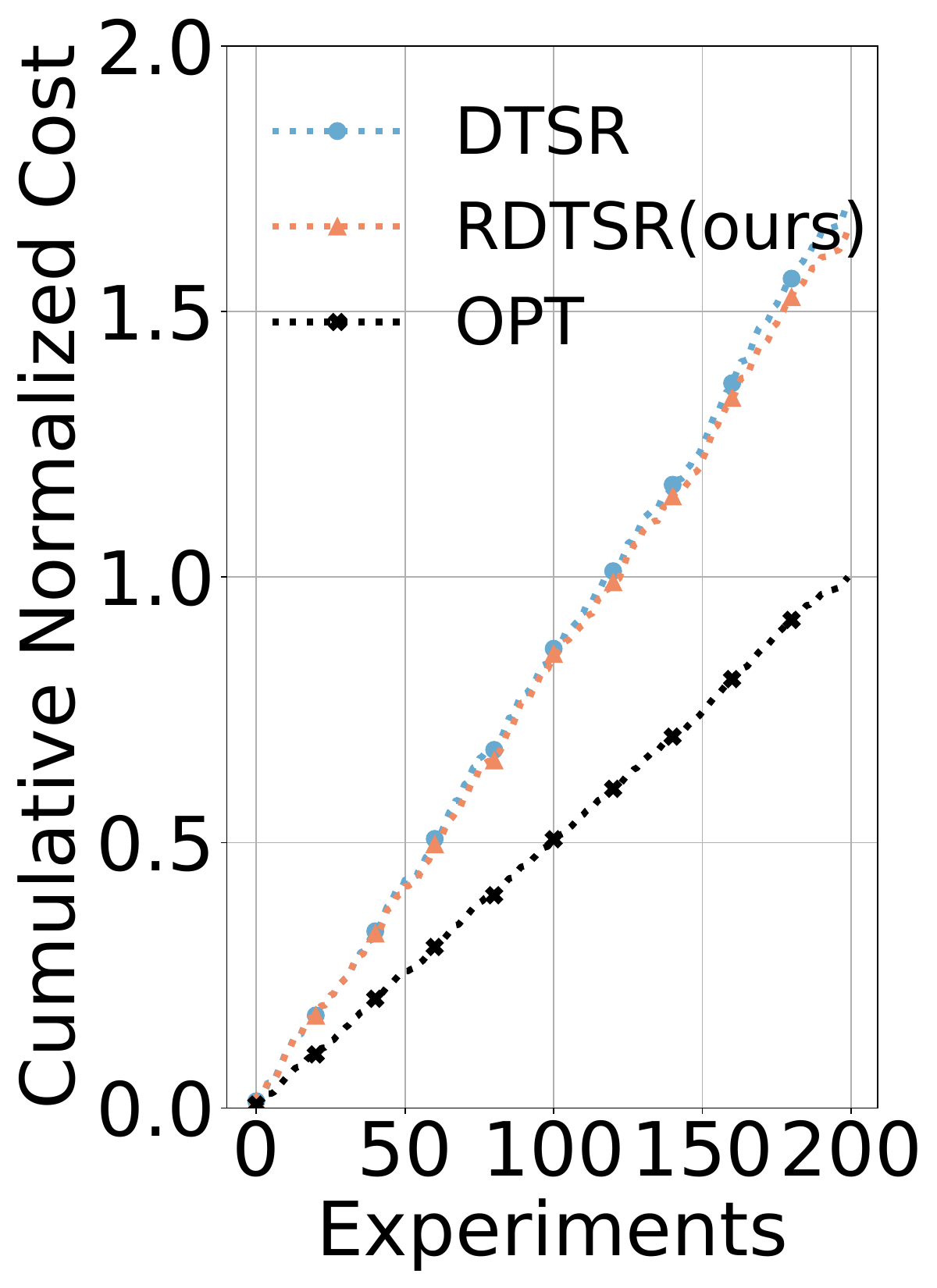}
 \caption{$C_c=30$}
\label{fig:u31}
\end{subfigure}
\begin{subfigure}{0.16\linewidth}
\centering
\includegraphics[width=0.95\textwidth]{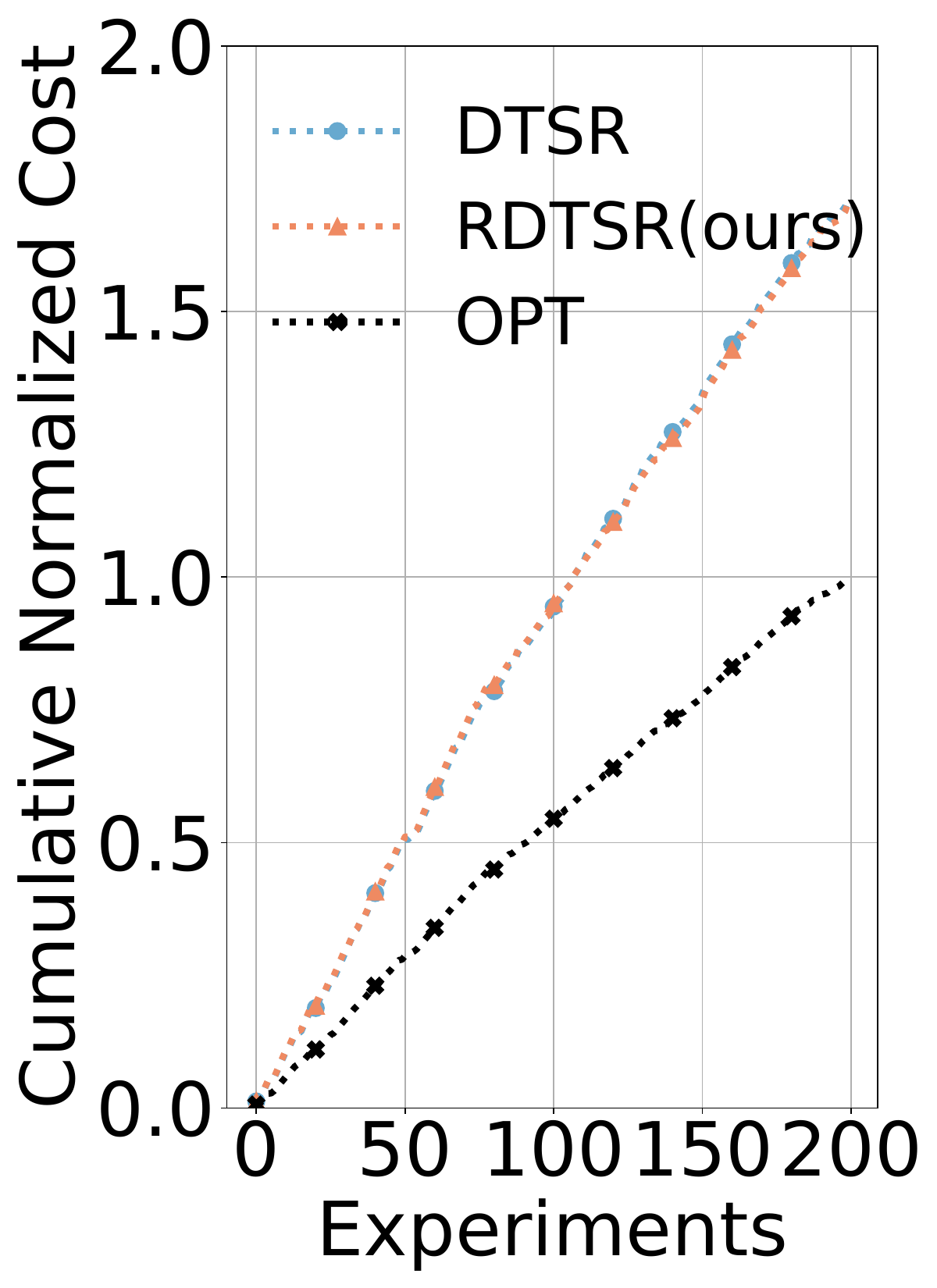}
 \caption{$C_c=20$}
\label{fig:m20}
\end{subfigure}
\begin{subfigure}{0.16\linewidth}
\centering
\includegraphics[width=0.95\textwidth]{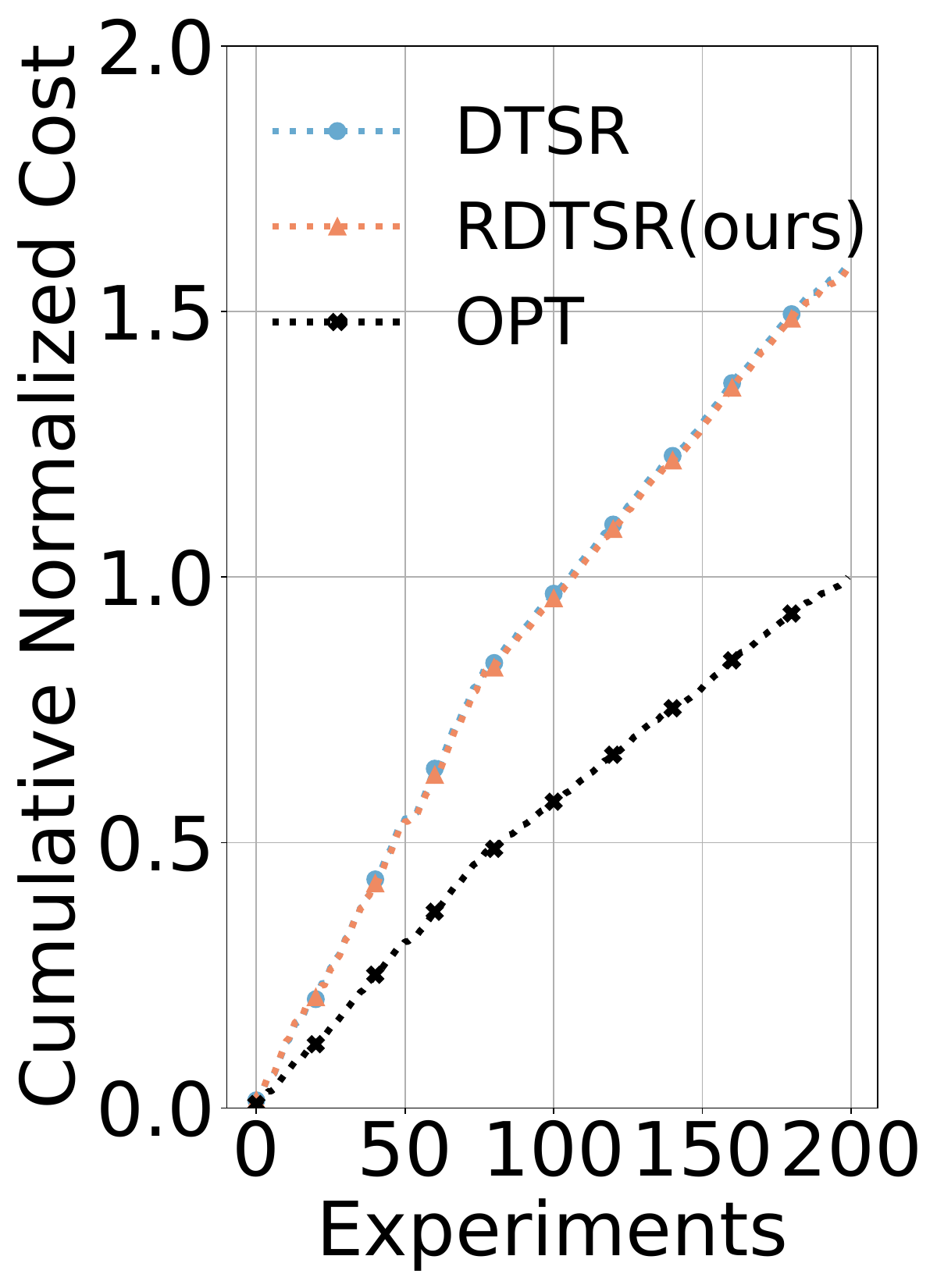}
 \caption{ $C_c=25$}
\label{fig:m25}
\end{subfigure}
\begin{subfigure}{0.16\linewidth}
\centering
\includegraphics[width=0.95\textwidth]{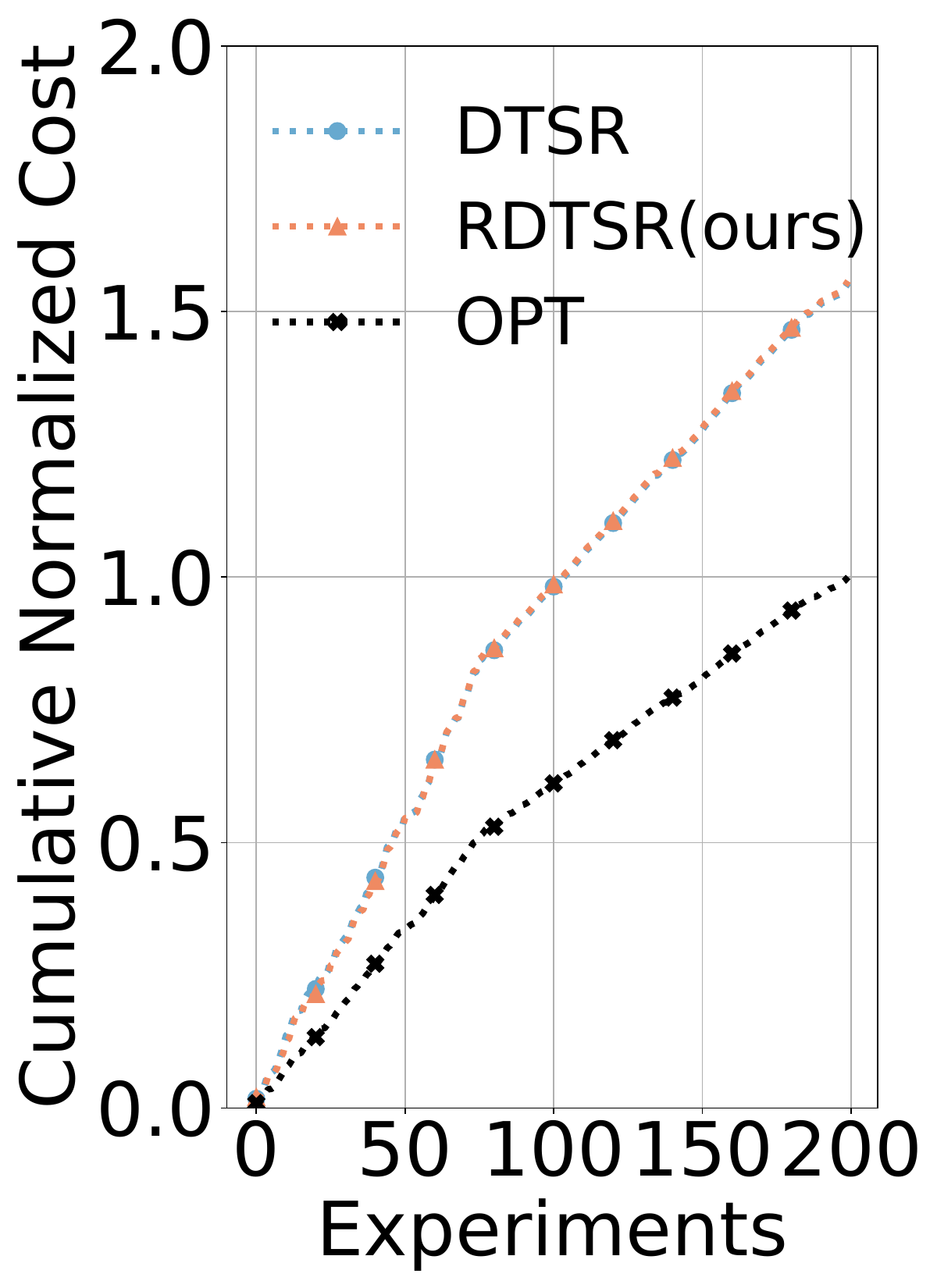}
 \caption{$C_c=30$}
\label{fig:m33}
\end{subfigure}
\caption{Average performance on the synthetic dataset (unit demands).  Figs.~\ref{fig:u20},~\ref{fig:u25}, and~\ref{fig:u31} are under the uniform sequence setting; Figs.~\ref{fig:m20},~\ref{fig:m25}, and~\ref{fig:m33} are under the mixed sequence setting.}
\label{fig:avg_result}
\end{figure*}

\begin{figure*}[!t]
\begin{subfigure}{0.16\linewidth}
\centering
\includegraphics[width=0.95\textwidth]{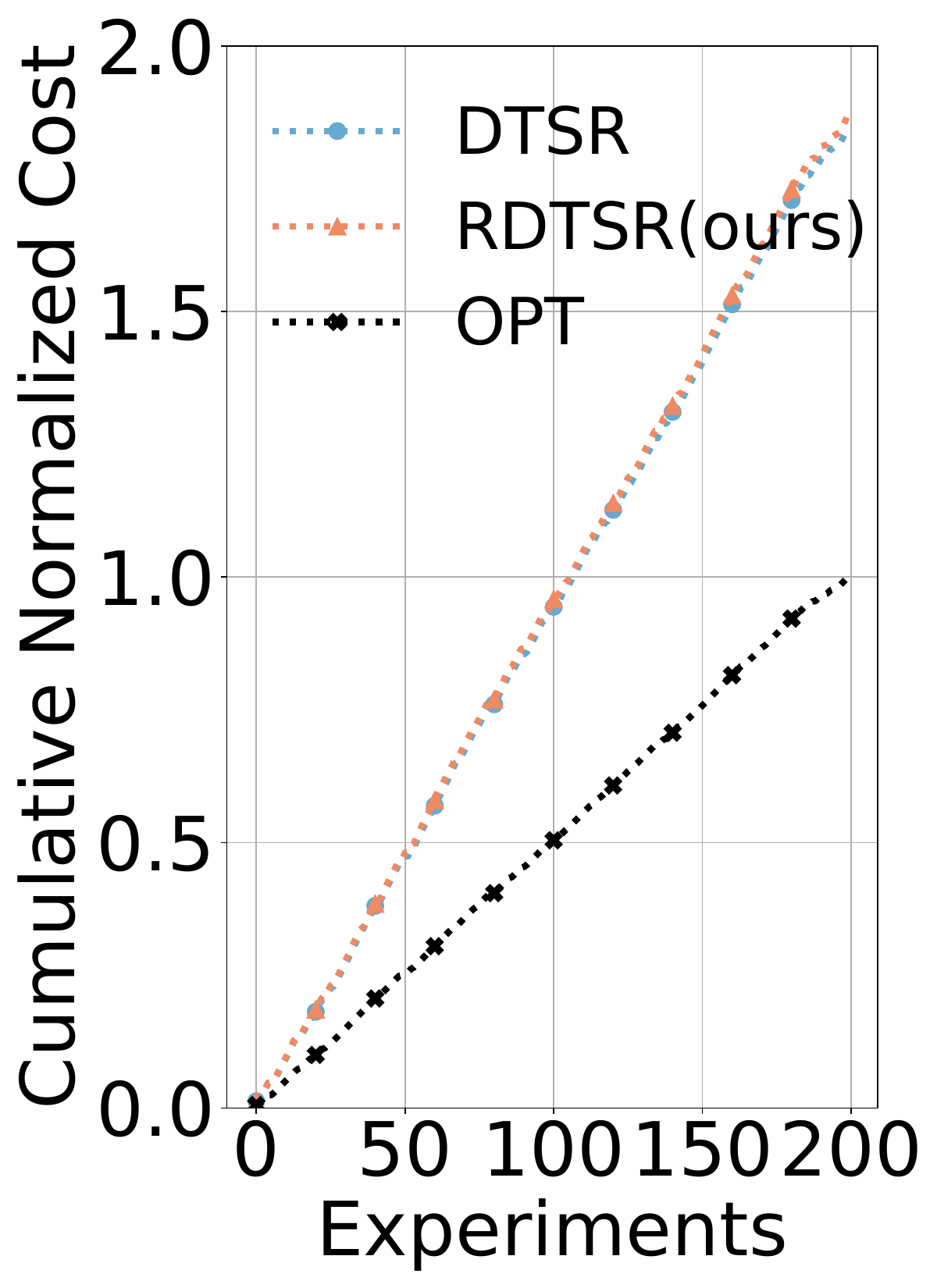}
 \caption{$C_c=20$}
\label{fig:mu20}
\end{subfigure}
\begin{subfigure}{0.16\linewidth}
\centering
\includegraphics[width=0.95\textwidth]{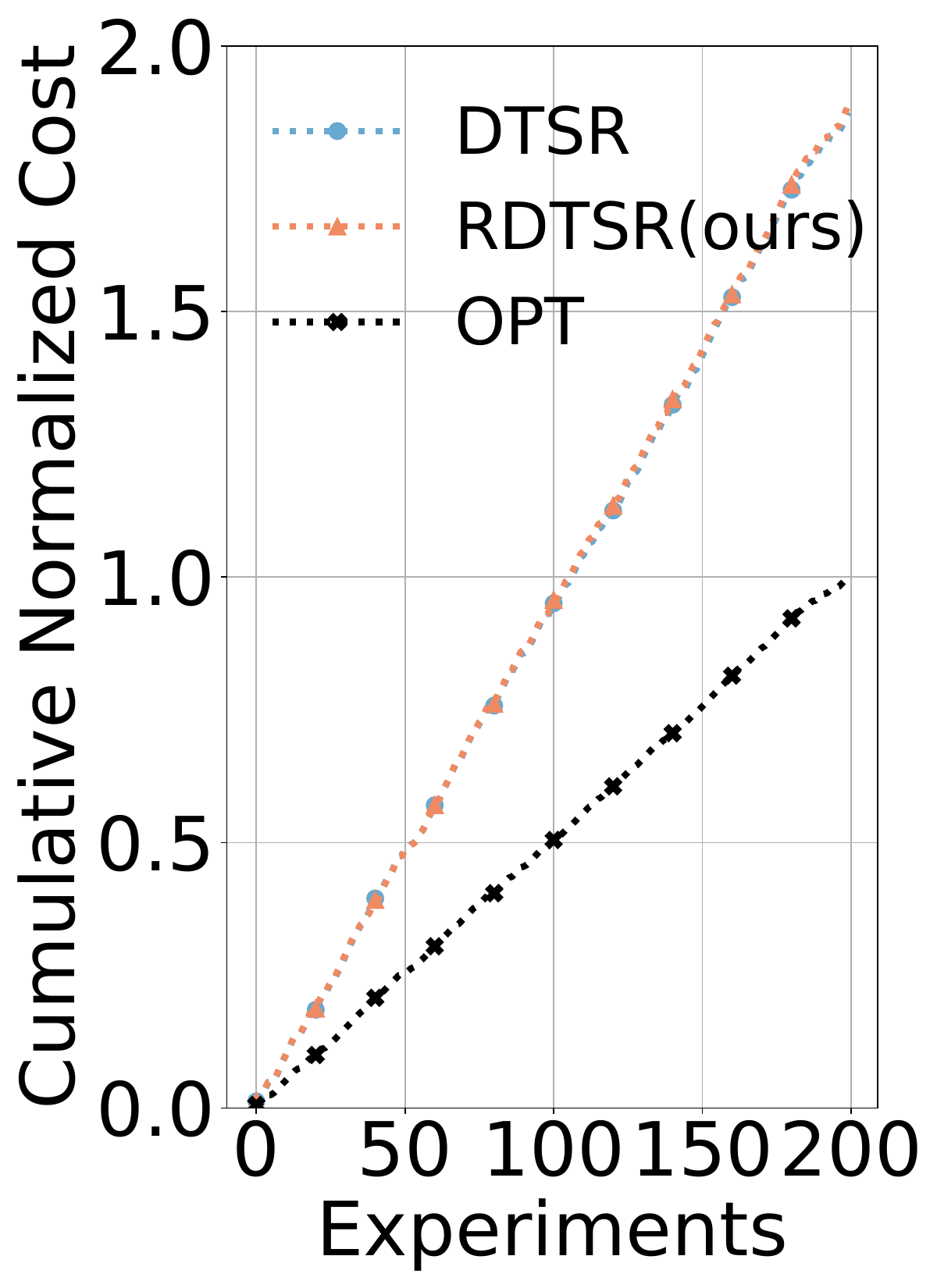}
 \caption{$C_c=25$}
\label{fig:mu25}
\end{subfigure}
\begin{subfigure}{0.16\linewidth}
\centering
\includegraphics[width=0.95\textwidth]{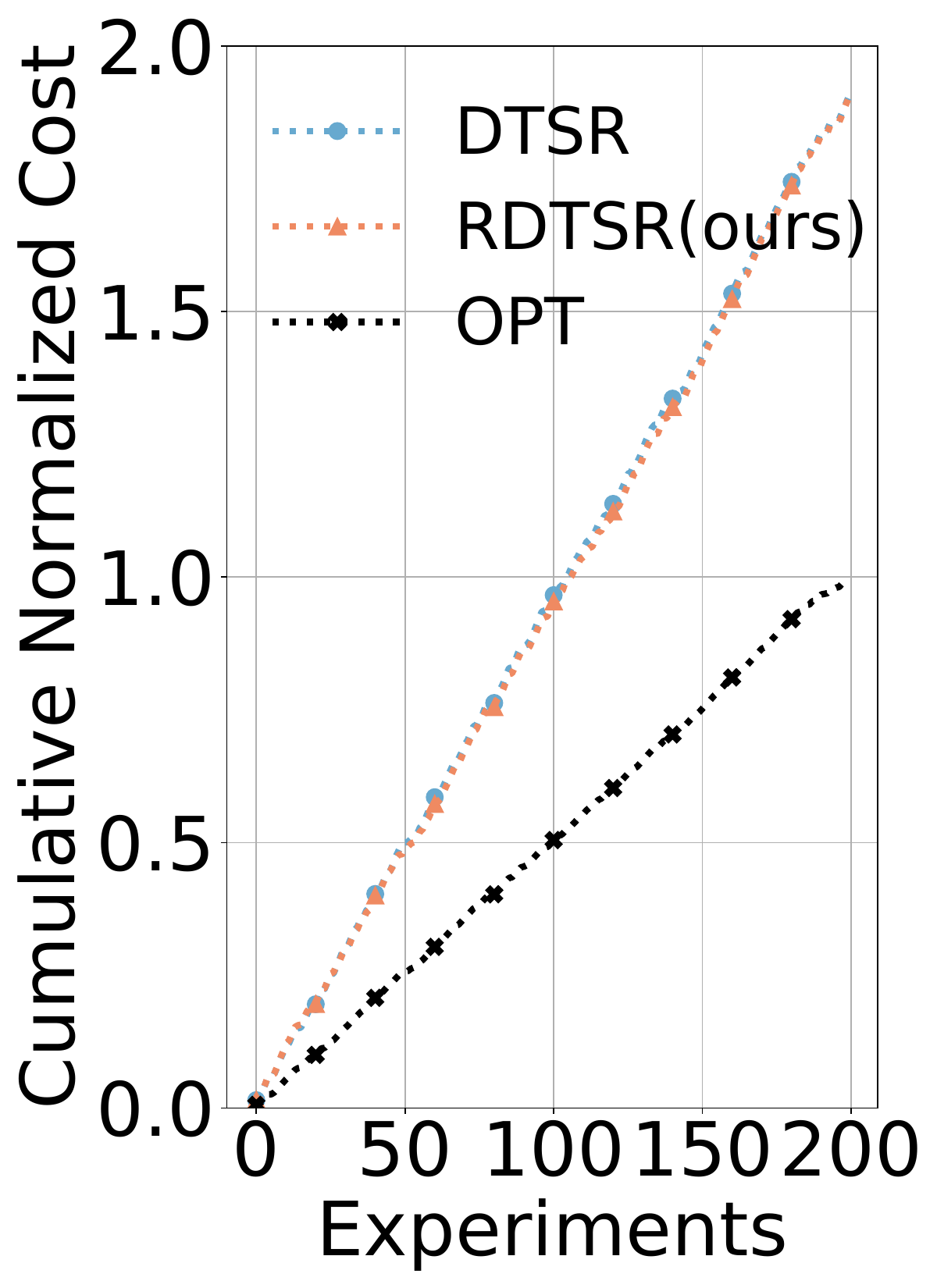}
 \caption{$C_c=30$}
\label{fig:mu31}
\end{subfigure}
\begin{subfigure}{0.16\linewidth}
\centering
\includegraphics[width=0.95\textwidth]{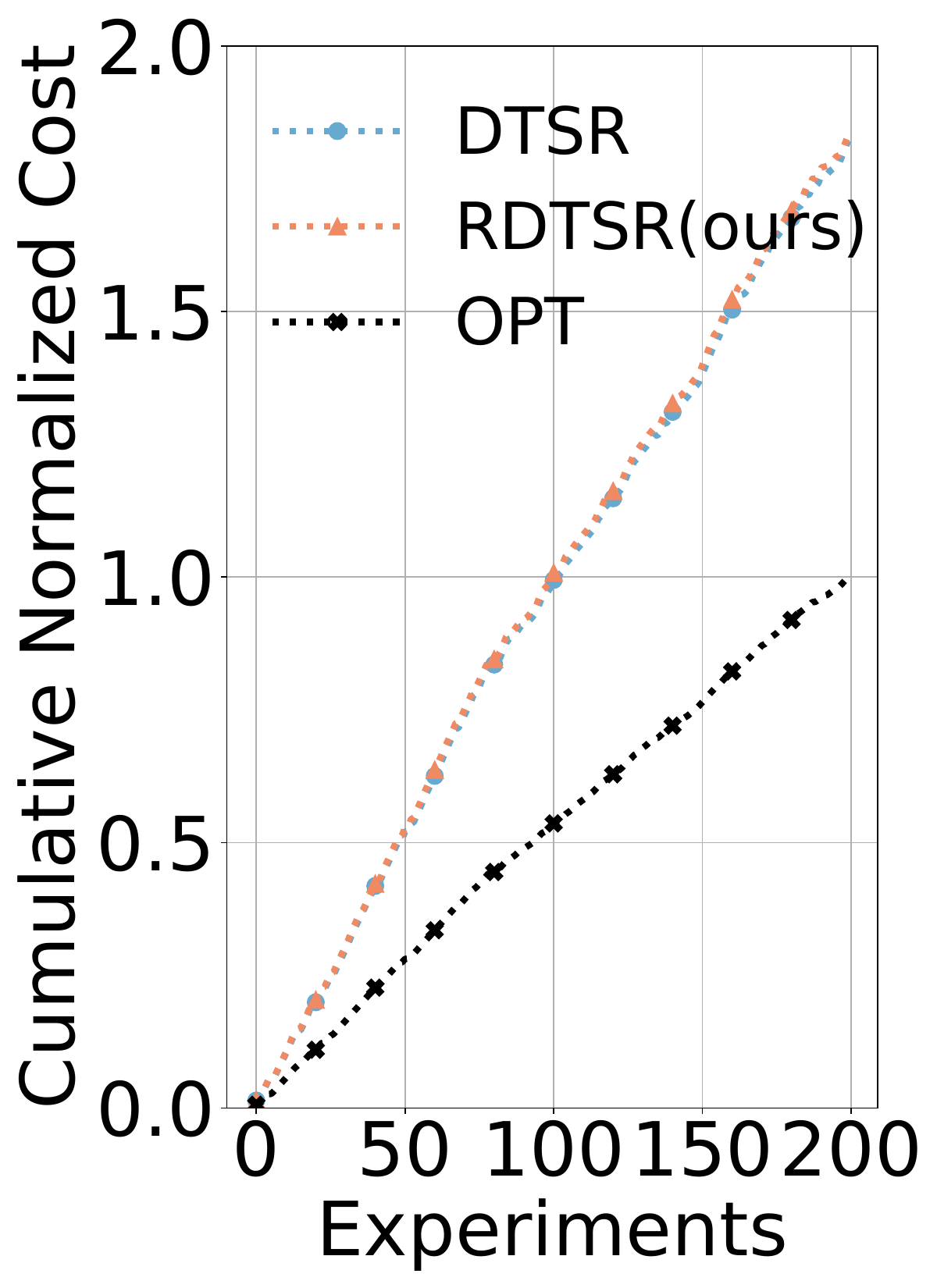}
 \caption{$C_c=20$}
\label{fig:mm20}
\end{subfigure}
\begin{subfigure}{0.16\linewidth}
\centering
\includegraphics[width=0.95\textwidth]{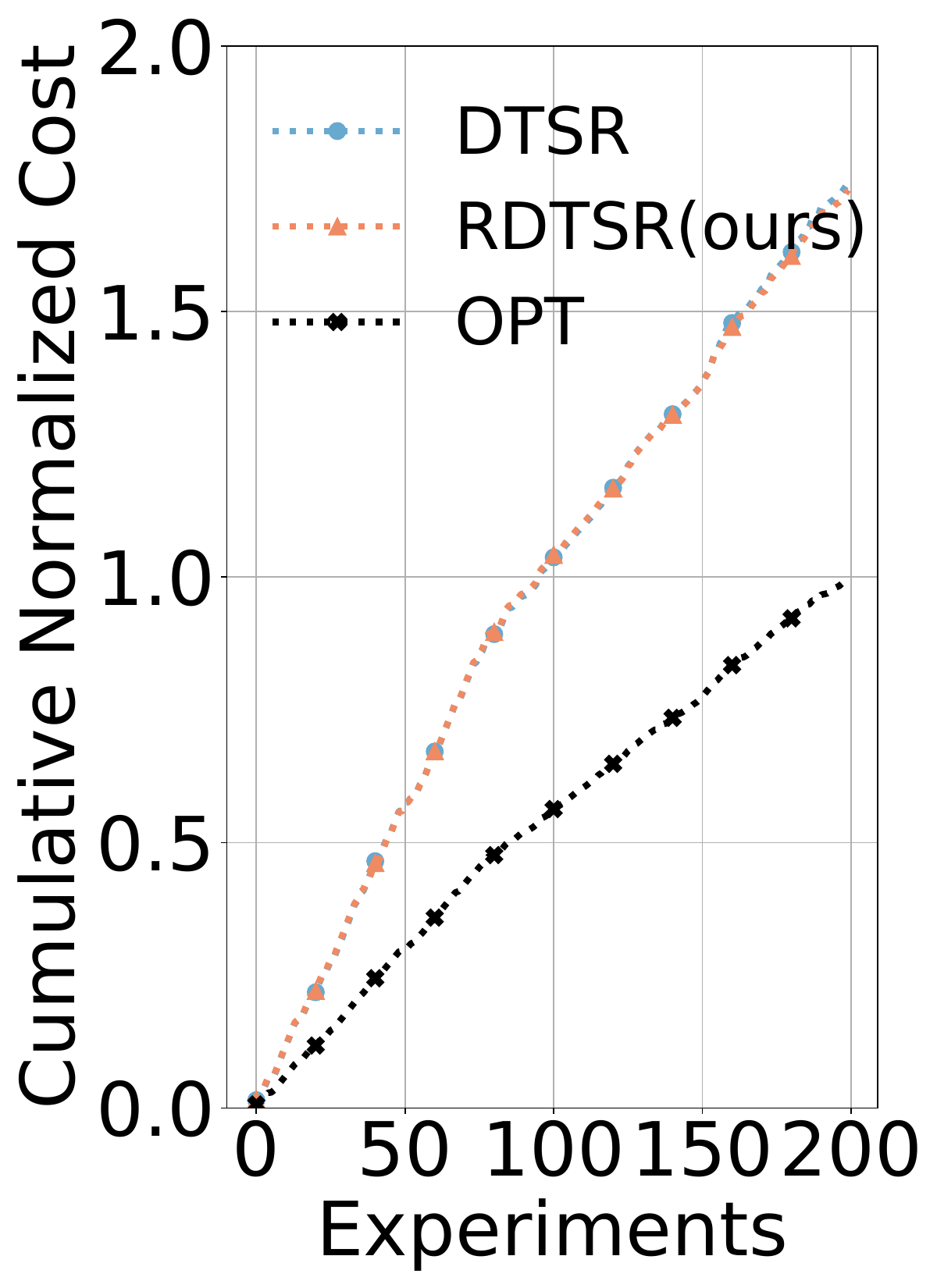}
 \caption{$C_c=25$}
\label{fig:mm25}
\end{subfigure}
\begin{subfigure}{0.16\linewidth}
\centering
\includegraphics[width=0.95\textwidth]{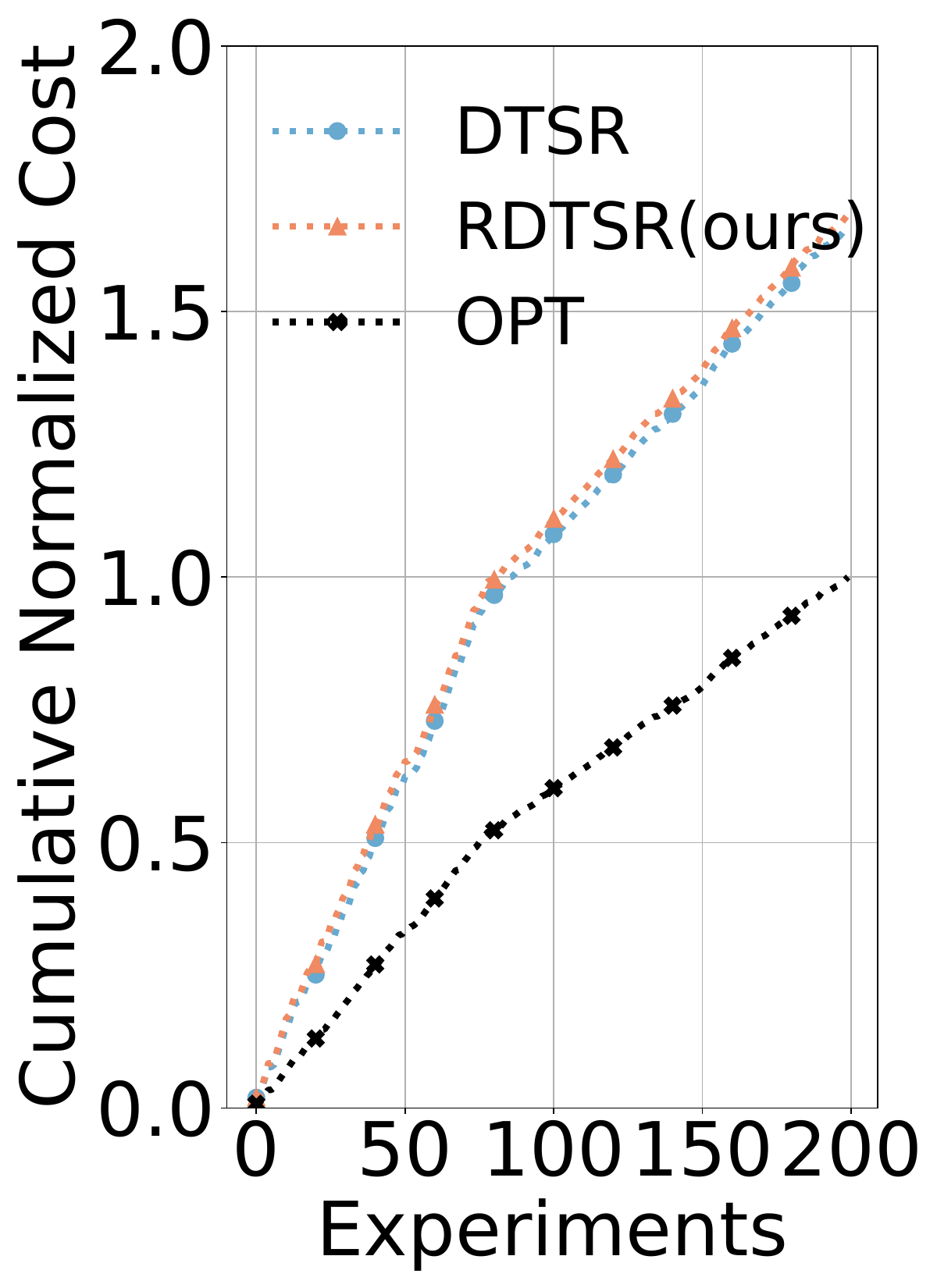}
 \caption{$C_c=30$}
\label{fig:mm33}
\end{subfigure}
\caption{Average performance on the synthetic dataset (multi-unit demands). Figs.~\ref{fig:mu20},~\ref{fig:mu25}, and~\ref{fig:mu31} are under the uniform sequence setting; Figs.~\ref{fig:mm20},~\ref{fig:mm25}, and~\ref{fig:mm33} are under the mixed sequence setting.}
\label{fig:avg_result_multiple}
\end{figure*}

\begin{figure*}[!t]
\begin{subfigure}{0.16\linewidth}
\centering
\includegraphics[width=0.95\textwidth]{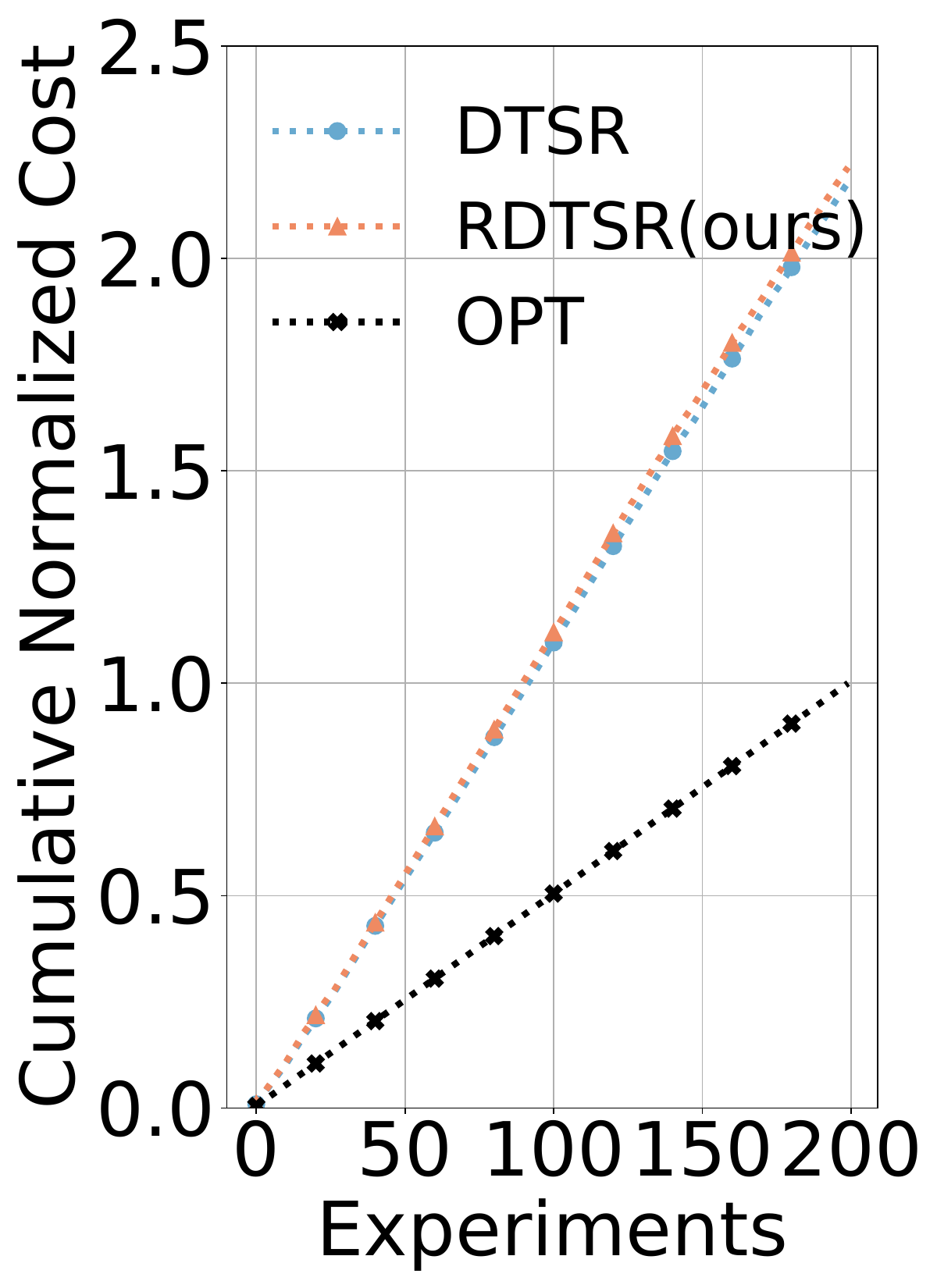}
 \caption{$C_c=20$}
\label{fig:azure20}
\end{subfigure}
\begin{subfigure}{0.16\linewidth}
\centering
\includegraphics[width=0.95\textwidth]{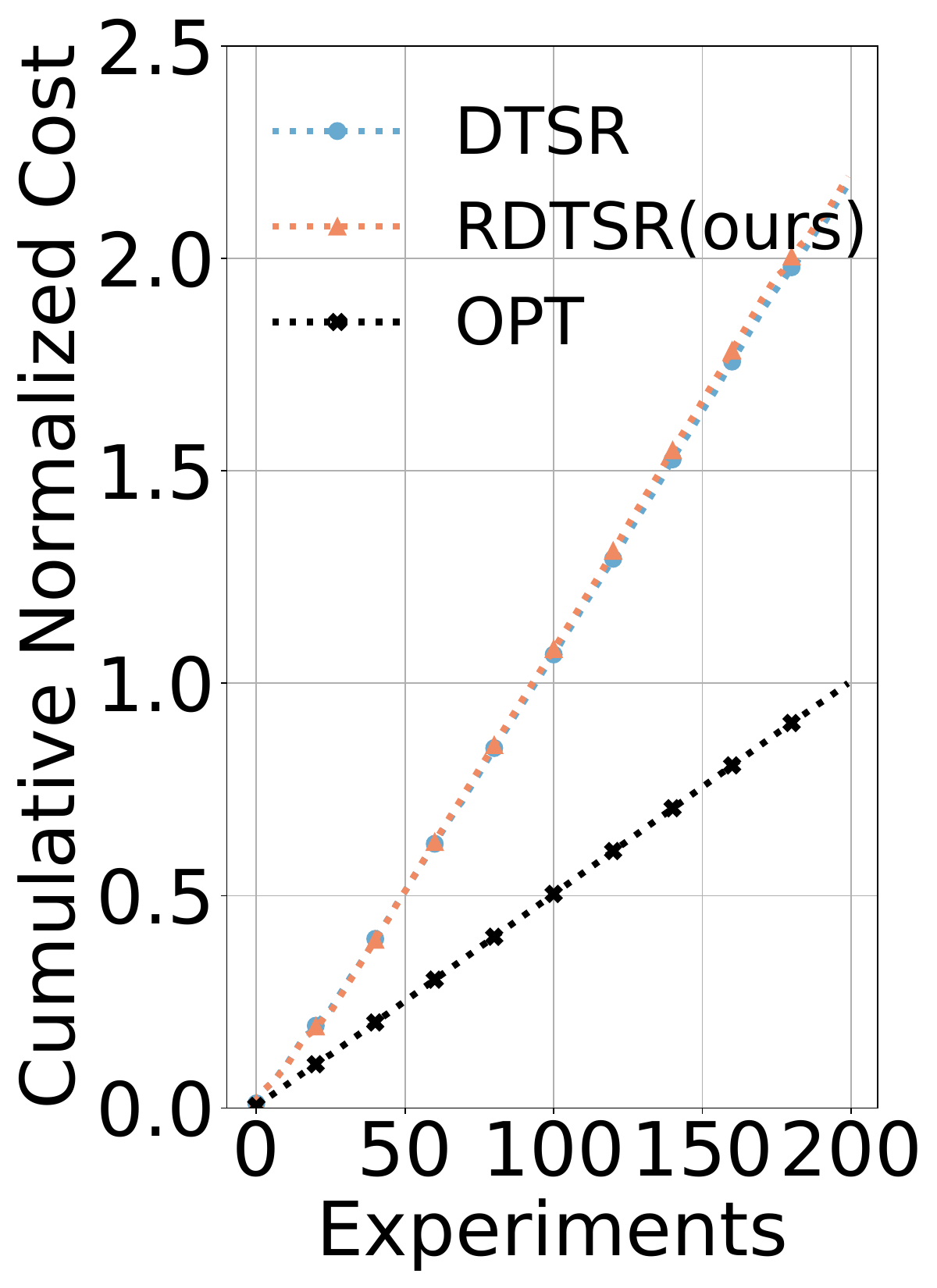}
 \caption{$C_c=25$}
\label{fig:azure25}
\end{subfigure}
\begin{subfigure}{0.16\linewidth}
\centering
\includegraphics[width=0.95\textwidth]{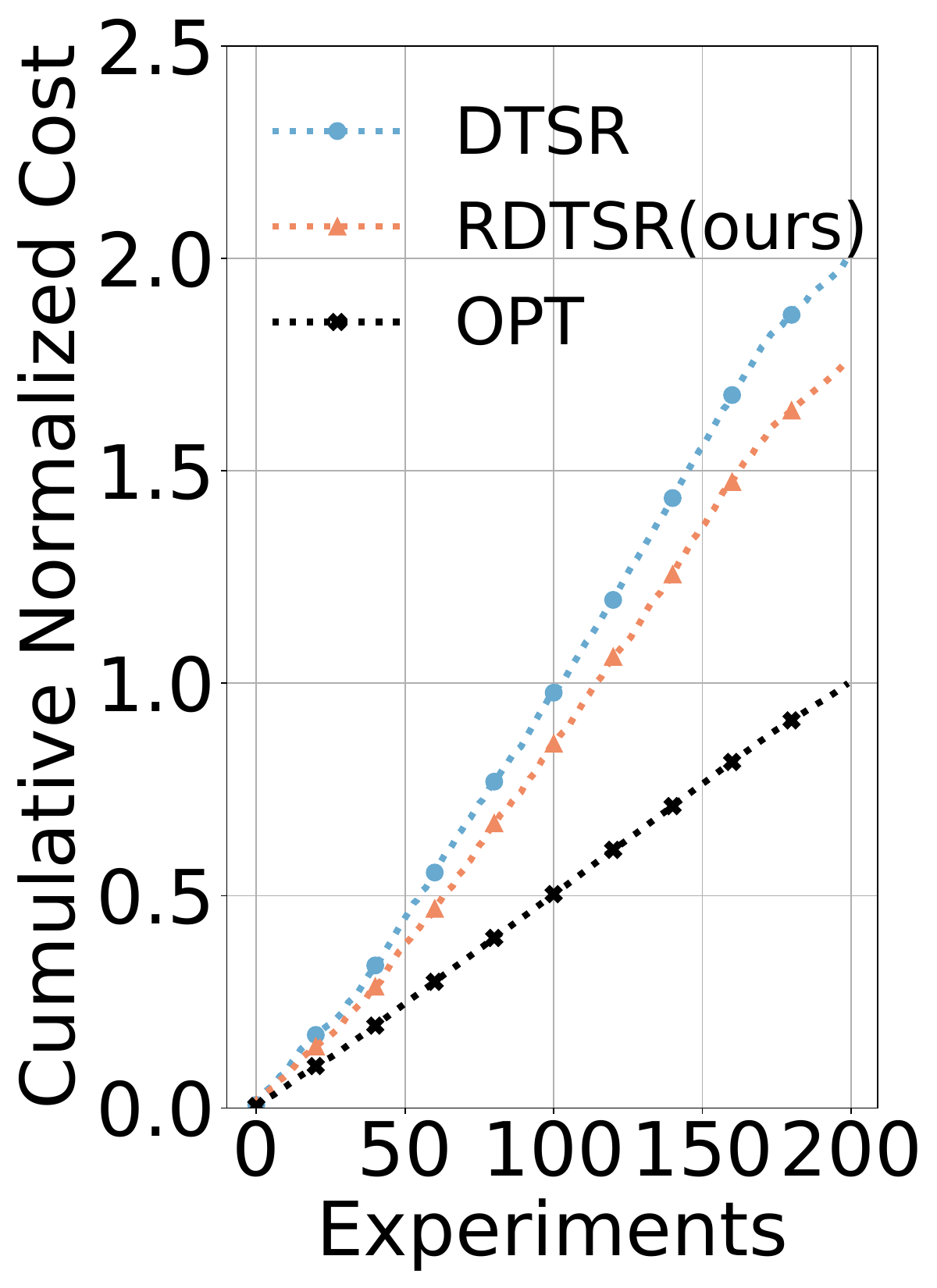}
 \caption{$C_c=30$}
\label{fig:azure30}
\end{subfigure}
\begin{subfigure}{0.16\linewidth}
\centering
\includegraphics[width=0.99\textwidth]{tech_figs/cnc_mix_multiple20.pdf}
 \caption{$C_c=20$}
\label{fig:app20}
\end{subfigure}
\begin{subfigure}{0.16\linewidth}
\centering
\includegraphics[width=0.95\textwidth]{tech_figs/cnc_mix_multiple25.pdf}
 \caption{$C_c=25$}
\label{fig:app25}
\end{subfigure}
\begin{subfigure}{0.16\linewidth}
\centering
\includegraphics[width=0.95\textwidth]{tech_figs/cnc_mix_multiple33.pdf}
 \caption{$C_c=30$}
\label{fig:app30}
\end{subfigure}
\caption{Average performance on the real-world datasets. Figs.~\ref{fig:azure20},~\ref{fig:azure25}, and~\ref{fig:azure30} are under the Azure dataset; Figs.~\ref{fig:app20},~\ref{fig:app25}, and~\ref{fig:app30} are under the AppUsage dataset.}
\label{fig:avg_result_real}
\end{figure*}

\section{Average Performance Evaluation} \label{sec:average}
In this section, we compare the average performance of RDTSR with DTSR. We adopt the metric cumulative normalized cost used in~\cite{wu2021competitive}, which is defined as the sum of the algorithm's cost divided by the sum of the offline optimal cost over all the sequences. We vary the value of the combo purchase price $C_c$ according to~\cite{wu2021competitive}  and for each value of $C_c$, we run our experiments for 200 independent demand sequences.
In Figs.~\ref{fig:avg_result},~\ref{fig:avg_result_multiple}, and~\ref{fig:avg_result_real}, we show the average performance comparison under all the three datasets, i.e., Synthetic (unit demands and multi-unit demands), Azure, and AppUsage datasets.  For the synthetic dataset, in the uniform setting, all 200 sequences are uniform sequences; in the mixed setting, 40\% of sequences are uniform and 60\% of sequences are long-tailed. From Fig.~\ref{fig:avg_result}, we can observe that our algorithm RDTSR has a similar performance compared to DTSR. This indicates that over the long term (in this experiment, 200 rounds), two algorithms have the closing total costs. Similar results can also be observed from two real-world datasets in Figs.~\ref{fig:avg_result_multiple} and~\ref{fig:avg_result_real}.

\end{document}